\documentclass[runningheads,envcountsame]{llncs}

\usepackage[T1]{fontenc}
\usepackage[utf8]{inputenc}
\usepackage{mathtools,amssymb,stmaryrd}
\usepackage{booktabs}
\usepackage{enumitem}
\usepackage{xcolor}
\usepackage{microtype}
\usepackage{hyperref}
\usepackage[nameinlink,noabbrev]{cleveref}

\setlist[itemize]{leftmargin=1.45em,itemsep=.08em,topsep=.18em}
\setlist[enumerate]{leftmargin=1.65em,itemsep=.10em,topsep=.20em}

\crefname{section}{Section}{Sections}
\Crefname{section}{Section}{Sections}
\crefname{subsection}{Section}{Sections}
\Crefname{subsection}{Section}{Sections}
\crefname{theorem}{Theorem}{Theorems}
\Crefname{theorem}{Theorem}{Theorems}
\crefname{lemma}{Lemma}{Lemmas}
\Crefname{lemma}{Lemma}{Lemmas}
\crefname{proposition}{Proposition}{Propositions}
\Crefname{proposition}{Proposition}{Propositions}
\crefname{corollary}{Corollary}{Corollaries}
\Crefname{corollary}{Corollary}{Corollaries}
\crefname{definition}{Definition}{Definitions}
\Crefname{definition}{Definition}{Definitions}
\crefname{remark}{Remark}{Remarks}
\Crefname{remark}{Remark}{Remarks}

\newcommand{\Set}{\mathbf{Set}}
\newcommand{\Setstar}{\mathbf{Set}_{*}}
\newcommand{\Vect}{\mathbf{Vect}}
\newcommand{\Vectop}{\mathbf{Vect}^{\mathrm{op}}}
\newcommand{\Cat}{\mathbf{Cat}}
\newcommand{\Syn}{\mathsf{Syn}}
\newcommand{\Synp}{\mathsf{Syn}_{+}}
\newcommand{\Tar}{\mathsf{Tar}}
\newcommand{\CTar}{\mathsf C_{\Tar}}
\newcommand{\LTar}{\mathsf L_{\Tar}}
\newcommand{\TotTar}{\mathsf T_{\Tar}^{\leftarrow}}
\newcommand{\Copow}{\mathsf{Copow}}
\newcommand{\Fam}{\mathsf{Fam}}
\newcommand{\Kar}{\mathsf{Kar}}
\newcommand{\Fix}{\operatorname{Fix}}
\newcommand{\im}{\operatorname{im}}

\newcommand{\Ddep}{\overleftarrow{\mathcal D}_{\mathrm{dep}}}
\newcommand{\DKar}{\overleftarrow{\mathcal D}_{\mathrm{Kar}}}
\newcommand{\Dst}{\overleftarrow{\mathcal D}_{\mathrm{st}}}
\newcommand{\TotTarFwd}{\mathsf T_{\Tar}^{\rightarrow}}
\newcommand{\DFKar}{\overrightarrow{\mathcal D}_{\mathrm{Kar}}}
\newcommand{\DFst}{\overrightarrow{\mathcal D}_{\mathrm{st}}}
\newcommand{\Rlz}{\mathfrak R}
\newcommand{\Prs}{\mathfrak P}
\newcommand{\semiarrow}{\rightsquigarrow}
\newcommand{\onehot}{\mathsf{one}}

\title{Simply Typed Reverse-Mode Automatic Differentiation with Variants: Denotational Correctness via Idempotent Completion}
\titlerunning{Simply Typed Reverse-Mode AD with Variants}
\author{Fernando Lucatelli Nunes\inst{1} \and Diogo Simm\inst{2} \and Matthijs V\'ak\'ar\inst{2}}
\authorrunning{F. Lucatelli Nunes, D. Simm, and M. V\'ak\'ar}
\institute{CMUC, Department of Mathematics, University of Coimbra, 3000-143 Coimbra, Portugal\\
\email{fln@uc.pt}
\and
Department of Information and Computing Sciences, Utrecht University, The Netherlands\\
\email{\{d.l.simmsallesvianna,m.i.l.vakar\}@uu.nl}}
\begin{document}
\emergencystretch=2em
\maketitle

\begin{abstract}
Reverse-mode automatic differentiation can be derived denotationally as a
structure-preserving interpretation of program syntax. In the usual simply
typed model, each source type has one cotangent type. Variants break this
representation because the valid cotangent space depends on the branch selected
at run time; established correctness results therefore use primal-indexed
families of cotangent spaces, whose direct internal language is dependent.

We show that this dependence can instead be represented in an ordinary
nondependent target. The cotangent fibres of each source type are placed inside
a common ambient type, and a primal-indexed idempotent selects the valid fibre.
For a category $\mathcal C$ and a regular infinite cardinal $\kappa$, we prove
that the constant-family inclusion extends to an equivalence
$$
\Kar\bigl(\Copow_\kappa(\mathcal C)\bigr)
\simeq
\Fam_\kappa(\mathcal C)
$$
precisely when $\mathcal C$ is Cauchy complete and every $\kappa$-small family
admits a common retract host. The same host condition yields explicit
coproducts after Karoubi completion.

We use this result to construct a bicartesian closed semantics for reverse-mode
AD with variants whose ambient types, projectors, and backpropagators are
ordinary nondependent target terms. Splitting the generated idempotents recovers
the dependent semantics, and the two interpretations are naturally isomorphic.
The ambient backpropagator is consequently the unique map that agrees with the
transposed derivative on the valid cotangent fibres and respects their
projectors.
\end{abstract}

\section{Introduction}
\label{sec:intro}

Reverse-mode automatic differentiation (AD) transforms a program into one that
propagates cotangents backward through its computation. Its efficiency for maps
with many inputs and few outputs makes it the standard basis of gradient
computation~\cite{GriewankWalther2008,BaydinEtAl2018,Margossian2019}. A
language-level account must make this propagation compositional in substitution,
binding, and the source type formers while proving that the transformed program
computes the transposed derivative.

Combinatory Homomorphic Automatic Differentiation (CHAD) obtains the
transformation from semantics~\cite{Vakar2021,VakarSmeding2022,LucatelliNunesVakar2023}.
Types and programs form a syntactic category, substitution is composition, and
the universal property of the syntax extends chosen primal maps and
codifferentials to a structure-preserving interpretation. Composition gives the
reverse chain rule; products and exponentials generate the corresponding
program clauses.

A concrete simply typed model for products and functions is
$\Copow(\Vectop)$. Its objects $(A,V)$ consist of a set of primal values and one
cotangent space, and an arrow $(f,f^*):(A,V)\to(B,W)$ has $f:A\to B$ and
$f_a^*:W\to V$, with $(gf)^*_a=f_a^*g^*_{f(a)}$. The category is cartesian
closed. Variants expose its limitation: at a value of $A+B$, the valid
cotangent is the left or right fibre according to the active branch, and
$\Copow(\Vectop)$ does not generally have coproducts. Expressive CHAD therefore
uses $\Fam(\Vectop)$, whose fibres vary with the primal value
~\cite{LucatelliNunesVakar2023}; Paszke and Plotkin use the analogous dependent
linear representation for sum types~\cite{PaszkePlotkin2023}. Executable CHAD uses common carriers
~\cite[Section~15.2 and Appendix~C]{LucatelliNunesVakar2023}, while Efficient
CHAD develops sparse representations~\cite{SmedingVakar2024}. This leaves open
why the simply typed representation presents the dependent semantics and what
determines its behaviour outside the valid fibres.

The key observation is that cotangent spaces $V$ and $W$ can be represented in
$V\oplus W$ by the projectors
$$
p_{\mathsf{inl}(a)}(v,w)=(v,0),
\qquad
p_{\mathsf{inr}(b)}(v,w)=(0,w).
$$
The dependency is retained as a primal-indexed family of idempotents. Since the
Karoubi envelope is the universal category in which idempotents split
~\cite{BorceuxDejean1986,Kelly1982,MacLane1998}, this suggests
$$
\Fam(\Vectop)
\simeq
\Kar\bigl(\Copow(\Vectop)\bigr)
\semiarrow
\Copow(\Vectop).
$$
The equivalence turns projector images into explicit fibres; the semifunctorial
passage retains the ordinary carriers and maps but no longer treats their
projectors as identities.

We first prove that common retract hosts generate $\kappa$-small coproducts in
$\Kar(\Copow_\kappa(\mathcal C))$ and characterize when the constant-family
inclusion extends to
$\Kar(\Copow_\kappa(\mathcal C))\simeq\Fam_\kappa(\mathcal C)$: precisely when
$\mathcal C$ is Cauchy complete and every $\kappa$-small family has a common
retract host. For $\Vectop$, direct sums provide such hosts. We then reproduce
the completion inside the ordinary CHAD target. For the freely generated
bicartesian closed syntax $\Synp$ and ordinary reverse target
$\TotTar=\Sigma_{\CTar}\LTar^{\mathrm{op}}$, retaining the generated
idempotents gives
$$
\DKar:\Synp\longrightarrow\Kar(\TotTar).
$$
Its representatives are ordinary nondependent types and terms. At data types
the idempotent is $(1,p_\tau)$; at function types cartesian closure projects the
entire function--backpropagator package by $h\mapsto e_\sigma h e_\tau$.
Forgetting the distinguished identities gives a composition-preserving
semifunctor $\Dst:\Synp\semiarrow\TotTar$.

Realization by fixed-point splittings is naturally isomorphic to dependent CHAD.
For a term $t:\tau\to\sigma$ between the data types of our language, the unique
projector-compatible ambient backpropagator is
$$
\widehat R_t(x)=\iota_{\tau,x}Df(x)^{\mathsf T}\rho_{\sigma,f(x)}.
$$

\paragraph{Contributions.}
\begin{enumerate}
\item We construct coproducts of completed constant families from common
retract hosts and prove the criterion
$\Kar(\Copow_\kappa(\mathcal C))\simeq\Fam_\kappa(\mathcal C)$.
\item We derive reverse AD with variants as a bicartesian closed interpretation
into the completed ordinary target, including explicit variant structure and
the higher-order idempotent forced by cartesian closure.
\item We prove equivalence with dependent CHAD, identify ordinary erasure as a
semifunctor, and derive the unique ambient correctness formula. The forward
dual is given in the appendix.
\end{enumerate}

\Cref{sec:basic} isolates the obstruction, \Cref{sec:completion} proves the
completion theorem, \Cref{sec:transformation} constructs the simply typed
interpretation, and \Cref{sec:correctness} proves correctness. The appendices
contain all expanded constructions and proofs.

\section{Semantics-driven reverse AD and the variant obstruction}
\label{sec:basic}

We begin by recalling the universal formulation of CHAD and the two fibrewise-dual target constructions. This separates the general semantics-driven principle from the specific obstruction introduced by variants: reverse differentiation itself remains compositional, but the constant-cotangent target lacks the coproducts needed to interpret branch-dependent fibres.

\subsection{Universal source semantics and fibrewise-dual targets}

Let $\Syn$ be the cartesian-closed syntactic category freely generated by ground types $\mathsf{real}^n$ and primitive operations, modulo the $\beta\eta$-equations and any declared equations among the primitives. Its universal property is the standard categorical semantics of the simply typed lambda calculus~\cite{Curien1986,LambekScott1986}: an interpretation of the generators satisfying those equations extends uniquely to a cartesian-closed functor.

Throughout, a \emph{bicartesian closed functor} means a functor equipped
with coherent comparison isomorphisms for the chosen initial object, terminal
object, binary products, binary coproducts, and exponentials. When the source
is the freely generated syntax and choices have been fixed, we suppress these
comparisons from the notation. No strictness assertion is intended.

The target language is organised by a cartesian category $\CTar$ of primal types and terms and an indexed category
$\LTar:\CTar^{\mathrm{op}}\longrightarrow\Cat$
of linear types and linear terms. Forward and reverse CHAD use the two fibrewise-dual Grothendieck constructions
$$
\TotTarFwd=\Sigma_{\CTar}\LTar,
\qquad
\TotTar=\Sigma_{\CTar}\LTar^{\mathrm{op}}.
$$
An arrow $(f,f^*):(A,V)\to(B,W)$ of $\TotTar$ consists of a primal term $f:A\to B$ and a term $f^*:A;W\to V$ linear in $W$; composition is substitution in the primal component and reverse accumulation in the linear component. We import the target hypotheses from basic CHAD: $(\CTar,\LTar)$ is a categorical model of its target language in the sense of~\cite[Definition~5.5]{VakarSmeding2022}. In particular, the forward and reverse total categories are cartesian closed by~\cite[Theorems~6.1 and~6.2]{VakarSmeding2022}. Broader closure criteria are given in~\cite{LucatelliNunesVakar2025Monoidal}.

Assigning each primitive operation its primal map and derivative therefore induces a forward functor $\overrightarrow{\mathcal D}:\Syn\to\TotTarFwd$, while assigning its primal map and codifferential induces the reverse functor
$$
\overleftarrow{\mathcal D}:\Syn\longrightarrow\TotTar.
$$
The two are fibrewise dual. We develop the reverse target below; every completion argument has the forward analogue obtained by removing the fibrewise opposite.

In the source language considered here, a \emph{data type} is generated from $\mathsf{real}^n$, $1$, and $0$ by finite products and variants. This is the tuple-and-variant data-type fragment of expressive CHAD\@. The correctness theorem in \Cref{sec:correctness} is stated for source terms whose domain and codomain are data types of this fragment.

This universal account works without variants because the constant-cotangent target has the cartesian closed structure required by products and functions. Coproducts are different: their tangent or cotangent fibre depends on the branch selected by the primal computation. We now make this obstruction precise in the reverse concrete model.

\subsection{The coproduct obstruction for constant cotangents}

The concrete constant-cotangent model is $\Copow(\Vectop)$. Its terminal object and binary products are
$$
1=(1,0),
\qquad
(A,V)\times(B,W)=(A\times B,V\oplus W).
$$

\begin{proposition}[Cartesian closure of the constant-cotangent model]
\label{prop:ccc}
The category $\Copow(\Vectop)$ is cartesian closed. An exponential may be chosen as
$$
(A,V)\Rightarrow(B,W)
=
\left(
A\to\bigl(B\times\Vect(W,V)\bigr),
\bigoplus_{a\in A}W
\right).
$$
\end{proposition}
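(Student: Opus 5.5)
Our approach is to verify the universal property of the exponential directly, by exhibiting a natural bijection
$$
\Copow(\Vectop)\bigl((C,U)\times(A,V),(B,W)\bigr)\;\cong\;\Copow(\Vectop)\bigl((C,U),(A,V)\Rightarrow(B,W)\bigr).
$$
The terminal object and binary products were recorded above; we take them as given. A quick check confirms the product: maps into $(A,V)\times(B,W)$ carry cotangent components $V\oplus W\to U$. Since $\oplus$ is a biproduct in $\Vect$, such a component is exactly a pair of maps $V\to U$ and $W\to U$.

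\textbf{Unpacking the left side.} An arrow $(C\times A,U\oplus V)\to(B,W)$ consists of two pieces of data. The primal part is a function $g:C\times A\to B$. The cotangent part is a family $g^*_{(c,a)}:W\to U\oplus V$, which splits uniquely as $(g^U_{(c,a)},g^V_{(c,a)})$ with $g^U_{(c,a)}:W\to U$ and $g^V_{(c,a)}:W\to V$.

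\textbf{Unpacking the right side.} An arrow $(C,U)\to(A\to B\times\Vect(W,V),\bigoplus_{a\in A}W)$ also consists of two pieces.
\begin{itemize}
\item The primal part is a function $h:C\to(A\to B\times\Vect(W,V))$. By currying, $h$ is the same as a pair of functions $C\times A\to B$ and $C\times A\to\Vect(W,V)$.
\item The cotangent part is a family of linear maps $h^*_c:\bigoplus_{a\in A}W\to U$. Since $\bigoplus$ is the coproduct in $\Vect$, each $h^*_c$ is the same as an $A$-indexed family $W\to U$.
\end{itemize}

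\textbf{The bijection.} The two descriptions match componentwise:
\begin{itemize}
\item $g$ corresponds to the first component of $h$;
\item $g^V$ corresponds to the second component of $h$;
\item $g^U$ corresponds to $h^*$.
\end{itemize}
Matching $g^U$ with $h^*$ is the only point where the direct sum indexed by $A$, rather than a single copy of $W$, is essential. The cotangent accumulated into $U$ may depend on $a$, but the cotangent component of a map out of $(C,U)$ is indexed only by $c$. The dependence on $a$ therefore has to be absorbed by the coproduct $\bigoplus_{a\in A}W$.

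\textbf{Evaluation map.} It is cleanest to present the bijection through an explicit evaluation map
$$
\mathrm{ev}:\bigl((A,V)\Rightarrow(B,W)\bigr)\times(A,V)\to(B,W).
$$
Its primal part is $(\varphi,a)\mapsto\pi_1\varphi(a)$. Its cotangent part, at $(\varphi,a)$, is
$$
w\mapsto\bigl(\iota_a w,\;\pi_2\varphi(a)(w)\bigr)\in\Bigl(\bigoplus_{a'\in A}W\Bigr)\oplus V.
$$
We would then check that every arrow $(g,g^*)$ factors uniquely as $\mathrm{ev}\circ(h\times\mathrm{id})$.

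\textbf{The factorisation check.} In this check, we must use the reversed composition rule $(gf)^*_x=f^*_x\,g^*_{f(x)}$. The cotangent of $h\times\mathrm{id}$ at $(c,a)$ is $h^*_c\oplus\mathrm{id}_V$. Composing gives
$$
w\mapsto\bigl(h^*_c(\iota_a w),\;\pi_2 h(c)(a)(w)\bigr).
$$
This equals $(g^U_{(c,a)}w,\,g^V_{(c,a)}w)$ exactly when two conditions hold:
\begin{itemize}
\item $h^*_c\circ\iota_a=g^U_{(c,a)}$, which determines $h^*_c$ uniquely by the universal property of the coproduct;
\item $\pi_2h(c)(a)=g^V_{(c,a)}$, which determines the primal component uniquely.
\end{itemize}

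\textbf{Main obstacle.} We expect the main obstacle to be getting this bookkeeping right, specifically the direction of the linear maps and the placement of $\iota_a$. Existence and uniqueness then follow from the universal properties of currying in $\Set$ and of the coproduct $\bigoplus$ in $\Vect$. For $A$ infinite, we would note explicitly that we use $\bigoplus$ (the coproduct) rather than $\prod$, because the hom-set bijection needs maps out of it.
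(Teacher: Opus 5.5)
Your proposal is correct and follows essentially the paper's own route: curry the primal map, split the backward map $W\to U\oplus V$ through the biproduct, and use the coproduct property of $\bigoplus_{a\in A}W$, with the same evaluation map $w\mapsto(\iota_a w,\pi_2\varphi(a)(w))$ recorded in \Cref{app:concrete-bcc}. One small bookkeeping point is that your factorisation check should also list the primal equation $\pi_1h(c)(a)=g(c,a)$ alongside $\pi_2h(c)(a)=g^V_{(c,a)}$, since together they determine the primal component of $h$.
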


\begin{proof}
This is the constant-cotangent instance of the cartesian-closed reverse target of basic CHAD~\cite{Vakar2021,VakarSmeding2022}. The displayed exponential follows by currying the primal map, separating the two biproduct components of the backward map, and applying the universal property of $\bigoplus_{a\in A}W$. The corresponding explicit evaluation and currying maps are recorded in \Cref{app:concrete-bcc}.
\end{proof}

The obstruction created by variants is now concrete.

\begin{proposition}[Coproduct obstruction]
	\label{prop:obstruction}
	The category $\Copow(\Vectop)$ does not have all binary coproducts.
\end{proposition}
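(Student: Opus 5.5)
The plan is to test a specific pair against the universal property and derive a contradiction from a dimension count. I take $X=(1,\mathbb R)$ and $Y=(1,0)$, where $1$ is a one-point set. Suppose, for contradiction, that they have a coproduct $(C,U)$ in $\Copow(\Vectop)$. Then for every object $(B,W)$ the injections induce a bijection
$$
\Copow(\Vectop)\bigl((C,U),(B,W)\bigr)\;\cong\;\Copow(\Vectop)\bigl((1,\mathbb R),(B,W)\bigr)\times\Copow(\Vectop)\bigl((1,0),(B,W)\bigr),
$$
and this bijection is natural in $(B,W)$. I will read off both hom-sets from the description of arrows: a map $(A,V)\to(B,W)$ is a function $f:A\to B$ together with an $A$-indexed family of linear maps $W\to V$. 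Hence the left side is $\bigl(B\times\Vect(W,U)\bigr)^C$, and the right side is $\bigl(B\times\Vect(W,\mathbb R)\bigr)\times B$.

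The first step pins down the primal carrier. I specialise to $(B,W)=(2,0)$, where $2$ is a two-element set. Every linear hom-set involving $0$ is a singleton, so the bijection becomes $2^C\cong 2\times 2$. This forces $|C|=2$.

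The second step pins down the cotangent space. I specialise to $B=1$ and let $W$ range over all vector spaces. The bijection becomes $\Vect(W,U)^2\cong\Vect(W,\mathbb R)$, that is, $\Vect(W,U\oplus U)\cong\Vect(W,\mathbb R)$.

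The step that needs care is checking that this last isomorphism is natural in $W$. Arrows $(1,W)\to(1,W')$ are exactly linear maps $W'\to W$ in $\Vect$. So naturality of the coproduct bijection in the object $(1,W)$ of $\Copow(\Vectop)$ is precisely naturality of $\Vect(-,U\oplus U)\cong\Vect(-,\mathbb R)$ as functors on $\Vect^{\mathrm{op}}$. Both representing objects carry the same index set $C$ of size $2$, so I must also confirm that the identification of $\Vect(W,U)^C$ with $\Vect(W,U\oplus U)$ is compatible with precomposition; this holds because precomposition acts componentwise.

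Once naturality is verified, the Yoneda lemma gives $U\oplus U\cong\mathbb R$ in $\Vect$. This is impossible: the left side has even (or infinite) dimension, while the right side has dimension $1$. Hence $(1,\mathbb R)$ and $(1,0)$ have no coproduct, and $\Copow(\Vectop)$ does not have all binary coproducts.
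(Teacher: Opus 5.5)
Your proposal is correct and follows essentially the same argument as the paper: test the putative coproduct of $(1,0)$ and $(1,\mathbb R)$ against objects $(B,0)$ to force the primal carrier to have two elements, then against objects $(1,W)$ to get a natural isomorphism $\Vect(-,U\oplus U)\cong\Vect(-,\mathbb R)$, conclude $U\oplus U\cong\mathbb R$ by Yoneda, and contradict the dimension count. The only minor differences are that you fix $|C|=2$ by counting at the single test object $(2,0)$ instead of applying Yoneda over all sets $B$, and that you explicitly verify the naturality in $W$ which the paper leaves implicit.
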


\begin{proof}
	Suppose, for contradiction, that the coproduct $(K,U) = (1,0) \amalg (1,\mathbb{R})$ exists in $\Copow(\Vectop)$.
	
	Evaluating the covariant representable functor $\Copow(\Vectop)((K,U), -)$ on objects of the form $(B,0)$ for an arbitrary set $B$ yields the natural isomorphisms:
	$$
	\Set(K, B) \cong \Copow(\Vectop)((K,U),(B,0)) \cong B \times B \cong \Set(2, B).
	$$
	By the Yoneda lemma, this naturally forces $K \cong 2$.
	
	Next, we evaluate the same covariant representable functor against objects of the form $(1,T)$ for an arbitrary $T \in \Vect$. Since $K \cong 2$, the universal property of the coproduct provides the following natural bijections:
	$$
	\begin{aligned}
		\Vect(T,U\oplus U)
		&\cong \Copow(\Vectop)((K,U),(1,T)) \\
		&\cong
		\begin{multlined}[t]
		\Copow(\Vectop)((1,0),(1,T))\\
		{}\times\Copow(\Vectop)((1,\mathbb R),(1,T))
		\end{multlined}\\
		&\cong \Vect(T,0)\times\Vect(T,\mathbb R)
		\cong \Vect(T,\mathbb R).
	\end{aligned}
	$$
By the Yoneda lemma, it follows that $U \oplus U \cong \mathbb{R}$. This is impossible: since $\mathbb{R}$ is one-dimensional, such an isomorphism would imply that $U$ is finite-dimensional and that $2\dim(U)=1$.
\end{proof}

In $\Fam(\Vectop)$ the corresponding coproduct is branchwise:
$$
(A,V_{(-)})+(B,W_{(-)})=(A+B,Z_{(-)}),
$$
where $Z_{\mathsf{inl}(a)}=V_a$ and $Z_{\mathsf{inr}(b)}=W_b$. This is the cotangent semantics of variants in expressive CHAD~\cite{LucatelliNunesVakar2023}.

The obstruction is therefore not a failure of differentiation, but a mismatch between two free completions: basic CHAD admits only constant families, whereas variants require arbitrary families. The next section shows how common ambient hosts create the missing coproducts and when the resulting completion agrees with the family semantics.

\section{Coproducts and Cauchy completion of constant families}
\label{sec:completion}

The preceding obstruction compares two semantic universes. Constant families suffice for basic nondependent CHAD, while variants require fibres that vary with the selected primal branch. We now show that a common ambient object already supplies the missing coproducts after idempotents are split. Cauchy completeness then determines when this completed constant-family category agrees with the full family category.

Fix a regular infinite cardinal $\kappa$. A set is $\kappa$-small when its cardinality is less than $\kappa$. For the unadorned constructions used in the higher-order semantics, we fix Grothendieck universes $\mathcal U\in\mathcal V$: sets, families, and vector spaces are $\mathcal U$-small, and the resulting categories are regarded as $\mathcal V$-categories. Appendix~\ref{app:size} records the size conventions in full.

\subsection{Free coproduct, copower, and Cauchy completions}

For a category $\mathcal C$, the category $\Fam_\kappa(\mathcal C)$ has $\kappa$-small families $(X_i)_{i\in I}$ as objects. A morphism
$$
(u,\alpha):(X_i)_{i\in I}\longrightarrow(Y_j)_{j\in J}
$$
consists of a function $u:I\to J$ and arrows $\alpha_i:X_i\to Y_{u(i)}$. The singleton-family embedding
$$
\eta_{\mathcal C}:\mathcal C\longrightarrow\Fam_\kappa(\mathcal C)
$$
exhibits $\Fam_\kappa(\mathcal C)$ as the free completion of $\mathcal C$ under $\kappa$-small coproducts~\cite{Kelly1982,AdamekRosicky2020}.

The full subcategory on constant families is $\Copow_\kappa(\mathcal C)$. Its objects are pairs $(I,X)$, and it is the free completion of $\mathcal C$ under copowers by $\kappa$-small sets. We write
$$
\iota_{\mathcal C}:\mathcal C\longrightarrow\Copow_\kappa(\mathcal C)
$$
for the singleton embedding and
$$
J_{\mathcal C}:\Copow_\kappa(\mathcal C)\hookrightarrow\Fam_\kappa(\mathcal C)
$$
for the constant-family inclusion.

The Karoubi envelope $\Kar(\mathcal A)$ has objects $(A,e)$, where $e:A\to A$ is idempotent, and arrows $f:(A,e)\to(B,d)$ satisfying $f=dfe$. Its identity on $(A,e)$ is $e$. The canonical embedding $Q_{\mathcal A}:\mathcal A\to\Kar(\mathcal A)$ is the free Cauchy completion: for every Cauchy-complete category $\mathcal D$, restriction induces an equivalence
$$
[\Kar(\mathcal A),\mathcal D]
\simeq
[\mathcal A,\mathcal D].
$$
See~\cite{BorceuxDejean1986,Kelly1982,MacLane1998}; the universal properties and our variance conventions are recalled in the appendices.

The three constructions now have distinct roles. The copower completion provides the constant ambient carriers available to nondependent CHAD\@. The Karoubi envelope separates the subobjects selected by idempotents on those carriers. The family completion presents the resulting fibres directly. We first show that the common-host hypothesis already equips the Karoubi-completed constant-family category with the required coproducts.

\subsection{Coproducts in the Karoubi-completed constant-family category}

\begin{definition}
	\label{def:host}
	A family $(X_i)_{i\in I}$ in $\mathcal C$ \emph{admits a common retract host}
	if there are an object $P$ and, for each $i\in I$, arrows
	$$
	X_i\xrightarrow{s_i}P\xrightarrow{r_i}X_i
	$$
	such that
	$$
	r_is_i=1_{X_i}.
	$$
\end{definition}

A common retract host allows several ambient objects to be represented inside
a single carrier. We first isolate this change of carrier.

\begin{lemma}[Common-carrier presentation]
	\label{lem:common-carrier}
	Let $(E_a)_{a\in A}$ be a family of objects of
	$\Kar(\Copow_\kappa(\mathcal C))$, where
	$$
	E_a=\bigl((I_a,X_a),e_a\bigr),
	\qquad
	e_a=(u_a,p^a).
	$$
	Suppose that $(X_a)_{a\in A}$ admits a common retract host
	$$
	X_a\xrightarrow{s_a}P\xrightarrow{r_a}X_a,
	\qquad
	r_as_a=1_{X_a}.
	$$
	For each $a\in A$, define
	$$
	\widetilde p^a_i=s_ap^a_ir_a
	$$
	and
	$$
	\widetilde E_a
	=
	\bigl((I_a,P),\widetilde e_a\bigr),
	\qquad
	\widetilde e_a=(u_a,\widetilde p^a).
	$$
	Then $\widetilde e_a$ is idempotent, and there is an isomorphism
	$$
	E_a\cong\widetilde E_a
	$$
	in $\Kar(\Copow_\kappa(\mathcal C))$.
\end{lemma}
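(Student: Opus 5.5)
The plan is to verify idempotence of $\widetilde e_a$ directly, and then to exhibit explicit mutually inverse Karoubi arrows built from $s_a$ and $r_a$.

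First fix the composition law in $\Copow_\kappa(\mathcal C)$. As a full subcategory of $\Fam_\kappa(\mathcal C)$, composing $(v,\beta):(J,Y)\to(K,Z)$ after $(u,\alpha):(I,X)\to(J,Y)$ gives $(vu,(\beta_{u(i)}\alpha_i)_i)$. Idempotence of $e_a=(u_a,p^a)$ therefore says two things:
\begin{itemize}
\item $u_au_a=u_a$;
\item $p^a_{u_a(i)}p^a_i=p^a_i$ for every $i\in I_a$.
\end{itemize}
Using $r_as_a=1_{X_a}$, we compute
$$
\widetilde p^a_{u_a(i)}\widetilde p^a_i
= s_ap^a_{u_a(i)}r_as_ap^a_ir_a
= s_ap^a_{u_a(i)}p^a_ir_a
= s_ap^a_ir_a
= \widetilde p^a_i .
$$
Together with $u_au_a=u_a$, this shows that $\widetilde e_a$ is idempotent.

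Next, define two arrows in $\Copow_\kappa(\mathcal C)$:
$$
\phi=(u_a,(s_ap^a_i)_i):(I_a,X_a)\to(I_a,P),
\qquad
\psi=(u_a,(p^a_ir_a)_i):(I_a,P)\to(I_a,X_a).
$$
Equivalently, $\phi=\widetilde e_a\circ(1,s_a)\circ e_a$ and $\psi=e_a\circ(1,r_a)\circ\widetilde e_a$, where $(1,s_a)$ and $(1,r_a)$ denote the arrows with identity index map and constant component $s_a$ or $r_a$. I prefer this second description, because it makes both remaining checks routine.

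To see that $\phi:E_a\to\widetilde E_a$ is a Karoubi arrow, we need $\widetilde e_a\phi e_a=\phi$. This follows from idempotence of $\widetilde e_a$ and $e_a$ in the sandwich presentation. The same argument shows that $\psi:\widetilde E_a\to E_a$ is a Karoubi arrow.

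It remains to check that $\phi$ and $\psi$ are mutually inverse, recalling that the Karoubi identities are $e_a$ and $\widetilde e_a$. For $\psi\phi$, the index map is $u_au_a=u_a$. Its $i$-th component is
$$
p^a_{u_a(i)}r_as_ap^a_i=p^a_{u_a(i)}p^a_i=p^a_i,
$$
so $\psi\phi=e_a$. For $\phi\psi$, the index map is again $u_a$, and its $i$-th component is
$$
s_ap^a_{u_a(i)}p^a_ir_a=s_ap^a_ir_a=\widetilde p^a_i,
$$
so $\phi\psi=\widetilde e_a$.

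The only step that needs care is the indexing in composites. The component at $i$ of a composite uses the second factor's component at $u_a(i)$, not at $i$, so idempotence must be used in the reindexed form $p^a_{u_a(i)}p^a_i=p^a_i$. Once this bookkeeping is fixed, every verification reduces to $r_as_a=1_{X_a}$ together with that identity.
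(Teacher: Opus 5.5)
Your proposal is correct and matches the paper's proof: the same idempotence computation, the same mutually inverse arrows $(u_a,(s_ap^a_i)_i)$ and $(u_a,(p^a_ir_a)_i)$, and the same two composite calculations. The only cosmetic difference is how you get the Karoubi-arrow conditions. You use the sandwich presentations $\widetilde e_a(1,s_a)e_a$ and $e_a(1,r_a)\widetilde e_a$; identifying these with your arrows is a one-line check using $r_as_a=1_{X_a}$ and $p^a_{u_a(i)}p^a_i=p^a_i$, which you leave implicit. The paper instead checks the componentwise identities $\widetilde p^a_{u_a(i)}s_ap^a_i=s_ap^a_i$ and $p^a_{u_a(i)}r_a\widetilde p^a_i=p^a_ir_a$ directly.
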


Once all objects have the same ambient carrier, their coproduct is obtained
by concatenating the index sets and assembling the idempotents blockwise.

\begin{proposition}[Coproducts from common retract hosts]
	\label{prop:kar-coprod}
	Suppose that every $\kappa$-small family in $\mathcal C$ admits a common
	retract host. Then
	$$
	\Kar\bigl(\Copow_\kappa(\mathcal C)\bigr)
	$$
	admits $\kappa$-small coproducts.
	
	More explicitly, let $(E_a)_{a\in A}$ be a $\kappa$-small family of objects,
	where
	$$
	E_a=\bigl((I_a,X_a),(u_a,p^a)\bigr),
	$$
	with
	$$
	u_a^2=u_a,
	\qquad
	p^a_{u_a(i)}p^a_i=p^a_i.
	$$
	Choose a common retract host
	$$
	X_a\xrightarrow{s_a}P\xrightarrow{r_a}X_a,
	\qquad
	r_as_a=1_{X_a},
	$$
	for the family $(X_a)_{a\in A}$. Put
	$$
	I=\coprod_{a\in A}I_a
	=
	\{(a,i)\mid a\in A,\ i\in I_a\},
	$$
	and define an endomorphism $c=(u,c_{(-)})$ of $(I,P)$ by
	$$
	u(a,i)=(a,u_a(i)),
	\qquad
	c_{(a,i)}=s_ap^a_ir_a.
	$$
	Then $c$ is idempotent, and a coproduct of $(E_a)_{a\in A}$ is
	$$
	\coprod_{a\in A}E_a
	=
	\bigl((I,P),c\bigr).
	$$
	If
	$$
	j_a:(I_a,X_a)\longrightarrow(I,P)
	$$
	has base map $i\mapsto(a,i)$ and constant component $s_a$, then the
	coproduct injection is
	$$
	\iota_a=c\,j_a\,(u_a,p^a).
	$$
	Equivalently, $\iota_a$ has base map
	$$
	i\longmapsto(a,u_a(i))
	$$
	and component $s_ap^a_i$ at $i$.
\end{proposition}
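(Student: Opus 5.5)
The plan is to verify the explicit description directly, by componentwise computation in $\Copow_\kappa(\mathcal C)$. There, a morphism $(v,\beta)\circ(u,\alpha)$ has base map $vu$ and component $\beta_{u(i)}\alpha_i$ at $i$. With this composition law every claim reduces to a calculation using the three identities $u_a^2=u_a$, $p^a_{u_a(i)}p^a_i=p^a_i$ and $r_as_a=1_{X_a}$. Lemma~\ref{lem:common-carrier} motivates the shape of $c$, but the proof will not need it.

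\emph{Idempotence of $c$.} The base map satisfies $u^2(a,i)=(a,u_a^2(i))=u(a,i)$. The component of $c^2$ at $(a,i)$ is
$$
c_{u(a,i)}c_{(a,i)}=s_ap^a_{u_a(i)}r_as_ap^a_ir_a=s_ap^a_{u_a(i)}p^a_ir_a=s_ap^a_ir_a .
$$
\emph{Description of $\iota_a$.} The morphism $\iota_a=c\,j_a\,e_a$, where $e_a=(u_a,p^a)$, automatically satisfies $c\,\iota_a\,e_a=\iota_a$, so it is an arrow $E_a\to((I,P),c)$ of the Karoubi envelope. Expanding the composite gives base map $i\mapsto(a,u_a(i))$ and component $s_ap^a_{u_a(i)}r_as_ap^a_i=s_ap^a_i$, which is the stated explicit form.

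\emph{Universal property, existence.} Fix a cocone $f_a=(v_a,\varphi^a):E_a\to K=((J,Y),d)$, where each $f_a$ satisfies $f_a=d\,f_a\,e_a$. Define $f:(I,P)\to(J,Y)$ with base map $(a,i)\mapsto v_a(i)$ and component $\varphi^a_ir_a$. First I check that $f$ is a Karoubi arrow. The composite $fc$ has base map $(a,i)\mapsto v_a(u_a(i))$ and component $\varphi^a_{u_a(i)}p^a_ir_a$. Since $f_a=f_ae_a$, we have $v_a u_a=v_a$ and $\varphi^a_{u_a(i)}p^a_i=\varphi^a_i$, so $fc=f$. Since $d f_a=f_a$, the same argument gives $df=f$. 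Then $f\iota_a=f\,c\,j_a\,e_a=f\,j_a\,e_a$, which has base map $i\mapsto v_a(u_a(i))=v_a(i)$ and component $\varphi^a_{u_a(i)}r_as_ap^a_i=\varphi^a_i$. Hence $f\iota_a=f_a$.

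\emph{Universal property, uniqueness.} This is the step I expect to need the most care, because the base maps are not injective and act on all of $I$, not only on its $u$-fixed points. Suppose $g=(w,\gamma):((I,P),c)\to K$ satisfies $g\iota_a=f_a$ for every $a$. Reading off base maps and components gives
$$
w(a,u_a(i))=v_a(i),
\qquad
\gamma_{(a,u_a(i))}s_ap^a_i=\varphi^a_i .
$$
Because $g$ is a Karoubi arrow, $g=gc$. The composite $gc$ has base map $(a,i)\mapsto w(a,u_a(i))=v_a(i)$ and component $\gamma_{(a,u_a(i))}s_ap^a_ir_a=\varphi^a_ir_a$. So $g=gc=f$, and the factorization is unique.

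\emph{Empty family.} The case $A=\varnothing$ is covered by the same computation. The hypothesis supplies a host $P$ for the empty family, and $((\varnothing,P),1)$ is initial. Any object $P$ of $\mathcal C$ serves as such a host, so this case imposes no further condition.

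Since every $\kappa$-small family $(E_a)_{a\in A}$ has, by hypothesis, a common retract host for its carriers $(X_a)_{a\in A}$, the construction applies to every such family. This gives all $\kappa$-small coproducts in $\Kar(\Copow_\kappa(\mathcal C))$.
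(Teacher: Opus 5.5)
Your proof is correct and is essentially the paper's argument. It uses the same blockwise copairing (your components $\varphi^a_ir_a$ are exactly the paper's copairing components after transport along $E_a\cong\widetilde E_a$) and the same uniqueness argument via $g=gc$; the only difference is that you check everything directly on the $E_a$ with the injections $s_ap^a_i$, instead of first passing to the common-carrier objects $\widetilde E_a$ of \Cref{lem:common-carrier} and transporting back, which is slightly more economical. One small point you omit, and which the paper's appendix records, is that regularity of $\kappa$ is what makes $I=\coprod_{a}I_a$ a $\kappa$-small set, so that $(I,P)$ is actually an object of $\Copow_\kappa(\mathcal C)$.
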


\begin{proof}
By \Cref{lem:common-carrier}, each $E_a$ is isomorphic to
$\widetilde E_a=((I_a,P),\widetilde e_a)$, with
$\widetilde p^a_i=s_ap^a_ir_a$. The displayed endomorphism $c$ is the
blockwise concatenation of these idempotents, so $c^2=c$, and the displayed
maps $\iota_a$ are Karoubi arrows.

Given Karoubi arrows $f_a:E_a\to D$, transport them along
$E_a\cong\widetilde E_a$. On the block $I_a$ of
$I=\coprod_a I_a$, define the base map and components of $[(f_a)]$ to be
those of the transported $f_a$. The sandwich equation holds blockwise, so
$[(f_a)]$ is the unique Karoubi arrow with $[(f_a)]\iota_a=f_a$. If
$A=\varnothing$, the common-host condition supplies an object $P$, and the
unique endomorphism of $(\varnothing,P)$ yields an initial Karoubi object.
Thus the construction gives all $\kappa$-small coproducts.
\end{proof}

The common-host hypothesis is used only to replace the ambient objects
$X_a$ by a single carrier $P$. Once this has been done, the coproduct is formed
by concatenating the index sets and assembling the corresponding idempotents
blockwise. 

A component-by-component verification, including the empty coproduct, is given in \Cref{app:common-carrier-details,app:coproduct-details}.

A useful sufficient condition is the existence of zero morphisms together
with the relevant products or coproducts. Indeed, if $\mathcal C$ has zero
morphisms and $\kappa$-small coproducts, then the coproduct of any
$\kappa$-small family is a common retract host: the coproduct injection
serves as the section, while the retraction is the morphism that is the
identity on the selected summand and zero on every other summand. More
generally, we have the following result.

\begin{lemma}[Common hosts from products or coproducts]
\label{lem:common-hosts-products-coproducts}
Let $\mathcal C$ be a category with zero morphisms, and let
$(X_i)_{i\in I}$ be a family of objects.

\begin{enumerate}
\item If the product
$$
P=\prod_{j\in I}X_j
$$
exists, then $P$ is a common retract host for $(X_i)_{i\in I}$.

\item If the coproduct
$$
Q=\coprod_{j\in I}X_j
$$
exists, then $Q$ is a common retract host for $(X_i)_{i\in I}$.
\end{enumerate}

Consequently, if $\mathcal C$ has zero morphisms and either
$\kappa$-small products or $\kappa$-small coproducts, then every
$\kappa$-small family in $\mathcal C$ admits a common retract host. Under
either hypothesis,
$$
\Kar\bigl(\Copow_\kappa(\mathcal C)\bigr)
$$
admits $\kappa$-small coproducts.
\end{lemma}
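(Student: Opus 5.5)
The plan is to construct the section/retraction pairs explicitly from the universal property and the zero morphisms, and then to obtain the final clause directly from \Cref{prop:kar-coprod}. Throughout, write $0_{X,Y}:X\to Y$ for the zero morphism. The only facts about zero morphisms I will use are that they compose to zero on either side and that they are natural.

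\textbf{Products.} Let $P=\prod_{j\in I}X_j$ with projections $\pi_j$, and take $r_i=\pi_i$. For $s_i:X_i\to P$, use the universal property of the product to define the unique arrow with
$$
\pi_j s_i=
\begin{cases}
1_{X_i}, & j=i,\\
0_{X_i,X_j}, & j\neq i.
\end{cases}
$$
Then $r_is_i=\pi_is_i=1_{X_i}$, which is exactly the condition in \Cref{def:host}.

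\textbf{Coproducts.} Let $Q=\coprod_{j\in I}X_j$ with injections $\sigma_j$, and take $s_i=\sigma_i$. For $r_i:Q\to X_i$, use the copairing with components $r_i\sigma_j=1_{X_i}$ if $j=i$ and $0_{X_j,X_i}$ otherwise. Then $r_is_i=r_i\sigma_i=1_{X_i}$. This is precisely the dual of the product case, so it needs no separate argument.

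\textbf{Edge case.} The one point to check is $I=\varnothing$. Here the empty product is a terminal object and the empty coproduct is an initial object. In either case an object $P$ exists, and the host condition holds vacuously. This is the only place where the existence of $P$ matters, and it is needed for the empty coproduct in \Cref{prop:kar-coprod}. The main potential obstacle is purely definitional: zero morphisms must be available between every pair of objects $X_i,X_j$ in order to define the off-diagonal components. This is exactly the hypothesis, so no real difficulty arises.

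\textbf{Consequence.} Apply parts (1) or (2) to each $\kappa$-small family; the required $\kappa$-small products or coproducts exist by hypothesis. This shows that every such family admits a common retract host. \Cref{prop:kar-coprod} then gives $\kappa$-small coproducts in $\Kar\bigl(\Copow_\kappa(\mathcal C)\bigr)$.
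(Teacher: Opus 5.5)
Your proposal is correct and matches the paper's proof: the product host takes $\pi_i$ as retraction and the $1$/$0$-component arrow as section, the coproduct host dually uses the injection and the identity/zero copairing, and the final clause follows from \Cref{prop:kar-coprod}. Your explicit treatment of $I=\varnothing$ is a small detail the paper leaves implicit: there the terminal or initial object supplies the carrier needed for the empty coproduct.
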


\begin{proof}
Suppose first that $P=\prod_{j\in I}X_j$ exists, with projections
$\pi_j:P\to X_j$. For each $i\in I$, let
$\delta_i:X_i\to P$ be the unique morphism such that
$$
\pi_j\delta_i=
\begin{cases}
1_{X_i},&j=i,\\
0_{X_i,X_j},&j\neq i.
\end{cases}
$$
Then $\pi_i\delta_i=1_{X_i}$, so $P$ is a common retract host, with
section $\delta_i$ and retraction $\pi_i$.

Dually, suppose that $Q=\coprod_{j\in I}X_j$ exists, with injections
$\iota_j:X_j\to Q$. For each $i\in I$, let $\rho_i:Q\to X_i$ be
the unique morphism determined by
$$
\rho_i\iota_j=
\begin{cases}
1_{X_i},&j=i,\\
0_{X_j,X_i},&j\neq i.
\end{cases}
$$
Then $\rho_i\iota_i=1_{X_i}$, so $Q$ is a common retract host, with
section $\iota_i$ and retraction $\rho_i$. The final assertion follows
from \Cref{prop:kar-coprod}.
\end{proof}

\subsection{Cauchy density and the family-completion equivalence}

The preceding proposition supplies the colimit structure required by variants. To identify the resulting category with $\Fam_\kappa(\mathcal C)$, one further condition is needed: 

\begin{lemma}[Cauchy completeness of family categories]
\label{lem:fam-cauchy}
The category $\Fam_\kappa(\mathcal C)$ is Cauchy complete if and only if
$\mathcal C$ is Cauchy complete.
\end{lemma}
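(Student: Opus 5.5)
We prove the two directions separately, using that idempotents in $\Fam_\kappa(\mathcal C)$ have a very explicit form.

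\emph{($\Leftarrow$)} Suppose $\mathcal C$ is Cauchy complete, and let $(u,\alpha)$ be an idempotent on $(X_i)_{i\in I}$. Then $u:I\to I$ is an idempotent function, and for each $i$ we have $\alpha_{u(i)}\alpha_i=\alpha_i$.

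Put $K=u(I)=\Fix(u)$, which is again $\kappa$-small. For $k\in K$ we have $u(k)=k$, so $\alpha_k\alpha_k=\alpha_k$ and $\alpha_k$ is an idempotent of $\mathcal C$. Split it as
$$
X_k\xrightarrow{q_k}Y_k\xrightarrow{m_k}X_k,
\qquad
q_km_k=1_{Y_k},
\qquad
m_kq_k=\alpha_k .
$$
Define the retraction $r:(X_i)_{i\in I}\to(Y_k)_{k\in K}$ by base map $u$ and component $q_{u(i)}\alpha_i$ at $i$. Define the section $s:(Y_k)_{k\in K}\to(X_i)_{i\in I}$ by the inclusion $K\hookrightarrow I$ and component $m_k$.

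Two checks remain.
\begin{itemize}
\item $rs=1$: the base map is $u|_K=\mathrm{id}_K$, and the component is $q_k\alpha_km_k=q_km_kq_km_k=1$.
\item $sr=(u,\alpha)$: the base map is $u$, and the component at $i$ is $m_{u(i)}q_{u(i)}\alpha_i=\alpha_{u(i)}\alpha_i=\alpha_i$.
\end{itemize}
Hence every idempotent of $\Fam_\kappa(\mathcal C)$ splits.

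\emph{($\Rightarrow$)} Suppose $\Fam_\kappa(\mathcal C)$ is Cauchy complete, and let $e:X\to X$ be idempotent in $\mathcal C$. Then $\eta_{\mathcal C}(e)$, with base map on the singleton $1$, is idempotent in $\Fam_\kappa(\mathcal C)$ and so splits:
$$
\eta X\xrightarrow{r}(Y_j)_{j\in J}\xrightarrow{s}\eta X,
\qquad
rs=1,
\qquad
sr=\eta e .
$$
At the level of indices, $rs=1$ says the base map $J\to 1\to J$ is the identity. Since $\kappa$ is infinite, $J$ is $\kappa$-small but this forces $J$ to be a singleton: a map $J\to J$ factoring through $1$ can only be the identity if $|J|\le 1$, and $J$ is nonempty because there is a map $1\to J$. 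So the splitting lies in the image of $\eta_{\mathcal C}$.

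Since $\eta_{\mathcal C}$ is fully faithful, the components give a splitting of $e$ in $\mathcal C$.

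The main care point is this index-set argument in ($\Rightarrow$), especially excluding $J=\varnothing$. Everything else is componentwise bookkeeping with the reindexing $u$.
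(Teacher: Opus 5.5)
Your proof is correct and follows the paper's argument essentially verbatim. For $(\Leftarrow)$ you split through $\Fix(u)$ with components $q_{u(i)}\alpha_i$ and $m_k$; for $(\Rightarrow)$ you pass to the singleton family and use the base-map equation to force the splitting family to be a singleton, where your explicit exclusion of $J=\varnothing$ via the base map $1\to J$ makes precise a step the paper leaves implicit.
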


\begin{proof}
Assume that $\mathcal C$ is Cauchy complete and let $(u,\alpha)$ be an
idempotent on $(X_i)_{i\in I}$. Its base idempotent splits through
$I_0=\Fix(u)$. For $j\in I_0$, split the idempotent $\alpha_j$ as
$X_j\xrightarrow{q_j}\overline X_j\xrightarrow{m_j}X_j$. The family maps
with base functions $i\mapsto u(i)$ and $I_0\hookrightarrow I$, and components
$q_{u(i)}\alpha_i$ and $m_j$, split $(u,\alpha)$.

Conversely, split an idempotent $e:X\to X$ after regarding it as an
idempotent on the singleton family $(X)$. The base equations force the
splitting family to be singleton, and its components split $e$ in
$\mathcal C$.
\end{proof}

\begin{lemma}[Retracts of constant families]
\label{lem:host}
A family is a retract in $\Fam_\kappa(\mathcal C)$ of a constant family if and only if it has a common retract host in $\mathcal C $.
\end{lemma}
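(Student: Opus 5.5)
I would prove the two directions separately, working directly with the explicit description of morphisms in $\Fam_\kappa(\mathcal C)$ as pairs $(u,\alpha)$ of a base function and componentwise arrows.

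\emph{Sufficiency.} Suppose $(X_i)_{i\in I}$ has a common retract host $X_i\xrightarrow{s_i}P\xrightarrow{r_i}X_i$. I would take the constant family $(P)_{i\in I}$, that is, $J_{\mathcal C}(I,P)$. The section is $(1_I,s):(X_i)_i\to(P)_i$ and the retraction is $(1_I,r):(P)_i\to(X_i)_i$. Their composite is $(1_I,(r_is_i)_i)=(1_I,(1_{X_i})_i)$, the identity. The host family is automatically $\kappa$-small because it is indexed by $I$, so nothing about size needs checking here.

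\emph{Necessity.} Suppose $(X_i)_{i\in I}$ is a retract of a constant family $(Y)_{k\in K}$. Then there are maps
$$
(v,\sigma):(X_i)_{i\in I}\to(Y)_{k\in K},
\qquad
(w,\rho):(Y)_{k\in K}\to(X_i)_{i\in I},
$$
with composite the identity. The base part gives $w v=1_I$. The fibre part gives, for each $i$, arrows $\sigma_i:X_i\to Y$ and $\rho_{v(i)}:Y\to X_{w(v(i))}=X_i$ with $\rho_{v(i)}\sigma_i=1_{X_i}$. Setting $P=Y$, $s_i=\sigma_i$ and $r_i=\rho_{v(i)}$ gives the common retract host.

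The only delicate point is in necessity: one must track the dependent typing of the component $\rho_{v(i)}$. Its codomain is $X_{w(v(i))}$, which equals $X_i$ on the nose only because the base equation $wv=1_I$ holds strictly. Once that is observed, the argument is immediate. There is no genuine obstacle; the whole proof is an unwinding of the definitions of morphism and composition in $\Fam_\kappa(\mathcal C)$.
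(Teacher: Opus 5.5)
Your proof is correct and follows essentially the same route as the paper. The host yields a retraction with base map $1_I$ onto $(P)_{i\in I}$. Conversely, the base equation $wv=1_I$ and the componentwise identity $\rho_{v(i)}\sigma_i=1_{X_i}$ exhibit the constant object as a common retract host.
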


\begin{proof}
If $(X_i)_{i\in I}$ has host $P$, the maps with base $1_I$ and components $s_i$ and $r_i$ exhibit it as a retract of the constant family $(P)_{i\in I}$. Conversely, suppose that $(X_i)_{i\in I}$ is a retract of $(P)_{j\in J}$, with section $(a,s)$ and retraction $(b,r)$. The equality $(b,r)(a,s)=1$ gives $ba=1_I$ and
$$
r_{a(i)}s_i=1_{X_i}.
$$
Thus $P$ is a common retract host.
\end{proof}

A fully faithful functor $J:\mathcal A\to\mathcal B$ is \emph{Cauchy dense} when every object of $\mathcal B$ is a retract of an object in the image of $J$. By \Cref{lem:host}:
\begin{lemma}
\label{lem:cauchy-dense}
 The inclusion $J_{\mathcal C}:\Copow_\kappa(\mathcal C)\hookrightarrow\Fam_\kappa(\mathcal C)$ is Cauchy dense if and only if every $\kappa$-small family has a common retract host. 
\end{lemma}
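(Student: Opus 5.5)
This follows by unwinding the definition of Cauchy density and applying \Cref{lem:host} one family at a time. First I would check that $J_{\mathcal C}$ is fully faithful, since the definition of Cauchy density presupposes it. This is immediate: $\Copow_\kappa(\mathcal C)$ was introduced as the full subcategory of $\Fam_\kappa(\mathcal C)$ on the constant families, and $J_{\mathcal C}$ is its inclusion. A morphism $(I,X)\to(J,Y)$ in $\Copow_\kappa(\mathcal C)$ is therefore literally a family morphism $(u,\alpha)$ with $u:I\to J$ and $\alpha_i:X\to Y$. Faithfulness and fullness hold by construction.

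Next I would note that the essential image of $J_{\mathcal C}$ consists exactly of the constant $\kappa$-small families $(P)_{j\in J}$. Cauchy density therefore says: every $\kappa$-small family $(X_i)_{i\in I}$ is a retract in $\Fam_\kappa(\mathcal C)$ of some constant family $(P)_{j\in J}$ with $J$ $\kappa$-small. By \Cref{lem:host}, this holds for a given family iff that family admits a common retract host. Quantifying over all $\kappa$-small families gives the stated equivalence.

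The one point I would check carefully is size: the constant family produced from a host must itself be $\kappa$-small, and the converse must not require anything extra. In the forward direction of \Cref{lem:host} the constant family is $(P)_{i\in I}$, indexed by the same $\kappa$-small set $I$, so it lies in $\Copow_\kappa(\mathcal C)$. In the converse direction the host $P$ is simply an object of $\mathcal C$, with no smallness condition to verify. No step is a genuine obstacle; the only subtlety is this bookkeeping of index sets.
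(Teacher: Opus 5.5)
Your proof is correct and follows the paper's route: the paper likewise notes that $J_{\mathcal C}$ is fully faithful and reads the equivalence off \Cref{lem:host} one family at a time. One small imprecision: the essential image of $J_{\mathcal C}$ also contains families that are only isomorphic to constant ones, not literally constant. This changes nothing, because the paper's definition of Cauchy density refers to the image and retracts are stable under isomorphism.
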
 
This observation links the explicit coproducts of \Cref{prop:kar-coprod} with the universal comparison between the two completions.

\begin{theorem}[Family-completion criterion]
\label{thm:criterion}
For a category $\mathcal C$, the following are equivalent.
\begin{enumerate}
\item The constant-family inclusion extends to an equivalence
\begin{equation}\label{eq:fundamental-Karequivalence}
\Rlz:
\Kar\bigl(\Copow_\kappa(\mathcal C)\bigr)
\simeq
\Fam_\kappa(\mathcal C).
\end{equation}
\item The category $\mathcal C$ is Cauchy complete and every $\kappa$-small family in $\mathcal C$ has a common retract host.
\end{enumerate}
In $\textup{(1)}$, the equivalence is required to restrict, up to natural isomorphism, to $J_{\mathcal C}$. When these conditions hold, $\Rlz$ carries the coproducts of \Cref{prop:kar-coprod} to the concatenation coproducts in $\Fam_\kappa(\mathcal C)$.
\end{theorem}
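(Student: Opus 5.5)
We will use the standard fact that for a fully faithful $J:\mathcal A\to\mathcal B$ into a Cauchy-complete $\mathcal B$, the extension $\overline J:\Kar(\mathcal A)\to\mathcal B$ given by the universal property of $Q_{\mathcal A}$ is fully faithful. It is essentially surjective precisely when $J$ is Cauchy dense. Concretely, $\overline J(A,e)$ is a chosen splitting of $Je$ in $\mathcal B$. Full faithfulness follows because Karoubi arrows $(A,e)\to(B,d)$ are the arrows $f$ with $f=dfe$. These correspond bijectively, by full faithfulness of $J$ and the splitting equations, to maps between the splitting objects.

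For (2)$\Rightarrow$(1), we first invoke \Cref{lem:fam-cauchy} to see that $\Fam_\kappa(\mathcal C)$ is Cauchy complete. Next, \Cref{lem:cauchy-dense} shows that $J_{\mathcal C}$ is Cauchy dense. The inclusion $J_{\mathcal C}$ is fully faithful because $\Copow_\kappa(\mathcal C)$ is a full subcategory. Hence $\Rlz:=\overline{J_{\mathcal C}}$ is an equivalence with $\Rlz Q\cong J_{\mathcal C}$, which is exactly (1).

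For (1)$\Rightarrow$(2), suppose $\Rlz$ is an equivalence with $\Rlz Q\cong J_{\mathcal C}$. Since $\Kar(-)$ is always Cauchy complete and equivalences preserve split idempotents, $\Fam_\kappa(\mathcal C)$ is Cauchy complete. The converse direction of \Cref{lem:fam-cauchy} then makes $\mathcal C$ Cauchy complete. Every object $(A,e)$ of $\Kar(\Copow_\kappa(\mathcal C))$ is a retract of $Q(A)=(A,1)$, via $e:(A,e)\to(A,1)$ and $e:(A,1)\to(A,e)$. Applying $\Rlz$ and using essential surjectivity, every family is therefore a retract of some $\Rlz Q(A)\cong J_{\mathcal C}(A)$, i.e.\ of a constant family. \Cref{lem:host} then yields common retract hosts. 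The only point needing care is that "extends" should be read up to the natural isomorphism $\Rlz Q\cong J_{\mathcal C}$, which is the convention stated in the theorem.

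For the final claim, equivalences preserve all colimits, so $\Rlz$ sends the coproduct $((I,P),c)$ of \Cref{prop:kar-coprod} to a coproduct of the $\Rlz E_a$. This already gives the claim up to canonical isomorphism. To make it explicit, we split $c$ in $\Fam_\kappa(\mathcal C)$ as in the proof of \Cref{lem:fam-cauchy}. The fixed set of $u$ is $\coprod_a \Fix(u_a)$, and at $(a,i)$ with $u_a(i)=i$ the component $s_ap^a_ir_a$ splits through the splitting of $p^a_i$, since $s_a,r_a$ form a retraction. So the split family is the concatenation of the splittings of the $E_a$. One then checks that $\Rlz\iota_a$ is the block inclusion, because the component $s_ap^a_i$ restricted to splittings is the canonical map. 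We expect this bookkeeping with chosen splittings to be the main, though routine, obstacle. The equivalence argument itself is formal.
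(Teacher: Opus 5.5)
Your proposal is correct and follows essentially the same route as the paper. You prove (2)$\Rightarrow$(1) from Cauchy completeness of $\Fam_\kappa(\mathcal C)$, Cauchy density of $J_{\mathcal C}$, and the Cauchy-dense extension lemma; (1)$\Rightarrow$(2) by transporting Cauchy completeness and using that each $(A,e)$ is a retract of $Q(A)$; and the coproduct claim by uniqueness of coproducts plus the explicit splitting of $s_ap^a_ir_a$ through a splitting of $p^a_i$, where the bookkeeping you anticipate is exactly the paper's computation $(q^a_ir_a)(s_ap^a_i)m^a_i=q^a_ip^a_im^a_i=1$.
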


\begin{proof}
Assume (2). By \Cref{lem:fam-cauchy}, $\Fam_\kappa(\mathcal C)$ is Cauchy
complete, so the fully faithful inclusion $J_{\mathcal C}$ extends along the
Karoubi embedding to
$R:\Kar(\Copow_\kappa(\mathcal C))\to\Fam_\kappa(\mathcal C)$.
By \Cref{lem:host,lem:cauchy-dense}, the common-host condition is
precisely Cauchy density of $J_{\mathcal C}$. Hence $R$ is fully faithful and
essentially surjective.

Conversely, assume (1). Since every Karoubi envelope is Cauchy complete,
$\Fam_\kappa(\mathcal C)$ is Cauchy complete, and \Cref{lem:fam-cauchy}
implies that $\mathcal C$ is Cauchy complete. Every $R(A,e)$ is a retract of
$R(A,1)\cong J_{\mathcal C}(A)$; essential surjectivity therefore makes
$J_{\mathcal C}$ Cauchy dense. By \Cref{lem:cauchy-dense}, every
$\kappa$-small family has a common retract host. Finally, the equivalence
carries the coproducts of \Cref{prop:kar-coprod} to concatenation coproducts
by uniqueness of coproducts. Detailed realization formulas are given in
\Cref{app:criterion-details,app:realization}.
\end{proof}

\begin{corollary}
\label{cor:vect}
Let $\mathcal C$ be a Cauchy-complete category with a zero object. If
$\mathcal C$ admits either $\kappa$-small products or $\kappa$-small
coproducts, then the constant-family inclusion extends to an equivalence
$$
\Kar\bigl(\Copow_\kappa(\mathcal C)\bigr)
\simeq
\Fam_\kappa(\mathcal C).
$$

\noindent In particular, there is an equivalence extending the constant-family inclusion
$$
\Rlz:
\Kar\bigl(\Copow_\kappa(\Vectop)\bigr)
\simeq
\Fam_\kappa(\Vectop).
$$
\end{corollary}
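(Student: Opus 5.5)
The plan is to obtain the corollary from \Cref{thm:criterion} by checking its condition (2), using \Cref{lem:common-hosts-products-coproducts} to supply the common retract hosts. First, a zero object gives zero morphisms: for objects $X,Y$, the composite $X\to 0\to Y$ is a zero morphism, and these composites are compatible with composition. So \Cref{lem:common-hosts-products-coproducts} applies. If $\mathcal C$ has $\kappa$-small products, or $\kappa$-small coproducts, every $\kappa$-small family $(X_i)_{i\in I}$ admits a common retract host, namely $\prod_j X_j$ or $\coprod_j X_j$. This includes the empty family, whose host is the terminal or initial object, which here is the zero object. Since $\mathcal C$ is Cauchy complete by hypothesis, condition (2) of \Cref{thm:criterion} holds. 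Hence the constant-family inclusion $J_{\mathcal C}$ extends to the equivalence $\Rlz$.

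For the particular case, I take $\mathcal C=\Vectop$ and check three hypotheses.
\begin{enumerate}
\item $\Vectop$ has a zero object, since the zero space is a zero object of $\Vect$ and this property is self-dual.
\item $\Vectop$ has $\kappa$-small products, namely the direct sums $\bigoplus_i V_i$, which are coproducts in $\Vect$. Dually, it has $\kappa$-small coproducts, given by the products $\prod_i V_i$ of $\Vect$. Either suffices. Here we take the spaces to be $\mathcal U$-small and $\kappa$ below the relevant universe bound, following the size conventions of Appendix~\ref{app:size}.
\item $\Vectop$ is Cauchy complete. Cauchy completeness is self-dual, since an idempotent splits in $\mathcal C$ exactly when it splits in $\mathcal C^{\mathrm{op}}$. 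It therefore suffices to note that every idempotent linear map $e:V\to V$ splits through its image. The splitting is $V\xrightarrow{e}\im e\hookrightarrow V$, and the corestriction composed with the inclusion is $e$, while the inclusion composed with the corestriction is the identity of $\im e$.
\end{enumerate}

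No step is a genuine obstacle. The only points needing care are that the zero object indeed gives zero morphisms in the sense used by \Cref{lem:common-hosts-products-coproducts}, which is immediate, and that the $\kappa$-small limits or colimits exist within the chosen universe. For $\Vect$, direct sums of $\kappa$-small families of $\mathcal U$-small spaces remain $\mathcal U$-small, which settles the second point. Everything else is a direct invocation of \Cref{thm:criterion}.
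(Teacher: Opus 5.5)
Your proposal is correct and follows essentially the same route as the paper: a zero object gives zero morphisms, \Cref{lem:common-hosts-products-coproducts} supplies the common retract hosts, and \Cref{thm:criterion} yields the equivalence, with $\Vectop$ handled by noting that it has a zero object, direct-sum products, and split idempotents. Your explicit splitting through $\im e$ and your caveat that $\kappa$ must stay within the universe bound are extra care that the paper's appeal to ``all small products and coproducts'' leaves implicit.
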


\begin{proof}
A zero object supplies zero morphisms. The first assertion therefore follows
from \Cref{lem:common-hosts-products-coproducts,thm:criterion}. The category
$\Vectop$ is Cauchy complete and has a zero object together with all small
products and coproducts. For the displayed ambient presentation, use the direct
sum $\bigoplus_iV_i$ in $\Vect$, which is a product in $\Vectop$: the
coordinate projection in $\Vect$ represents the section in $\Vectop$, and
the coordinate injection represents the retraction.
\end{proof}

Under this equivalence, a family $(V_i)_{i\in I}$ is represented by the
ambient space $\bigoplus_iV_i$ together with the coordinate projectors
$\iota_i\pi_i$. Conversely, splitting these projectors recovers the original
fibres. One could instead use direct-product ambient spaces; the two choices
give isomorphic coproduct objects after completion, although their ambient
presentations differ. We use direct sums because they provide the sparse,
finitely supported presentation relevant to reverse-mode implementations.

\section{A structure-preserving simply typed transformation}
\label{sec:transformation}

The completion theorem identifies the correct semantic target. We next lift that result from concrete families of vector spaces to the syntactic target language of CHAD\@. The construction proceeds in two stages: first we recover the missing bicartesian structure inside the Karoubi envelope; then the universal property of the source syntax generates the transformation.

\subsection{Variant coproducts in the completed target}

We use the ordinary reverse target $\TotTar$ of basic CHAD.

\paragraph{Variant-target assumption.}
We assume that $(\CTar,\LTar)$ is a categorical model of the basic CHAD
target language in the sense of~\cite[Definition~5.5]{VakarSmeding2022}.
Thus $\LTar$ is locally indexed, its fibres have the finite-biproduct and
cartesian linear-function structure required by CHAD, and $\TotTar$ is
cartesian closed by~\cite[Theorem~6.2]{VakarSmeding2022}. For variants, the
only additional assumption is that $\CTar$ has finite coproducts.

The remaining operations needed below are consequences of these hypotheses.
Finite biproducts in the fibres provide common retract hosts in the reverse
fibres. Moreover, cartesian linear-function objects and the coproduct universal
property in $\CTar$ give natural bijections
$$
\LTar(A+B)(U,P)
\cong
\LTar(A)(U,P)\times\LTar(B)(U,P),
$$
so linear maps over $A+B$ are uniquely determined by their two branch
restrictions. The associated branchwise linear case operation, including its
$\beta$- and $\eta$-equations and stability under reindexing, is therefore
derived rather than assumed. Appendix~\ref{app:target-closure} gives the full
argument.

The following calculations are written in the internal language of $\CTar$ and its indexed linear calculus. Equivalently, each displayed elementwise equation denotes the corresponding equality of target terms. We call an idempotent $e=(r,p)$ on a reverse-target object \emph{full} when it acts on the whole primal/function--backpropagator package, rather than only on a cotangent coordinate.

Let
$$
E=((A,V),e),
\qquad
F=((B,W),d),
$$
where $e=(r,p)$ and $d=(s,q)$ are idempotent endomorphisms in $\TotTar$. On the ambient object $(A+B,V\oplus W)$, define $c_{E,F}$ by
$$
\begin{aligned}
c_{E,F,1}(\mathsf{inl}(a))&=\mathsf{inl}(r(a)),
&
(c_{E,F})^*_{\mathsf{inl}(a)}(v,w)&=(p_a(v),0),\\
c_{E,F,1}(\mathsf{inr}(b))&=\mathsf{inr}(s(b)),
&
(c_{E,F})^*_{\mathsf{inr}(b)}(v,w)&=(0,q_b(w)).
\end{aligned}
$$
For an idempotent in the reverse target, the backward equations are $p_a\circ p_{r(a)}=p_a$ and $q_b\circ q_{s(b)}=q_b$. Hence $c_{E,F}^2=c_{E,F}$ branchwise. Define
$$
E+F:=((A+B,V\oplus W),c_{E,F}).
$$
This is the coproduct object in the completed category; no dependent target type former is required.

\begin{proposition}[Bicartesian closure of the completed target]
\label{prop:bcc}
Under the hypotheses above, $\Kar(\TotTar)$ is bicartesian closed. For
$E=((A,V),e)$ and $F=((B,W),d)$, its chosen structure is
$$
1=Q_{\TotTar}(1_{\TotTar}),
\quad
E\times F=((A\times B,V\oplus W),e\times d),
\quad
F^E=((B,W)^{(A,V)},\varepsilon_{e,d}),
$$
where the product projections are $e\pi_1,d\pi_2$ and
$\varepsilon_{e,d}=\Lambda(d\,\mathrm{ev}(1\times e))$, so
$\varepsilon_{e,d}(h)=dhe$. The initial object is $((0,0),1)$, and the binary coproduct is $((A+B,V\oplus W),c_{E,F})$.
\end{proposition}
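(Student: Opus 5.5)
The proof splits into two parts, and I would treat them separately. The first part is the cartesian closed structure. It is a general fact, not special to CHAD: if $\mathcal A$ is cartesian closed, then $\Kar(\mathcal A)$ is cartesian closed with exactly the displayed formulas. Since $\TotTar$ is cartesian closed by~\cite[Theorem~6.2]{VakarSmeding2022}, I would prove this general fact and instantiate it.

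For the terminal object, a Karoubi arrow $(A,e)\to(1,1)$ is an arrow $f$ with $f=fe$. The unique arrow $!_A$ satisfies this, so it is the unique such arrow. For products, $e\times d$ is idempotent by functoriality of $\times$. The maps $e\pi_1$ and $d\pi_2$ are Karoubi arrows, since $e(e\pi_1)(e\times d)=e\pi_1$. Given Karoubi arrows $f:G\to E$ and $g:G\to F$, the pairing $\langle f,g\rangle$ satisfies $(e\times d)\langle f,g\rangle g_0=\langle efg_0,dgg_0\rangle=\langle f,g\rangle$, where $g_0$ is the idempotent of $G$. Uniqueness follows from uniqueness in $\mathcal A$.

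For exponentials, $\varepsilon_{e,d}=\Lambda(d\,\mathrm{ev}(1\times e))$ acts as $h\mapsto dhe$, so it is idempotent. I would take evaluation to be $d\,\mathrm{ev}\,(\varepsilon_{e,d}\times e)$. For $f:G\times E\to F$ a Karoubi arrow, meaning $f=df(g_0\times e)$, the transpose is $\Lambda f$. The key computation is that $\varepsilon_{e,d}\,\Lambda f\,g_0=\Lambda(d f(g_0\times e))=\Lambda f$, so $\Lambda f$ is a Karoubi arrow. The $\beta$-law and $\eta$-law then follow from those in $\mathcal A$ by inserting and absorbing the idempotents. Each step is a routine naturality-of-$\Lambda$ calculation.

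The second part is the cocartesian structure. I would realise $\Kar(\TotTar)$ over the indexed target. The initial object $((0,0),1)$ works because $\CTar$ has an initial object $0$ and the fibre $\LTar(0)$ is trivial. By the bijection $\LTar(A+B)(U,P)\cong\LTar(A)(U,P)\times\LTar(B)(U,P)$ at $A=B=0$, the fibre $\LTar(0)$ has exactly one arrow between any two objects. Hence there is exactly one arrow $(0,0)\to(B,W)$, and the sandwich condition holds automatically.

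For the binary coproduct, I would mirror \Cref{prop:kar-coprod}, with $V\oplus W$ as the common retract host and case analysis in $\CTar$ replacing concatenation of index sets. The injections are $\iota_E=c_{E,F}\,j_1\,e$. Here $j_1$ has primal component $\mathsf{inl}$ and backward component the biproduct projection $V\oplus W\to V$, so $\iota_E$ is $(\mathsf{inl}\circ r,\ (v,w)\mapsto p_a(v))$, and symmetrically for $\iota_F$. Given Karoubi arrows $f=(f_1,f^*):E\to G$ and $g=(g_1,g^*):F\to G$, I would define the copairing $[f,g]$ by primal case $[f_1,g_1]$ and backward map $x\mapsto\mathsf{inl}$-branch $(f^*_a(-),0)$, $\mathsf{inr}$-branch $(0,g^*_b(-))$. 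This backward map is a linear map over $A+B$, obtained from the derived branchwise linear case operation.

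I would then check three things branchwise. First, $[f,g]$ is a Karoubi arrow: this uses $f=kfe$ and $g=kgd$, where $k$ is the idempotent of $G$. Second, $[f,g]\iota_E=f$: this uses $f=fe$. Third, uniqueness: any $h$ with $h\iota_E=f$ and $h\iota_F=g$ is determined on each branch, because $h=hc_{E,F}$ and linear maps over $A+B$ are determined by their two branch restrictions.

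I expect the main obstacle to be uniqueness and the sandwich check in the linear component. These require the reverse-accumulation composition formula interacting correctly with reindexing along $\mathsf{inl}$ and $\mathsf{inr}$. I would handle this by using the $\beta$- and $\eta$-equations of the derived linear case, together with its stability under reindexing from Appendix~\ref{app:target-closure}. With those, the restrictions of $h$ to $\mathsf{inl}$ and $\mathsf{inr}$ are forced to be $(f^*_a,0)$ and $(0,g^*_b)$, where the zero components come from $h=hc_{E,F}$ killing the off-branch summand.
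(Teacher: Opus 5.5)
Your proposal is correct and follows the paper's proof. Cartesian closure is inherited from $\TotTar$ through the standard Karoubi construction, which you verify by hand where the paper cites \Cref{prop:kar-ccc-app}. The coproduct is built on the common host $V\oplus W$, with sandwiched injections, branchwise copairing from the derived linear case, and uniqueness from $h=h\,c_{E,F}$ together with the primal and linear $\eta$-equations.

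The one small divergence is the initial object: the paper only notes that the backward component of an arrow out of $(0,0)$ is a linear map into the zero object and is therefore unique. Your route through triviality of $\LTar(0)$ also works, but deriving it from the binary bijection needs the unstated remark that $\mathsf{inl}=\mathsf{inr}:0\to 0+0$; more directly, $\LTar(0)(U,P)\cong\CTar(0,U\multimap P)\cong 1$.
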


\begin{proof}
The category $\TotTar$ is cartesian closed by the basic CHAD hypotheses, and its Karoubi completion inherits this structure by the standard Karoubi construction, recalled in \Cref{prop:kar-ccc-app}. The finite biproduct $V\oplus W$ is a common retract host for $V$ and $W$ in the reverse fibres, and the branchwise decomposition derived above gives
$[a,b]^*_{\mathsf{inl}(x)}(u)=(a_x^*(u),0)$ and
$[a,b]^*_{\mathsf{inr}(y)}(u)=(0,b_y^*(u))$.
The sandwich and injection equations prove existence, while the primal and linear $\eta$-equations determine both branches and hence prove uniqueness. The initial object is obtained from the initial object of $\CTar$ and the zero object in the linear fibres. Appendix~\ref{app:target-closure} contains the component proof and derives all branchwise operations from the stated hypotheses.
\end{proof}

Under \Cref{cor:vect}, this coproduct realizes to the branchwise coproduct in $\Fam(\Vectop)$. It is therefore an ambient presentation of the established dependent variant semantics, not a competing interpretation.

Having reconstructed the missing coproducts in the completed target, the rest of the transformation follows from the source universal property. This is the decisive methodological point: the clauses for variants are not appended to CHAD by hand, but are forced by the same preservation principle as the clauses for products and functions.

\subsection{The universal completed interpretation}

Let $\Synp$ be the bicartesian closed syntactic category freely generated by the source ground types and primitive operations, modulo the declared equations. Products, variants, and function types are supplied by its chosen bicartesian closed structure.

\begin{theorem}[Universal completed reverse interpretation]
\label{thm:semantic-kar}
Every assignment of the source ground types and primitive operations to objects and arrows of $\Kar(\TotTar)$ that respects the declared equations extends, relative to the chosen bicartesian closed structures, to a bicartesian closed functor
$$
\DKar:\Synp\longrightarrow\Kar(\TotTar).
$$
For the standard primal interpretations and primitive codifferentials, its ambient representatives are ordinary nondependent target types and terms. In particular, the variant clauses are generated by preservation of the completed coproducts rather than postulated as additional syntax-directed rules.
\end{theorem}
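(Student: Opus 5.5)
The proof combines the freeness of the source syntax with \Cref{prop:bcc}. First, $\Synp$ is the bicartesian closed category freely generated by the ground types and primitive operations modulo the declared equations. This is the bicartesian analogue of the universal property recalled for $\Syn$: an interpretation of the generators in a bicartesian closed category with chosen structure extends to a bicartesian closed functor, unique up to unique coherent isomorphism. I would state this explicitly. The existence of comparison isomorphisms is built into the definition of bicartesian closed functor given earlier, so no strictness claim is needed. By \Cref{prop:bcc}, $\Kar(\TotTar)$ is bicartesian closed with the displayed chosen structure. Applying the universal property to the given assignment therefore yields $\DKar$, defined by structural recursion on types and terms: each type former goes to the corresponding chosen construction of \Cref{prop:bcc}, and each term former goes to the corresponding universal arrow, namely pairing, projections $e\pi_1,d\pi_2$, $\Lambda$, $\mathrm{ev}$, copairing and the injections. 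The declared equations, together with the $\beta\eta$-laws validated by the universal properties, guarantee that the result is well defined on equivalence classes.

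Second, I need the claim that ambient representatives are ordinary nondependent target types and terms. I would prove it by induction on the source type and term structure. At a ground type $\mathsf{real}^n$, the standard assignment is $Q_{\TotTar}$ of the basic CHAD object, with identity idempotent. Primitive operations map to $Q_{\TotTar}$ of their ordinary (primal, codifferential) pair, and these respect the declared equations because $Q_{\TotTar}$ is a functor and basic CHAD already validates them. For the inductive step, I would inspect \Cref{prop:bcc}. Products use $A\times B$, $V\oplus W$ and $e\times d$. Exponentials use the ordinary exponential of $\TotTar$ together with the term $\Lambda(d\,\mathrm{ev}(1\times e))$. Coproducts use $A+B$, $V\oplus W$ and $c_{E,F}$, which is defined by a primal case on $\CTar$ and the derived branchwise linear case operation (Appendix~\ref{app:target-closure}). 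The initial object is $((0,0),1)$. Each of these is built from constructors of the nondependent target language. The structure arrows are composites of ordinary arrows with idempotents, for example $\iota_a=c\,j_a\,e_a$ and the copairing $[a,b]$ whose backward components are displayed in the proof of \Cref{prop:bcc}. Hence every representative is a nondependent target term.

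Third, the variant clauses. The injections and case analysis of $\Synp$ are interpreted by the coproduct structure of \Cref{prop:bcc}. So the clause for $\mathsf{case}$ is whatever preserving the coproduct forces: the copairing of the interpretations of the branches. No additional syntax-directed rule is introduced. Uniqueness of this clause up to coherent isomorphism follows from the uniqueness part of the coproduct universal property, which was established through the primal and linear $\eta$-equations.

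The main obstacle is to make the free-bicartesian-closed universal property precise in the presence of the declared equations and of a target whose structure is chosen only up to isomorphism. Distributivity of products over coproducts is automatic in a bicartesian closed category, so it needs no separate treatment, but the $\eta$-law for sums with contexts (commuting conversions) must be interpreted correctly. My plan is to reduce this to the standard free-bicartesian-closed-category construction and to check only that interpretations of generators extend. Coherence then follows because the comparison isomorphisms are determined by the universal properties.
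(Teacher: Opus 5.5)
Your proposal is correct and follows the paper's argument: bicartesian closure of $\Kar(\TotTar)$ comes from \Cref{prop:bcc}, the universal property of $\Synp$ then supplies the extension, and the variant clauses are forced by preservation of the completed coproduct. The one difference is minor: the paper does not need your induction over the chosen structure for the nondependence claim. It simply notes that every object of $\Kar(\TotTar)$ is, by definition, an ordinary target type with an idempotent target term, and every morphism is an ordinary target term, so any functor into it has ordinary representatives automatically.
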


\begin{proof}
The substantive input is \Cref{prop:bcc}: common ambient hosts and idempotent splitting provide the coproducts that were absent from the ordinary reverse target, while Karoubi completion preserves its cartesian closed structure. Hence $\Kar(\TotTar)$ is bicartesian closed. The universal property of the freely generated bicartesian closed category $\Synp$ then extends the chosen interpretation of the generators to a bicartesian closed functor $\DKar$. For fixed choices of structure, this extension is determined by its action on the generators; different coherent choices are related by a structure-preserving natural isomorphism.

An object of $\Kar(\TotTar)$ is an ordinary target type together with an idempotent ordinary target term, and a morphism is an ordinary target term satisfying the corresponding sandwich equation. Thus the types, primal computations, backpropagators, and idempotents generated by $\DKar$ all belong to the ordinary nondependent target language. Products, variants, abstraction, application, and case analysis are consequently determined by preservation of the chosen bicartesian closed structure. The change from dependent fibres to ambient projectors changes the representation of cotangents, not the universal definition of the CHAD transformation.
\end{proof}

Concretely, for every source type $\tau$ write
$$
\DKar(\tau)
=
\bigl((\widehat D_1\tau,\widehat D_2\tau),e_\tau\bigr),
\qquad
e_\tau=(r_\tau,p_\tau),
$$
where $e_\tau$ is a full idempotent endomorphism of the ordinary reverse target object. For data types, induction gives $r_\tau=1$ and the linear component $p_\tau(x,-)$ is an idempotent cotangent projector. At ground types it is the identity. Products and sums act componentwise; in particular,
$$
\begin{aligned}
\widehat D_1(\tau+\sigma)
&=\widehat D_1\tau+\widehat D_1\sigma,\\
\widehat D_2(\tau+\sigma)
&=\widehat D_2\tau\oplus\widehat D_2\sigma,\\
p_{\tau+\sigma}(\mathsf{inl}(x),(v,w))
&=(p_\tau(x,v),0),\\
p_{\tau+\sigma}(\mathsf{inr}(y),(v,w))
&=(0,p_\sigma(y,w)).
\end{aligned}
$$
At a function type, however,
$$
e_{\tau\Rightarrow\sigma}
=
\varepsilon_{e_\tau,e_\sigma},
$$
the exponential idempotent induced by cartesian closure. Internally it sends an ambient function--backpropagator package $h$ to $e_\sigma h e_\tau$; its primal component need not be the identity. The idempotents are generated code, not proof objects. A transformed term is an ordinary primal computation and backpropagator satisfying the full sandwich equation at its source and target. Appendix~\ref{app:concrete-bcc} gives the concrete exponential idempotent explicitly.

The coproduct object, injections, and copairing above determine all four variant constructs---left and right injection, abort, and case analysis---without an ad hoc semantic clause. Appendix~\ref{app:macro} gives the resulting injection and case-analysis terms in full. Product and higher-order clauses are those generated by basic CHAD, restricted by the source and target idempotents; we do not duplicate its full macro here.

The fibrewise-dual forward construction and its ambient formula are recorded in Appendix~\ref{app:forward}.

The Karoubi-valued functor is the primary semantic construction. The ordinary nondependent program is obtained only after a deliberate change of viewpoint in which the ambient carriers and maps are retained but their distinguished idempotents are no longer treated as categorical identities.

\subsection{Idempotent erasure and semifunctoriality}

A semifunctor preserves objects, arrows, and composition, but need not preserve identities. Hoofman and Moerdijk showed, building on Hayashi and Hoofman, that the theory of semifunctors embeds fully into ordinary functor theory through Karoubi completion~\cite{Hayashi1985,Hoofman1993,HoofmanMoerdijk1995}. Forgetting the distinguished status of the idempotent attached to each Karoubi object gives such a map
$$
\Dst:\Synp\semiarrow\TotTar.
$$

\begin{proposition}[Idempotent erasure]
\label{prop:semi}
For every source arrow $f:\tau\to\sigma$,
$$
\Dst(f)=e_\sigma\circ\Dst(f)\circ e_\tau,
\qquad
\Dst(1_\tau)=e_\tau,
$$
and $\Dst(gf)=\Dst(g)\Dst(f)$. At data types $e_\tau=(1,p_\tau)$, so the backward component of the erased identity is the cotangent projector $p_\tau$.
\end{proposition}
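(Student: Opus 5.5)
The plan is to define $\Dst$ as the composite of $\DKar$ with the forgetful assignment $U:\Kar(\TotTar)\semiarrow\TotTar$. On objects, $U$ sends $((A,V),e)$ to $(A,V)$, and on arrows it sends a Karoubi arrow $f:((A,V),e)\to((B,W),d)$ to the underlying $\TotTar$-arrow $f$. Every claim in \Cref{prop:semi} then reduces to a property of $U$ combined with functoriality of $\DKar$ from \Cref{thm:semantic-kar}.

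First, I will check that $U$ is a semifunctor. Composition in $\Kar(\TotTar)$ is composition in $\TotTar$, so $U(gf)=U(g)U(f)$. The identity of $((A,V),e)$ in $\Kar(\TotTar)$ is $e$, so $U$ sends that identity to $e$ rather than to $1_{(A,V)}$.

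Second, I will transfer these facts along $\DKar$. Write $\DKar(\tau)=((\widehat D_1\tau,\widehat D_2\tau),e_\tau)$. The composite $\Dst=U\circ\DKar$ preserves composition, because $\DKar$ is a functor and $U$ preserves composition. Since $\DKar(1_\tau)=1_{\DKar(\tau)}$, which is $e_\tau$, we get $\Dst(1_\tau)=e_\tau$. Since $\DKar(f)$ is a Karoubi arrow $\DKar(\tau)\to\DKar(\sigma)$, it satisfies the sandwich equation $\DKar(f)=e_\sigma\DKar(f)e_\tau$ in $\TotTar$, and applying $U$ gives $\Dst(f)=e_\sigma\Dst(f)e_\tau$.

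Third, for data types I will show by induction on the data-type grammar that $e_\tau=(1,p_\tau)$. At $\mathsf{real}^n$ the generators are interpreted as $Q_{\TotTar}$ of ordinary objects, so the idempotent is the identity. At $1$ and $0$ the chosen objects of \Cref{prop:bcc} also carry identity idempotents. For products, $e\times d$ of two idempotents with identity primal parts again has identity primal part, and its linear part is the componentwise sum of projectors. For variants, $c_{E,F}$ has primal part $\mathsf{inl}(a)\mapsto\mathsf{inl}(r(a))$, $\mathsf{inr}(b)\mapsto\mathsf{inr}(s(b))$; this is the identity when $r=1$ and $s=1$, by the $\eta$-law for coproducts in $\CTar$. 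The backward component of $\Dst(1_\tau)=e_\tau$ is therefore $p_\tau$.

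The main point needing care is that \Cref{thm:semantic-kar} only gives $\DKar$ as a bicartesian closed functor up to coherent comparison isomorphisms. So $\DKar(\tau\times\sigma)$ is literally the chosen product only once choices are fixed, as the paper stipulates. I will use these fixed strict choices, under which the inductive description of $e_\tau$ holds on the nose. Absent that, the erased identity would only be $e_\tau$ up to conjugation by a comparison isomorphism.
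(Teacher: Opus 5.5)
Your proposal is correct and follows the paper's argument: $\Dst$ is $\DKar$ followed by forgetting the attached idempotents, so the sandwich equation, $\Dst(1_\tau)=e_\tau$, and strict preservation of composition come straight from the Karoubi structure, and the data-type clause is the same induction over the chosen structure that the paper invokes when it states $r_\tau=1$. Your remark about fixing the chosen structure on objects makes explicit what the paper leaves implicit. One wording point: $\Dst(1_\tau)=e_\tau$ holds by definition whatever choices are made, and only the explicit form $e_\tau=(1,p_\tau)$ depends on those choices.
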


\begin{proof}
Karoubi arrows satisfy the sandwich equation, their identities are the attached
idempotents, and their composition is ambient composition. The data-type claim
specializes this observation to $e_\tau=(1,p_\tau)$.
\end{proof}

Equivalently, every semifunctor $F:\mathcal X\semiarrow\mathcal A$ determines an ordinary functor $\mathcal X\to\Kar(\mathcal A)$ by sending $X$ to $(FX,F1_X)$. This is the concrete form, used here, of the Hoofman--Moerdijk passage from semifunctors to ordinary functors; the appendices prove the correspondence explicitly. Notice that this forgetful passage does not remove projector calls required by the generated program. It forgets only their distinguished categorical role as identities of completed objects.

Structure preservation explains how the algorithm is generated, but it does not by itself identify the linear map computed on the entire ambient carrier. The next section compares the completed semantics with dependent CHAD and derives the exact extension of the mathematical derivative to invalid ambient coordinates.

\section{Correctness of the ambient simply typed realization}
\label{sec:correctness}

We now compare the completed target with the concrete dependent model and use
the Karoubi equations to determine the ambient backpropagator.

\subsection{Concrete realization and comparison}

\paragraph{Concrete-model assumption.}
Assume a primal interpretation $\mathcal S:\CTar\to\Set$ and compatible
indexed linear interpretations
$\mathcal L_A:\LTar(A)\to\Vect^{\mathcal S A}$. A linear object $V$ is sent
to the constant family at a vector space $\mathcal V V$, and reindexing is
interpreted by substitution. Assume the usual basic-CHAD comparison maps, so
the induced reverse-total interpretation preserves the chosen terminal object,
products, and exponentials, and assume additionally that finite coproducts,
zero objects, biproducts, and linear-function objects are preserved. In
particular,
$$
\begin{aligned}
\mathcal S(A+B)&\cong\mathcal S A+\mathcal S B,\\
\mathcal V(V\oplus W)&\cong\mathcal V V\oplus\mathcal V W,\\
\mathcal S(V\multimap W)&\cong\Vect(\mathcal V V,\mathcal V W).
\end{aligned}
$$
Ground types have their usual Euclidean interpretations, and primitives use the
same primal maps and codifferentials as dependent CHAD. These data induce
$$
\llbracket-\rrbracket_\Tar:\TotTar\longrightarrow\Copow(\Vectop),
\qquad
\llbracket(A,V)\rrbracket_\Tar=(\mathcal S A,\mathcal V V),
$$
with
$\llbracket(f,\ell)\rrbracket_\Tar=
(\mathcal S f,(\mathcal L_A(\ell)_a)_{a\in\mathcal S A})$.
Karoubi completion followed by realization gives
$$
\llbracket-\rrbracket_{\mathrm{cpl}}:
\Kar(\TotTar)
\xrightarrow{\Kar(\llbracket-\rrbracket_\Tar)}
\Kar(\Copow(\Vectop))
\xrightarrow{\Rlz}
\Fam(\Vectop).
$$
Concretely, $\Rlz$ splits idempotents using their fixed-point sets in $\Set$
and their underlying fixed-point subspaces in $\Vect$.

\begin{proposition}[Structure preserved by concrete realization]
\label{prop:standard-target-compatibility}
The functor $\llbracket-\rrbracket_{\mathrm{cpl}}$ has coherent bicartesian
closed comparison isomorphisms. In particular, for $E,F\in\Kar(\TotTar)$,
$$
\llbracket E+F\rrbracket_{\mathrm{cpl}}
\cong
\llbracket E\rrbracket_{\mathrm{cpl}}
+
\llbracket F\rrbracket_{\mathrm{cpl}},
$$
where the right-hand side is the branchwise family coproduct. Under these
comparisons, ground types and primitives agree with dependent CHAD.
\end{proposition}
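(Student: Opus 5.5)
The plan is to factor $\llbracket-\rrbracket_{\mathrm{cpl}}=\Rlz\circ\Kar(\llbracket-\rrbracket_\Tar)$ and handle the two factors separately.

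For the first factor, the concrete-model assumption says that $\llbracket-\rrbracket_\Tar:\TotTar\to\Copow(\Vectop)$ has comparison isomorphisms for the terminal object, binary products and exponentials. Applying $\Kar$ to a functor gives a functor on Karoubi envelopes, sending $((A,V),e)$ to $(\llbracket(A,V)\rrbracket_\Tar,\llbracket e\rrbracket_\Tar)$. The cartesian closed structure of \Cref{prop:bcc} is built from ambient products, ambient exponentials, and idempotents such as $e\times d$ and $\varepsilon_{e,d}=\Lambda(d\,\mathrm{ev}(1\times e))$. These are all written in terms of structure that $\llbracket-\rrbracket_\Tar$ preserves. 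Transporting along the given comparison isomorphisms therefore yields comparison isomorphisms in $\Kar(\Copow(\Vectop))$. The one check needed is that the transported maps commute with the idempotents, so that they are genuine Karoubi isomorphisms. This holds by naturality of the comparisons.

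Coproducts need more care, since $\Copow(\Vectop)$ has none (\Cref{prop:obstruction}). I would compare directly with the explicit Karoubi coproducts of \Cref{prop:kar-coprod}. By the assumptions, $\mathcal S(A+B)\cong\mathcal SA+\mathcal SB$ and $\mathcal V(V\oplus W)\cong\mathcal VV\oplus\mathcal VW$. The branchwise formula for $c_{E,F}$ is interpreted, via the derived linear case operation and its $\beta$-equations, as exactly the blockwise idempotent $c$ of \Cref{prop:kar-coprod} with host $\mathcal VV\oplus\mathcal VW$. The sections and retractions are the biproduct injections and projections, which are the coordinate maps of \Cref{cor:vect}. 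The injections $\iota_a$ also agree. The same argument covers the initial object $((0,0),1)$, using preservation of the initial object and of zero objects.

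For the second factor, \Cref{cor:vect} makes $\Rlz$ an equivalence. An equivalence preserves every bicartesian closed structure up to canonical isomorphism. By \Cref{thm:criterion}, it also carries the coproducts of \Cref{prop:kar-coprod} to the concatenation coproducts of $\Fam(\Vectop)$. With the fixed-point splitting, the object $\Rlz((\mathcal SA+\mathcal SB,\mathcal VV\oplus\mathcal VW),c)$ has index set $\Fix$ of the base idempotent, which is the disjoint union of the two fixed sets. Its fibres are the images of $p_a$ or $q_b$ inside one summand, which is exactly the branchwise coproduct. Composing the two stages gives the displayed isomorphism. Coherence of the comparisons follows because each one is assembled from coherent comparisons, either of $\llbracket-\rrbracket_\Tar$ or of an equivalence.

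For ground types and primitives, their Karoubi objects carry identity idempotents, $Q_{\TotTar}$ of the ordinary interpretation. $\Rlz$ restricts to $J$ up to isomorphism, so their realizations are singleton or constant Euclidean families with the prescribed primal maps and codifferentials, matching dependent CHAD.

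The main obstacle I expect is the coproduct step in the first factor. One must verify that the syntactic branchwise backward map $(c_{E,F})^*$, defined through the derived case operation, is interpreted pointwise as $s_ap^a_ir_a$. This relies on $\mathcal L_{A+B}$ respecting the bijection $\LTar(A+B)(U,P)\cong\LTar(A)(U,P)\times\LTar(B)(U,P)$. That in turn follows from the preservation of coproducts and of linear-function objects, together with reindexing being interpreted by substitution along the injections.
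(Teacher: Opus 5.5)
Your proposal is correct and takes essentially the paper's route. You factor $\llbracket-\rrbracket_{\mathrm{cpl}}=\Rlz\circ\Kar(\llbracket-\rrbracket_\Tar)$, take the terminal, product and exponential comparisons from the assumed basic-CHAD comparisons together with \Cref{thm:explicit-bcc-comparison}, and get the coproduct comparison from the fact that the fixed points of $c_{E,F}$ are $\Fix(r)+\Fix(s)$ with fibres $\im(p_a)$ and $\im(q_b)$. Your detour through \Cref{prop:kar-coprod} and \Cref{thm:criterion}, and your check that the derived branchwise case is interpreted pointwise, only make explicit what the paper's appendix takes for granted.
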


\begin{proof}
The inherited structure is preserved by the assumed concrete comparisons and
\Cref{thm:explicit-bcc-comparison}. For coproducts, realization splits the left
and right branch projectors of $c_{E,F}$, yielding precisely the two branchwise
image families. The full calculation is in Appendix~\ref{app:concrete-comparison}.
\end{proof}

Let $\Ddep:\Synp\to\Fam(\Vectop)$ denote the standard dependent reverse-CHAD
interpretation.

\begin{theorem}[Comparison with dependent CHAD]
\label{thm:comparison}
There is a coherent bicartesian closed natural isomorphism
$$
\llbracket-\rrbracket_{\mathrm{cpl}}\DKar
\cong
\Ddep.
$$
Thus explicit cotangent fibres and ambient carriers equipped with idempotents
present the same denotational transformation.
\end{theorem}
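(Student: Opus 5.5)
The proof is a uniqueness argument from the universal property of $\Synp$. By \Cref{thm:semantic-kar}, $\DKar$ is a bicartesian closed functor. By \Cref{prop:standard-target-compatibility}, $\llbracket-\rrbracket_{\mathrm{cpl}}$ carries coherent bicartesian closed comparison isomorphisms. Hence the composite $\llbracket-\rrbracket_{\mathrm{cpl}}\DKar:\Synp\to\Fam(\Vectop)$ is again bicartesian closed, with comparison isomorphisms obtained by composing those of the two factors. The dependent interpretation $\Ddep$ is, by construction, the bicartesian closed functor induced by the same primal maps and codifferentials of the primitives. Since $\Synp$ is the freely generated bicartesian closed category, two bicartesian closed functors out of it that agree up to coherent isomorphism on generators are related by a unique monoidal (structure-compatible) natural isomorphism. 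This isomorphism extends the generator data. So the plan reduces to supplying compatible isomorphisms on ground types and primitive operations.

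\textbf{Ground types.} Here $\DKar(\mathsf{real}^n)=Q_{\TotTar}(\mathbf{R}^n,\mathbf{R}^n)$, with identity idempotent. Realization sends it to the constant singleton-indexed family, i.e.\ $(\mathbb R^n,(\mathbb R^n)_{x})$. This is literally $\Ddep(\mathsf{real}^n)$, up to the canonical fixed-point splitting of an identity. We take that canonical iso as the component.

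\textbf{Primitives.} For a primitive $\mathsf{op}$, the composite sends it to $(\mathcal S f,(\mathcal L(f^*)_x)_x)$. Conjugated by the ground-type isos, this is the pair (primal map, codifferential). It agrees with $\Ddep(\mathsf{op})$ by the concrete-model assumption that primitives use the same primal maps and codifferentials. The declared equations hold on both sides, so the generator data is well defined on $\Synp$.

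\textbf{Coherence (main obstacle).} The delicate point is that the free bicartesian closed category has a genuinely 2-categorical universal property. The extension is unique only up to unique isomorphism compatible with the chosen comparison maps. The comparison maps of $\llbracket-\rrbracket_{\mathrm{cpl}}$ for coproducts and exponentials are noncanonical: they are fixed-point splittings of $c_{E,F}$ and of $\varepsilon_{e,d}$. For coproducts, \Cref{prop:standard-target-compatibility} already identifies the splitting of $c_{E,F}$ with the branchwise family coproduct. For exponentials, I would invoke \Cref{thm:explicit-bcc-comparison} to check that the realization of $\varepsilon_{e,d}$ splits to the $\Fam(\Vectop)$ exponential of the split families. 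I would then verify that the triangle with evaluation commutes, so that these comparisons satisfy the coherence required of a bicartesian closed functor.

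Once this is done, I would build the natural isomorphism by induction on types: products, coproducts and exponentials of the ground components, composed with the comparison maps. Naturality on all terms then follows from the uniqueness clause of the universal property, because the class of arrows on which the square commutes contains the generators and is closed under the bicartesian closed operations. The uniqueness also makes the isomorphism coherent, which gives the \emph{bicartesian closed} qualifier in the statement.
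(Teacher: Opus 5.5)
Your proposal is correct and follows the paper's proof. Both sides are bicartesian closed interpretations of $\Synp$ (the composite by way of \Cref{prop:standard-target-compatibility}) that agree on ground types, primitives and declared equations, so freeness of $\Synp$ gives the coherent natural isomorphism. The coherence you flag as the main obstacle is already supplied by \Cref{prop:standard-target-compatibility} through \Cref{thm:explicit-bcc-comparison}; also, one small slip: the realized ground object is the constant family $(\mathbb R^n)_{x\in\mathbb R^n}$ indexed by $\mathbb R^n$, not a singleton-indexed family.
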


\begin{proof}
By \Cref{prop:standard-target-compatibility}, both sides are bicartesian closed
interpretations of $\Synp$ agreeing on ground types, primitives, and declared
equations. Freeness gives the unique coherent comparison. See
Appendix~\ref{app:concrete-comparison} for the expanded argument.
\end{proof}

Every data type generated from $\mathsf{real}^n$, $1$, and $0$ by finite
products and variants is isomorphic, by repeated distributivity, to a finite
sum of finite products of Euclidean ground types. The tuple-and-variant
correctness theorem of expressive CHAD therefore applies componentwise
~\cite[Theorem~124]{LucatelliNunesVakar2023}, and correctness transports along
these canonical structural isomorphisms.

\subsection{Projector-compatible ambient correctness}

For a data type $\tau$ and value $x$, let $T^*_\tau(x)$ be the genuine
cotangent fibre and $\widehat T^*_\tau$ its ambient cotangent space. Using
\Cref{thm:comparison}, identify $T^*_\tau(x)$ with the fixed-point subspace
$\Fix(p_{\tau,x})\subseteq\widehat T^*_\tau$, and use the resulting splitting
$$
T^*_\tau(x)
\xrightarrow{\iota_{\tau,x}}
\widehat T^*_\tau
\xrightarrow{\rho_{\tau,x}}
T^*_\tau(x),
\qquad
\rho_{\tau,x}\iota_{\tau,x}=1,
\quad
\iota_{\tau,x}\rho_{\tau,x}=p_{\tau,x}.
$$
An ambient extension $h$ of
$k:T^*_\sigma(f(x))\to T^*_\tau(x)$ is \emph{projector compatible} when
$h=p_{\tau,x}hp_{\sigma,f(x)}$ and
$\rho_{\tau,x}h\iota_{\sigma,f(x)}=k$.

\begin{lemma}[Unique projector-compatible extension]
\label{lem:unique-ambient-extension}
The unique projector-compatible extension of $k$ is
$$
h=\iota_{\tau,x}k\rho_{\sigma,f(x)}.
$$
\end{lemma}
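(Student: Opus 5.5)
The argument has two parts, existence and uniqueness, and uses only the two splitting equations $\rho_{\tau,x}\iota_{\tau,x}=1$ and $\iota_{\tau,x}\rho_{\tau,x}=p_{\tau,x}$, together with their analogues at $\sigma,f(x)$. In Karoubi terms the lemma says that the splitting isomorphism identifies the hom-set $\Kar$-maps $(\widehat T^*_\sigma,p_{\sigma,f(x)})\to(\widehat T^*_\tau,p_{\tau,x})$ with the maps $T^*_\sigma(f(x))\to T^*_\tau(x)$. We will argue directly, without appealing to that identification.

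\emph{Existence.} Set $h=\iota_{\tau,x}k\rho_{\sigma,f(x)}$. For the sandwich equation, rewrite the projectors as $p_{\tau,x}=\iota_{\tau,x}\rho_{\tau,x}$ and $p_{\sigma,f(x)}=\iota_{\sigma,f(x)}\rho_{\sigma,f(x)}$. Then
$$
p_{\tau,x}hp_{\sigma,f(x)}
=\iota_{\tau,x}(\rho_{\tau,x}\iota_{\tau,x})k(\rho_{\sigma,f(x)}\iota_{\sigma,f(x)})\rho_{\sigma,f(x)},
$$
and each bracketed factor is an identity, so the right-hand side equals $h$. For the restriction condition,
$$
\rho_{\tau,x}h\iota_{\sigma,f(x)}
=(\rho_{\tau,x}\iota_{\tau,x})k(\rho_{\sigma,f(x)}\iota_{\sigma,f(x)})=k.
$$

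\emph{Uniqueness.} Let $h$ be any projector-compatible extension. Substituting the factorizations of the projectors into the sandwich equation gives
$$
h=p_{\tau,x}hp_{\sigma,f(x)}
=\iota_{\tau,x}\bigl(\rho_{\tau,x}h\iota_{\sigma,f(x)}\bigr)\rho_{\sigma,f(x)}
=\iota_{\tau,x}k\rho_{\sigma,f(x)}.
$$
The second equality is associativity, and the third uses the restriction condition. So $h$ must be the displayed map.

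Both steps are routine. The only place needing care is the bookkeeping of fibres. The projector on the codomain side of $h$ must be $p_{\sigma,f(x)}$, not $p_{\sigma,x}$, because $h$ is a reverse (cotangent) map from the output fibre over $f(x)$ to the input fibre over $x$. Once the indices are tracked correctly, the argument is purely equational and does not need \Cref{thm:comparison} beyond the chosen splitting.
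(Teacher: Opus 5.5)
Your proof is correct and follows the paper's argument. The paper likewise verifies existence directly from $\rho\iota=1$. For uniqueness it substitutes $p=\iota\rho$ into the sandwich equation, writing $h=p_{\tau,x}hp_{\sigma,f(x)}=\iota_{\tau,x}(\rho_{\tau,x}h\iota_{\sigma,f(x)})\rho_{\sigma,f(x)}=\iota_{\tau,x}k\rho_{\sigma,f(x)}$, just as you do.
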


\begin{proof}
The displayed map has the required restriction and sandwich equation. Conversely,
$h=p_\tau hp_\sigma=\iota_\tau(\rho_\tau h\iota_\sigma)\rho_\sigma$.
\end{proof}

\begin{theorem}[Ambient correctness for reverse CHAD]
\label{thm:ambient}
Let $t:\tau\to\sigma$ have domain and codomain among the data types generated
from $\mathsf{real}^n$, $1$, and $0$ by finite products and variants. Under the
usual differentiability and codifferential hypotheses on primitives, the
denotation $f=\llbracket t\rrbracket$ is componentwise differentiable and the
backward component of the erased nondependent transformation is
$$
\widehat R_t(x)
=
\iota_{\tau,x}Df(x)^{\mathsf T}\rho_{\sigma,f(x)}.
$$
It is the unique projector-compatible ambient extension of
$Df(x)^{\mathsf T}$.
\end{theorem}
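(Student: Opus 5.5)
The plan is to combine the comparison theorem with the uniqueness lemma. The key observation is that the erased backward component $\widehat R_t(x)$ is literally the ambient linear component of the Karoubi arrow $\DKar(t)$, so it inherits two kinds of information: a sandwich equation from the Karoubi structure and a restriction property from realization.

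First, by \Cref{prop:semi}, $\Dst(t)=e_\sigma\circ\Dst(t)\circ e_\tau$. Since $\tau$ and $\sigma$ are data types, $e_\tau=(1,p_\tau)$ and $e_\sigma=(1,p_\sigma)$. The primal component of $\Dst(t)$ is then exactly $f$: the primal part of $e_\tau$ is the identity, and the primal part of $\DKar$ agrees with $\llbracket t\rrbracket$ by \Cref{thm:comparison}, since the primal component of $\Ddep$ is the ordinary denotation. Reverse-mode composition computes the backward component of $e_\sigma\,\Dst(t)\,e_\tau$ at $x$ as $p_{\tau,x}\,\widehat R_t(x)\,p_{\sigma,f(x)}$. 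Hence $\widehat R_t(x)=p_{\tau,x}\widehat R_t(x)p_{\sigma,f(x)}$, which is the sandwich half of projector compatibility.

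Second, I will compute $\rho_{\tau,x}\widehat R_t(x)\iota_{\sigma,f(x)}$. Realization $\Rlz$ splits the idempotents through their fixed-point subspaces. It sends a Karoubi arrow $h:(A,e)\to(B,d)$ to the family morphism whose fibre component at $x$ is the restriction $\rho\,h^*_x\,\iota$ between the split fibres; this is the explicit form of the extension along $Q$ recorded in the realization appendix. Applying this to $\DKar(t)$ and using the natural isomorphism $\llbracket-\rrbracket_{\mathrm{cpl}}\DKar\cong\Ddep$ of \Cref{thm:comparison}, the restriction equals the dependent backward component $\Ddep(t)^*_x$ up to the comparison isomorphisms. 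These isomorphisms are exactly the identifications of $T^*_\tau(x)$ with $\Fix(p_{\tau,x})$ fixed before \Cref{lem:unique-ambient-extension}.

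By the tuple-and-variant correctness theorem of expressive CHAD, transported along the distributivity isomorphisms as noted after \Cref{thm:comparison}, $f$ is componentwise differentiable and $\Ddep(t)^*_x=Df(x)^{\mathsf T}$. Thus $\widehat R_t(x)$ is a projector-compatible extension of $Df(x)^{\mathsf T}$. \Cref{lem:unique-ambient-extension} then gives both the formula $\widehat R_t(x)=\iota_{\tau,x}Df(x)^{\mathsf T}\rho_{\sigma,f(x)}$ and its uniqueness.

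The main obstacle is the second step: making precise that the coherent natural isomorphism of \Cref{thm:comparison}, at data types, consists of the canonical fixed-point identifications, with identity primal part. Only then does it transport $Df(x)^{\mathsf T}$ to the restriction of $\widehat R_t(x)$ without an extra automorphism of the fibres. I would settle this by induction on data types. At ground types the fibres are unchanged. At products and variants, the comparisons of \Cref{prop:standard-target-compatibility} are built from the direct-sum splittings, and so they coincide with $\iota,\rho$ blockwise.
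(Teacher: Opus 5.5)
Your proposal is correct and matches the paper's argument: you obtain the sandwich equation $\widehat R_t(x)=p_{\tau,x}\widehat R_t(x)p_{\sigma,f(x)}$ from the Karoubi structure, obtain the restriction $\rho_{\tau,x}\widehat R_t(x)\iota_{\sigma,f(x)}=Df(x)^{\mathsf T}$ from realization, \Cref{thm:comparison}, and expressive-CHAD correctness transported along distributivity, and then conclude by \Cref{lem:unique-ambient-extension}. The obstacle you flag is settled in the paper by definition: it takes $\iota_{\tau,x},\rho_{\tau,x}$ to be the splitting obtained through the comparison isomorphism, so any fibre automorphism is absorbed into them, and your inductive check that they are the canonical blockwise fixed-point maps is a harmless refinement rather than a required step.
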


\begin{proof}
Dependent correctness and \Cref{thm:comparison} identify the restriction of the
ambient map with $Df(x)^{\mathsf T}$; the Karoubi sandwich equation and
\Cref{lem:unique-ambient-extension} force the displayed formula. The expanded
calculation and a branch-sensitive example are in Appendix~\ref{app:correctness}.
\end{proof}

In particular, $\widehat R_{1_\tau}(x)=p_{\tau,x}$, while adjacent
retraction--section pairs cancel under composition.

\section{Related work and scope}

Basic and expressive CHAD provide, respectively, the higher-order
semantics-driven transformation and the dependent-family correctness result used
here~\cite{Vakar2021,VakarSmeding2022,LucatelliNunesVakar2023}. Efficient and
iterative CHAD study sparse accumulation and partial or iterative computation
~\cite{SmedingVakar2024,LucatelliNunesVakar2024ML,LucatelliNunesPlotkinVakar2025}.
For sums, Paszke and Plotkin likewise use dependent linear types selected by the
active branch~\cite{PaszkePlotkin2023}. Our contribution is the denotational
equivalence between such fibres and ordinary ambient types with idempotents,
including the reconstructed bicartesian closed structure and the uniquely
compatible ambient derivative; it is not a new asymptotic AD algorithm.

Other compositional accounts use lenses, reverse derivative categories,
continuations, control, or linear negation
~\cite{JohnsonRosebrughWood2012,FongSpivakTuyeras2019,CockettEtAl2020,CruttwellEtAl2022,PearlmutterSiskind2008,WangEtAl2019,BrunelMazzaPagani2020}.
Karoubi completion and semifunctors also occur in denotational and quantum
semantics~\cite{Scott1976,Hayashi1985,Hoofman1993,HoofmanMoerdijk1995,Selinger2008}.
Our correctness theorem concerns exactly finite products and variants of
Euclidean ground types.

\section{Conclusion}
\label{sec:conclusion}

Idempotent completion turns ambient simply typed cotangents into a bicartesian
closed model of variants equivalent to the dependent family semantics.
Realization recovers dependent CHAD, erasure yields its ordinary semifunctorial
program, and the sandwich equations force the unique ambient derivative
extension. The forward dual is given in the appendix.

\paragraph{Funding.}
This research was supported by the ERC project FoRECAST. The first author was
also supported by the Deutsche Forschungsgemeinschaft (DFG, German Research
Foundation) through the project Higher-Order Monad-based Programming and
Reasoning (HOMBRe), project number 501369690, led by Sergey Goncharov, during
his postdoctoral appointment at the School of Computer Science, University of
Birmingham. He was further supported by the Fields Institute for Research in
Mathematical Sciences through a Fields Research Fellowship in 2023, and by the
Centre for Mathematics of the University of Coimbra (CMUC) under the
Funda\c{c}\~ao para a Ci\^encia e a Tecnologia (FCT), through the grants
UID/00324/2025 and UID/PRR/00324/2025.

\paragraph{Data availability.}
This work is theoretical and does not use or generate research data.

\clearpage
\appendix
\crefname{section}{Appendix}{Appendices}
\Crefname{section}{Appendix}{Appendices}
\crefname{subsection}{Appendix}{Appendices}
\Crefname{subsection}{Appendix}{Appendices}
\renewcommand*{\theHsection}{app.\Alph{section}}
\renewcommand*{\theHsubsection}{app.\Alph{section}.\arabic{subsection}}
\section{Key ideas: from dependent fibres to simply typed programs}
\label{app:key-ideas}

This appendix gives a conceptual map of the construction for readers coming from
programming languages, category theory, or automatic differentiation.  It omits
most universal-property proofs, which begin in \Cref{app:size}, and concentrates
on what changes---and what does not---when dependent cotangent fibres are
presented by ordinary simply typed carriers.

\subsection{The semantic mismatch}

A reverse-mode value has two parts: a primal value and a space in which
cotangents for that value live.  For a source type without variants, one may use
one cotangent space for every primal value.  This gives the constant-family
model
$$
\Copow(\Vectop).
$$
A variant value is different.  At $\mathsf{inl}(a)$ only a left cotangent is
valid, while at $\mathsf{inr}(b)$ only a right cotangent is valid.  The intrinsic
semantic object is therefore a family
$$
x\longmapsto T^*_\tau(x),
$$
represented in $\Fam(\Vectop)$.  The dependence is essential: it expresses
which cotangents are meaningful at the current primal value.

The implementation problem is that a direct internal language for such a family
uses a cotangent type depending on the primal term.  Ordinary implementations,
however, prefer one target type for every source type.

\subsection{Keep the dependency as a projector}

Choose one ambient space $\widehat T^*_\tau$ large enough to contain every fibre
$T^*_\tau(x)$ as a retract.  For each primal value $x$, choose maps
$$
T^*_\tau(x)
\xrightarrow{\iota_{\tau,x}}
\widehat T^*_\tau
\xrightarrow{\rho_{\tau,x}}
T^*_\tau(x),
\qquad
\rho_{\tau,x}\iota_{\tau,x}=1.
$$
Their composite
$$
p_{\tau,x}=\iota_{\tau,x}\rho_{\tau,x}
$$
is an idempotent on the single ambient type.  Its image is the valid cotangent
fibre.  Thus a dependent family can be presented by two simply typed pieces of
data:
$$
\text{one ambient carrier}
\quad+\quad
\text{a primal-indexed projector}.
$$
For a sum, the ambient carrier is a biproduct and the projector selects the
active branch:
$$
\begin{aligned}
p_{\tau+\sigma}(\mathsf{inl}(x),(v,w))&=(p_\tau(x,v),0),\\
p_{\tau+\sigma}(\mathsf{inr}(y),(v,w))&=(0,p_\sigma(y,w)).
\end{aligned}
$$
Nothing dependent has been approximated or forgotten; its information has moved
from the target type into an idempotent target term.

\subsection{Why the Cauchy completion is the right category}

The ordinary constant-family category sees the ambient carrier but treats its
identity map as the identity.  The semantic fibre, however, behaves as if the
projector $p_x$ were the identity: valid values are fixed by $p_x$, and valid
maps satisfy sandwich equations.  The Karoubi envelope makes this viewpoint
literal.  An object is a pair $(A,e)$ with $e^2=e$, its identity is $e$, and an
arrow $h:(A,e)\to(B,d)$ satisfies
$$
h=dhe.
$$
Consequently,
$$
\Fam(\Vectop)
\simeq
\Kar\bigl(\Copow(\Vectop)\bigr)
$$
expresses that an explicit family of fibres and an ambient carrier with
projectors are two presentations of the same semantic object.  The relevant
completion is $\Kar(\Copow(\Vectop))$, not
$\Kar(\Fam(\Vectop))$: the latter is already equivalent to
$\Fam(\Vectop)$ because the family category is Cauchy complete.

The general theorem isolates the precise hypotheses.  The constant-family
inclusion extends along idempotent completion to an equivalence with all
$\kappa$-small families if and only if $\mathcal C$ is Cauchy complete and every
such family has a common retract host.
For vector spaces, the host is a direct sum and every idempotent splits through
its image.

\subsection{One transformation, three views}

The construction has three compatible levels:
$$
\begin{array}{ccl}
\Fam(\Vectop)
&\simeq&
\Kar(\Copow(\Vectop))
\quad\semiarrow\quad
\Copow(\Vectop)\\[2mm]
\text{dependent fibres}
&&
\text{ambient carrier + identities}
\qquad
\text{ambient carrier only.}
\end{array}
$$
The left presentation is best for stating the mathematical semantics and proving
logical-relations correctness.  The middle presentation is best for preserving
products, variants, and higher-order functions: it is an ordinary category in
which the projectors are the identities of the represented fibres.  The right
presentation is the executable simply typed program obtained by forgetting only
that distinguished categorical role.  The corresponding idempotent calls remain in the code.

This final passage is a semifunctor rather than a functor.  Composition is still
strict, because composition in the Karoubi envelope is ambient composition.
Identities become the generated idempotents:
$$
\Dst(1_\tau)=e_\tau.
$$
At data types, $e_\tau=(1,p_\tau)$, so the primal identity is unchanged
while the cotangent identity discards invalid coordinates.

\subsection{Variants and higher-order functions}

For variants, completion creates the missing coproduct.  Its ordinary ambient
type is
$$
(A+B,V\oplus W),
$$
but its identity is the branch-sensitive idempotent.  The coproduct injections
and case analysis are obtained by sandwiching the ordinary terms with the source
and target idempotents.  This is why the resulting clauses satisfy the
coproduct equations at the completed level even though the ordinary
constant-cotangent category lacks coproducts.

At higher-order types, projecting only the cotangent coordinate is insufficient:
an ambient function--backpropagator package may itself fail to respect the
represented input and output fibres.  Cartesian closure generates the required
full idempotent.  Internally it sends
$$
h\longmapsto e_\sigma h e_\tau.
$$
Its fixed points are precisely the projector-compatible function packages.
Restricted evaluation and currying then give the higher-order translation.  No
separate ad hoc rule is needed.

\subsection{Why the ambient derivative is forced}

Let $h:\widehat T^*_\sigma\to\widehat T^*_\tau$ be an ambient backpropagator.
Correctness on the genuine fibres requires
$$
\rho_{\tau,x}h\iota_{\sigma,f(x)}=Df(x)^{\mathsf T},
$$
while being a completed arrow requires
$$
h=p_{\tau,x}hp_{\sigma,f(x)}.
$$
Substituting $p=\iota\rho$ leaves no freedom:
$$
h
=
\iota_{\tau,x}Df(x)^{\mathsf T}\rho_{\sigma,f(x)}.
$$
The formula explains the whole program: project the incoming ambient cotangent
to the valid output fibre, apply the genuine transposed derivative, and embed
the result into the ambient input carrier.

It also explains the apparently asymmetric functorial laws.  For an identity,
$Df(x)^{\mathsf T}=1$ and hence $h=p_{\tau,x}$.  For a composite, the middle
retraction--section pair cancels:
$$
\rho_{\sigma,f(x)}\iota_{\sigma,f(x)}=1,
$$
so the ordinary chain rule yields strict preservation of composition.

\subsection{Semantics versus representation}

The theorem does not prescribe a dense run-time representation.  An ambient
direct sum is the semantic vector space in which linear accumulation is defined;
an implementation may realise it sparsely, use tagged carriers, or apply the
accumulation techniques of Efficient CHAD\@.  The completion result supplies the
missing semantic invariant: whatever representation is chosen must implement
the projectors and the sandwich equations.  In this sense, dependent and simply
typed reverse AD are not competing algorithms.  They are the same
semantics-driven transformation viewed through explicit fibres, completed
ambient carriers, and executable ordinary terms.

\section{Size conventions and notation}
\label{app:size}

Fix a regular infinite cardinal $\kappa$. A set is $\kappa$-small when its cardinality is less than $\kappa$. The subscripted constructions $\Fam_\kappa$ and $\Copow_\kappa$ use $\kappa$-small index sets; regularity ensures that a $\kappa$-small coproduct of $\kappa$-small sets is again $\kappa$-small. These assumptions are all that the completion theorem requires. For the unadorned higher-order semantic categories $\Fam$ and $\Copow$, we fix a Grothendieck universe of small sets large enough to contain the relevant function sets and direct sums. Thus the cartesian-closed constructions below are universe-small without imposing an unsupported closure property on an arbitrary regular infinite cardinal. The finite version is obtained by replacing the $\kappa$-small doctrine with finite sets throughout. Our conventions for free completions, enriched copowers, and Cauchy completion follow the standard accounts in~\cite{Kelly1982,MacLane1998,BorceuxDejean1986}.

We write $\Vect$ for real vector spaces and linear maps, $\Vectop$ for its opposite, and $\Setstar$ for pointed sets and basepoint-preserving maps. Our section--retraction notation is always internal to the category under discussion. In particular, in $\Vectop$ a section is represented by a projection in $\Vect$, and a retraction by an inclusion in $\Vect$. For categories $\mathcal A$ and $\mathcal D$, the functor category is denoted $[\mathcal A,\mathcal D]$.

\section{Free completions by families and copowers}
\label{app:free-completions}

We begin with the two free completions whose difference creates the variant obstruction. Arbitrary families freely add coproducts, while constant families freely add only set-indexed copowers. These constructions are standard instances of free cocompletion~\cite{Kelly1982,AdamekRosicky2020}; family categories are also the semantic mechanism used for expressive CHAD and arise naturally in familial descent~\cite{LucatelliNunesVakar2023,Prezado2024}. Making both universal properties explicit fixes the variance and coherence conventions used throughout the later completion argument.

\subsection{The free coproduct completion}

\begin{definition}[Family category]
\label{def:fam-app}
The category $\Fam_\kappa(\mathcal C)$ has the following data.
\begin{itemize}
\item An object is a family $(X_i)_{i\in I}$, where $I$ is $\kappa$-small and $X_i\in\mathcal C$.
\item A morphism
$$
(u,\alpha):(X_i)_{i\in I}\longrightarrow(Y_j)_{j\in J}
$$
consists of a function $u:I\to J$ and arrows $\alpha_i:X_i\to Y_{u(i)}$.
\item Identities and composition are
$$
1_{(X_i)}=(1_I,(1_{X_i})_{i\in I}),
$$
$$
(v,\beta)\circ(u,\alpha)
=
(vu,(\beta_{u(i)}\circ\alpha_i)_{i\in I}).
$$
\end{itemize}
The singleton-family embedding is
$$
\eta_\mathcal C:\mathcal C\longrightarrow\Fam_\kappa(\mathcal C),
\qquad
X\longmapsto(X)_{*\in1}.
$$
\end{definition}

If $(X_i^a)_{i\in I_a}$ is a $\kappa$-small family of objects of $\Fam_\kappa(\mathcal C)$ indexed by $a\in A$, then its coproduct is concatenation:
$$
\coprod_{a\in A}(X_i^a)_{i\in I_a}
=
(X_i^a)_{(a,i)\in\coprod_{a\in A}I_a}.
$$
The injection from the $a$-th summand uses the inclusion $I_a\to\coprod_b I_b$ and identity component arrows.

The following universal property is the standard free completion by families; see, for example,~\cite[Section~5.3]{Kelly1982} and~\cite[Section~2]{AdamekRosicky2020}.

\begin{theorem}[Free $\kappa$-small coproduct completion]
\label{thm:fam-up-app}
Let $\mathcal D$ have chosen $\kappa$-small coproducts. Restriction along $\eta_\mathcal C$ induces an equivalence
$$
\eta_\mathcal C^*:
\operatorname{Fun}_{\amalg_\kappa}
\bigl(\Fam_\kappa(\mathcal C),\mathcal D\bigr)
\simeq
\operatorname{Fun}(\mathcal C,\mathcal D),
$$
where the left-hand side contains the functors preserving the chosen coproducts and all natural transformations between them.
\end{theorem}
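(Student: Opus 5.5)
We first show that $\Fam_\kappa(\mathcal C)$ has the concatenation coproducts and that they are preserved by the functors constructed below. Then we produce an inverse to $\eta_\mathcal C^*$ up to natural isomorphism.

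\emph{Coproducts in $\Fam_\kappa(\mathcal C)$.} Regularity of $\kappa$ makes $\coprod_a I_a$ a $\kappa$-small set. A family of morphisms
$$
(u_a,\alpha^a):(X^a_i)_{i\in I_a}\to(Y_j)_{j\in J}
$$
copairs uniquely to the morphism whose base function is $[u_a]$ and whose component at $(a,i)$ is $\alpha^a_i$. Uniqueness holds because a morphism out of the concatenation is determined by its restriction to each block. A key observation is that every object satisfies
$$
(X_i)_{i\in I}\cong\coprod_{i\in I}\eta_\mathcal C(X_i).
$$
Likewise, every morphism $(u,\alpha)$ is the copairing of the composites $\mathrm{inj}_{u(i)}\circ\eta_\mathcal C(\alpha_i)$.

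\emph{Essential surjectivity.} Given $F:\mathcal C\to\mathcal D$, define the extension on objects by
$$
\widehat F\bigl((X_i)_{i\in I}\bigr)=\coprod_{i\in I}FX_i,
$$
using the chosen coproducts of $\mathcal D$. On a morphism $(u,\alpha)$, let $\widehat F(u,\alpha)$ be the unique map whose restriction to the $i$-th summand is $\mathrm{inj}_{u(i)}\circ F\alpha_i$. Functoriality follows from uniqueness of copairings: both sides of each functoriality equation agree on every coprojection.

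To see that $\widehat F$ preserves the chosen coproducts, compare $\widehat F$ of a concatenation, which is $\coprod_{(a,i)}FX^a_i$, with the iterated coproduct $\coprod_a\coprod_{i\in I_a}FX^a_i$. The canonical comparison between them is an isomorphism by associativity of coproducts. We should note that ``preserving the chosen coproducts'' is meant up to these canonical isomorphisms. Finally, $\widehat F\eta_\mathcal C\cong F$ via the unary coproduct isomorphism $\coprod_{*\in1}FX\cong FX$.

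\emph{Full faithfulness.} Let $G,H$ be coproduct-preserving functors and let $\theta:G\eta_\mathcal C\Rightarrow H\eta_\mathcal C$ be a natural transformation. Using the isomorphism $(X_i)\cong\coprod_i\eta_\mathcal C X_i$ together with the preservation isomorphisms, define the component at $(X_i)$ by
$$
\widetilde\theta_{(X_i)}=\coprod_i\theta_{X_i},
$$
conjugated by those isomorphisms. For naturality with respect to $(u,\alpha)$, it suffices to check after precomposition with each $G(\mathrm{inj}_i)$. Once $(u,\alpha)$ is written as the copairing described above, this reduces to naturality of $\theta$ at $\alpha_i$ and naturality of the preservation isomorphisms at coprojections. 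Uniqueness holds because any extension $\widetilde\theta$ must be compatible with the coprojections, and is therefore determined by its singleton components, which are given by $\theta$. Hence $\eta_\mathcal C^*$ is fully faithful, and so it is an equivalence.

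The main obstacle is bookkeeping rather than conceptual. One has to handle the coherence between chosen coproducts and preservation isomorphisms, in particular checking that $\widehat F$ preserves concatenations, including the empty family, which is sent to the chosen initial object. The cleanest approach is to run all verifications "on coprojections" through the universal property.
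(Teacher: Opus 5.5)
Your proof is correct and follows essentially the paper's argument: the same extension $\coprod_{i\in I}FX_i$ defined on morphisms through coprojections, the same decomposition of every family as a coproduct of singleton families, and the same observation that natural transformations between coproduct-preserving functors are determined by their singleton components. The only difference is packaging. You verify full faithfulness and essential surjectivity of restriction, whereas the paper builds the extension functor on natural transformations as an explicit quasi-inverse. This is a matter of bookkeeping, not a different route.
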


\begin{proof}
Given $F:\mathcal C\to\mathcal D$, define
$$
\overline F((X_i)_{i\in I})
=
\coprod_{i\in I}F(X_i).
$$
For $(u,\alpha):(X_i)\to(Y_j)$, define $\overline F(u,\alpha)$ by the coproduct universal property: its restriction to the $i$-th summand is
$$
F(X_i)
\xrightarrow{F(\alpha_i)}
F(Y_{u(i)})
\xrightarrow{\iota_{u(i)}}
\coprod_{j\in J}F(Y_j).
$$
The restrictions of $\overline F(1)$ and of the identity agree on every summand, so they are equal. For composable $(u,\alpha)$ and $(v,\beta)$, the restrictions of $\overline F(v,\beta)\overline F(u,\alpha)$ and $\overline F(vu,\beta\alpha)$ to the $i$-th summand are both
$$
F(X_i)
\xrightarrow{F(\alpha_i)}
F(Y_{u(i)})
\xrightarrow{F(\beta_{u(i)})}
F(Z_{v(u(i))})
\xrightarrow{\iota_{v(u(i))}}
\coprod_kF(Z_k).
$$
Hence $\overline F$ is a functor. It preserves the chosen coproducts because concatenating the indexing sets corresponds to reassociating the chosen coproducts in $\mathcal D$.

A natural transformation $\theta:F\Rightarrow G$ extends to
$$
\overline\theta_{(X_i)}=\coprod_i\theta_{X_i}.
$$
Naturality follows by checking each coproduct injection. This defines a functor from the right-hand functor category to the left-hand one.

Conversely, let $H:\Fam_\kappa(\mathcal C)\to\mathcal D$ preserve the chosen coproducts. Every family is canonically a coproduct of singleton families:
$$
(X_i)_{i\in I}
\cong
\coprod_{i\in I}\eta_\mathcal C(X_i).
$$
Therefore
$$
H((X_i)_{i\in I})
\cong
\coprod_{i\in I}H(\eta_\mathcal C X_i),
$$
naturally in the family. Thus $H$ is naturally isomorphic to the extension of $H\eta_\mathcal C$. The same decomposition shows that a natural transformation between coproduct-preserving functors is uniquely determined by its components on singleton families. Restriction and extension are therefore mutually inverse equivalences.
\end{proof}

The family construction is the standard free coproduct completion used in the semantics of expressive CHAD~\cite{LucatelliNunesVakar2023}; see also the general free-cocompletion accounts in~\cite{Kelly1982,AdamekRosicky2020} and the use of family categories in familial descent~\cite{Prezado2024}.

The constant-family subcategory retains the same morphism calculus but suppresses variation in the fibre object. Its universal property is correspondingly weaker: it freely supplies copowers, rather than arbitrary coproducts.

\subsection{The free copower completion}

\begin{definition}[Constant-family category]
\label{def:copow-app}
The category $\Copow_\kappa(\mathcal C)$ is the full subcategory of $\Fam_\kappa(\mathcal C)$ on constant families. An object is written $(I,X)$. A morphism
$$
(u,\alpha):(I,X)\longrightarrow(J,Y)
$$
consists of a function $u:I\to J$ and arrows $\alpha_i:X\to Y$ for $i\in I$.
\end{definition}

The singleton embedding is
$$
\iota_\mathcal C:\mathcal C\longrightarrow\Copow_\kappa(\mathcal C),
\qquad
X\longmapsto(1,X).
$$
For a $\kappa$-small set $S$, define
$$
S\cdot(I,X)=(S\times I,X).
$$
There is a natural bijection
$$
\Copow_\kappa(\mathcal C)
\bigl(S\cdot(I,X),(J,Y)\bigr)
\cong
\Set\Bigl(S,\Copow_\kappa(\mathcal C)((I,X),(J,Y))\Bigr),
$$
obtained by currying the base function $S\times I\to J$ and the $S\times I$-indexed component arrows. Thus $S\cdot(I,X)$ is the copower by $S$.

The analogous copower completion is standard enriched-category theory; see~\cite[Sections~3.7 and~4.1]{Kelly1982}.

\begin{theorem}[Free completion under $\kappa$-small copowers]
\label{thm:copow-up-app}
Let $\mathcal D$ have chosen copowers by $\kappa$-small sets. Restriction along $\iota_\mathcal C$ induces an equivalence
$$
\iota_\mathcal C^*:
\operatorname{Fun}_{\odot_\kappa}
\bigl(\Copow_\kappa(\mathcal C),\mathcal D\bigr)
\simeq
\operatorname{Fun}(\mathcal C,\mathcal D).
$$
\end{theorem}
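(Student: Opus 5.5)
The plan mirrors the proof of \Cref{thm:fam-up-app}, with copowers by $\kappa$-small sets replacing coproducts. Given $F:\mathcal C\to\mathcal D$, I set
$$
\overline F(I,X)=I\odot F(X),
$$
the chosen copower in $\mathcal D$. A morphism $(u,\alpha):(I,X)\to(J,Y)$ is sent to the arrow $I\odot FX\to J\odot FY$ that corresponds, under the copower bijection $\mathcal D(I\odot FX,Z)\cong\Set(I,\mathcal D(FX,Z))$, to the function
$$
i\longmapsto \mathrm{in}_{u(i)}\circ F(\alpha_i),
$$
where $\mathrm{in}_j:FY\to J\odot FY$ corresponds to $j$ under the unit of the copower. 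Functoriality then reduces to checking the assigned $I$-indexed families of arrows out of $FX$, exactly as summands were checked in the family case. For identities one compares $i\mapsto \mathrm{in}_i$ with the family corresponding to $1_{I\odot FX}$. For composites, both sides give $i\mapsto \mathrm{in}_{v(u(i))}F(\beta_{u(i)})F(\alpha_i)$, because the copower bijection is natural in $Z$: postcomposition acts pointwise on the indexing functions.

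Next I show $\overline F$ preserves the chosen copowers. We have
$$
\overline F(S\cdot(I,X))=(S\times I)\odot FX,
\qquad
S\odot\overline F(I,X)=S\odot(I\odot FX).
$$
The comparison between these two comes from the iterated-copower isomorphism, which holds because both objects represent $\Set(S\times I,\mathcal D(FX,-))\cong\Set(S,\Set(I,\mathcal D(FX,-)))$. I would also check that this comparison is the canonical map induced by $\overline F$ on the copower structure maps. Natural transformations $\theta:F\Rightarrow G$ extend as $I\odot\theta_X$, and naturality is verified on the indexing families.

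For the converse I need, for every copower-preserving $H$, an isomorphism $H\cong\overline{H\iota_{\mathcal C}}$. The key observation is that every object is canonically a copower of a singleton:
$$
(I,X)\cong I\cdot\iota_{\mathcal C}(X).
$$
Hence
$$
H(I,X)\cong I\odot H\iota_{\mathcal C}(X).
$$
This isomorphism must be natural in arbitrary morphisms $(u,\alpha)$, not just in maps of the form $u\times1$. This is the main obstacle. The reason it is delicate is that a general morphism has components $\alpha_i$ that vary with $i$, so it is not a copower of a single arrow. I would handle it by observing that $(u,\alpha)$ is the copairing, via the copower universal property in $\Copow_\kappa(\mathcal C)$, of the family of composites
$$
\iota_{\mathcal C}X\xrightarrow{\iota_{\mathcal C}(\alpha_i)}\iota_{\mathcal C}Y\xrightarrow{\mathrm{in}_{u(i)}}(J,Y).
$$
Since $H$ preserves copowers, $H$ sends this copairing to the corresponding copairing in $\mathcal D$, and that copairing is exactly $\overline{H\iota_{\mathcal C}}(u,\alpha)$.

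The same decomposition shows that a transformation between copower-preserving functors is determined by its components on singletons, which gives full faithfulness of $\iota_{\mathcal C}^*$. Essential surjectivity holds because $\overline F\iota_{\mathcal C}\cong F$ via $1\odot FX\cong FX$. Together these make $\iota_{\mathcal C}^*$ an equivalence.
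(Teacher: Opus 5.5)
Your proposal is correct and follows essentially the same route as the paper. Both arguments extend $F$ by $\overline F(I,X)=I\odot FX$, with morphisms given by copairing the composites $\mathrm{in}_{u(i)}F(\alpha_i)$, use the iterated-copower isomorphism for copower preservation, and use $(I,X)\cong I\cdot\iota_{\mathcal C}(X)$ for the converse and for natural transformations. Your explicit check that $H\cong\overline{H\iota_{\mathcal C}}$ is natural in arbitrary $(u,\alpha)$ fills in a step the paper only asserts; it works by writing $(u,\alpha)$ as the copairing of the maps $\mathrm{in}_{u(i)}\iota_{\mathcal C}(\alpha_i)$.
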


\begin{proof}
Given $F:\mathcal C\to\mathcal D$, define
$$
\overline F(I,X)=I\cdot F(X).
$$
For $(u,\alpha_i):(I,X)\to(J,Y)$, define $\overline F(u,\alpha)$ by copairing, over $i\in I$, the maps
$$
F(X)
\xrightarrow{F(\alpha_i)}
F(Y)
\xrightarrow{\iota_{u(i)}}
J\cdot F(Y).
$$
Identity and composition follow by checking the copower injections. The canonical associativity isomorphism
$$
S\cdot(I\cdot F(X))
\cong
(S\times I)\cdot F(X)
$$
shows that $\overline F$ preserves the chosen copowers.

Conversely, every object $(I,X)$ is the copower $I\cdot(1,X)$. Hence a copower-preserving functor is naturally isomorphic to the extension of its restriction to singleton objects. Natural transformations are likewise determined by their singleton components. This proves the equivalence.
\end{proof}

\section{Karoubi completion and semifunctors}
\label{app:kar-semi}

The two free completions above add different colimit structure. The Karoubi envelope performs a different operation: it adds the retracts already encoded by idempotents. We use the classical universal description of Cauchy completion~\cite{Kelly1982,BorceuxDejean1986,MacLane1998} and the semifunctorial perspective developed in programming-language semantics by Hayashi, Hoofman, and Hoofman--Moerdijk~\cite{Hayashi1985,Hoofman1993,HoofmanMoerdijk1995}. This is precisely the operation that turns an ambient cotangent carrier with projectors into its family of genuine fibres.

\subsection{The Karoubi envelope}

\begin{definition}[Karoubi envelope]
\label{def:kar-app}
For a category $\mathcal A$, the objects of $\Kar(\mathcal A)$ are pairs $(A,e)$ with $e:A\to A$ idempotent. A morphism
$$
f:(A,e)\longrightarrow(B,d)
$$
is an arrow $f:A\to B$ in $\mathcal A$ satisfying
$$
f=dfe.
$$
The identity on $(A,e)$ is $e$, and composition is inherited from $\mathcal A$.
\end{definition}

The condition $f=dfe$ is equivalent to the pair of equations $f=fe=df$. The canonical embedding is
$$
Q_\mathcal A:\mathcal A\longrightarrow\Kar(\mathcal A),
\qquad
A\longmapsto(A,1_A).
$$
Every $(A,e)$ is a retract of $Q_\mathcal A(A)$: both the section and the retraction are represented by $e$, and their composite on $(A,e)$ is $e$, the identity of that Karoubi object.

A category is Cauchy complete, or idempotent complete, when every idempotent splits. For ordinary categories this is equivalent to admitting all absolute colimits~\cite{BorceuxDejean1986,Kelly1982,MacLane1998}.

The following universal property is the classical description of Cauchy completion by splitting idempotents~\cite{BorceuxDejean1986,Kelly1982,MacLane1998}.

\begin{theorem}[Universal property of the Karoubi envelope]
\label{thm:kar-up-app}
If $\mathcal D$ is Cauchy complete, restriction along $Q_\mathcal A$ induces an equivalence
$$
Q_\mathcal A^*:
[\Kar(\mathcal A),\mathcal D]
\simeq
[\mathcal A,\mathcal D].
$$
\end{theorem}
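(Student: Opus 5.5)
The plan is to build an explicit quasi-inverse to restriction. Given a functor $F:\mathcal A\to\mathcal D$ with $\mathcal D$ Cauchy complete, we first choose, for every object $(A,e)$ of $\Kar(\mathcal A)$, a splitting of the idempotent $F(e)$:
$$
F(A)\xrightarrow{q_{(A,e)}}\overline F(A,e)\xrightarrow{m_{(A,e)}}F(A),
\qquad
q_{(A,e)}m_{(A,e)}=1,
\quad
m_{(A,e)}q_{(A,e)}=F(e).
$$
For the objects $(A,1_A)$ we choose the trivial splitting, with $\overline F(A,1_A)=F(A)$ and both maps identities. This choice makes $\overline F Q_{\mathcal A}=F$ strictly.

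On arrows, a Karoubi arrow $f:(A,e)\to(B,d)$ is sent to $\overline F(f)=q_{(B,d)}F(f)m_{(A,e)}$. The identity on $(A,e)$ is $e$, and
$$
q\,F(e)\,m=q\,m\,q\,m=1.
$$
For composition with $g:(B,d)\to(C,c)$ we compute
$$
q\,F(g)\,m_{(B,d)}\,q_{(B,d)}\,F(f)\,m=q\,F(g)F(d)F(f)\,m=q\,F(gdf)\,m=q\,F(gf)\,m,
$$
using $g=gd$. So $\overline F$ is a functor extending $F$. A natural transformation $\theta:F\Rightarrow G$ extends by
$$
\overline\theta_{(A,e)}=q^G_{(A,e)}\,\theta_A\,m^F_{(A,e)}.
$$
Naturality follows from naturality of $\theta$ together with the sandwich equations, and restriction recovers $\theta$ because the splittings at identity idempotents are trivial. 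This gives a functor $[\mathcal A,\mathcal D]\to[\Kar(\mathcal A),\mathcal D]$ whose composite with $Q_{\mathcal A}^*$ is the identity.

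It remains to show $H\cong\overline{HQ_{\mathcal A}}$ for an arbitrary $H:\Kar(\mathcal A)\to\mathcal D$, naturally in $H$; this is the main step. The key observation, recorded after \Cref{def:kar-app}, is that in $\Kar(\mathcal A)$ the object $(A,e)$ is a retract of $(A,1_A)$. The maps $e:(A,e)\to(A,1)$ and $e:(A,1)\to(A,e)$ compose to the identity $e$ of $(A,e)$ in one order, and to the idempotent $Q_{\mathcal A}(e)$ on $(A,1)$ in the other. Applying $H$, the pair $H(e:(A,1)\to(A,e))$ and $H(e:(A,e)\to(A,1))$ is therefore a splitting of $HQ_{\mathcal A}(e)$ through $H(A,e)$.

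Since any two splittings of the same idempotent are related by a unique compatible isomorphism, we obtain an isomorphism $H(A,e)\cong\overline{HQ_{\mathcal A}}(A,e)$, given by $q\circ H(e)$ with inverse $H(e)\circ m$. Naturality in $(A,e)$ and in $H$ is a short diagram chase using $f=dfe$ and uniqueness of comparison isomorphisms between splittings.

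Finally, full faithfulness of $Q_{\mathcal A}^*$ also follows from this retract argument. A transformation between functors on $\Kar(\mathcal A)$ is determined by its components on objects $(A,1)$, because each component at $(A,e)$ is forced by naturality with respect to the section/retraction pair. Hence $Q_{\mathcal A}^*$ is an equivalence.
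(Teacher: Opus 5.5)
Your proposal is correct and follows essentially the same route as the paper: chosen splittings (trivial on $(A,1_A)$), $\overline F(f)=q_dF(f)m_e$, transformations extended by $q^G\theta_Am^F$, and the fact that $(A,e)$ is a retract of $Q_{\mathcal A}(A)$ combined with uniqueness of splittings up to isomorphism. Your explicit isomorphism $H\cong\overline{HQ_{\mathcal A}}$ comes from applying $H$ to the two copies of $e$ to get a second splitting of $HQ_{\mathcal A}(e)$; it spells out what the paper compresses into the remark that different choices of splittings give naturally isomorphic extensions.
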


\begin{proof}
Let $F:\mathcal A\to\mathcal D$. For every idempotent $e:A\to A$, choose a splitting
$$
F(A)
\xrightarrow{q_e}
\overline F(A,e)
\xrightarrow{m_e}
F(A),
$$
with $m_eq_e=F(e)$ and $q_em_e=1$. Define
$$
\overline F(f)=q_dF(f)m_e
$$
for $f:(A,e)\to(B,d)$. The identity of $(A,e)$ is sent to
$$
q_eF(e)m_e
=
q_em_eq_em_e
=1.
$$
For composable $f:(A,e)\to(B,d)$ and $g:(B,d)\to(C,c)$,
$$
\begin{aligned}
\overline F(g)\overline F(f)
&=q_cF(g)m_dq_dF(f)m_e\\
&=q_cF(g)F(d)F(f)m_e\\
&=q_cF(gf)m_e\\
&=\overline F(gf),
\end{aligned}
$$
because $g=gd$ and $f=df$. Thus $\overline F$ is a functor. On objects $(A,1_A)$ choose the trivial splitting, so $\overline FQ_\mathcal A=F$.

If $\theta:F\Rightarrow G$, define
$$
\overline\theta_{(A,e)}=q_e^G\theta_A m_e^F.
$$
Naturality follows by inserting the sandwich equations and the naturality of $\theta$. Conversely, a natural transformation between extensions is determined by its components on $Q_\mathcal A(A)$ because every $(A,e)$ is a retract of such an object. Different choices of splittings produce naturally isomorphic extensions, through the unique isomorphisms commuting with the splitting maps. Hence restriction is an equivalence.
\end{proof}

\subsection{Semifunctors}

A \emph{semifunctor} $F:\mathcal X\semiarrow\mathcal A$ assigns objects and arrows and preserves composition, but is not required to preserve identities~\cite{Hoofman1993,HoofmanMoerdijk1995}. For every object $X$, put $e_X=F(1_X)$. Then
$$
e_X^2=F(1_X1_X)=e_X,
$$
and for $f:X\to Y$,
$$
F(f)=F(1_Yf1_X)=e_YF(f)e_X.
$$
Thus $F(f)$ is a Karoubi arrow $(FX,e_X)\to(FY,e_Y)$.

The correspondence below is the object-and-arrow form of the embedding of semifunctor theory into ordinary functor theory developed by Hayashi, Hoofman, and Hoofman--Moerdijk~\cite{Hayashi1985,Hoofman1993,HoofmanMoerdijk1995}.

\begin{proposition}[Semifunctor--Karoubi correspondence]
\label{prop:semi-kar-app}
There is a bijective correspondence on object-and-arrow assignments
$$
\left\{\text{semifunctors }\mathcal X\semiarrow\mathcal A\right\}
\longleftrightarrow
\left\{\text{functors }\mathcal X\to\Kar(\mathcal A)\right\}.
$$
It sends $F$ to the functor
$$
\widetilde F(X)=(FX,F1_X),
\qquad
\widetilde F(f)=F(f),
$$
and sends a Karoubi-valued functor to its underlying semifunctor.
\end{proposition}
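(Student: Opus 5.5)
The plan is to check directly that the two assignments are well defined and mutually inverse. Most of the work was done in the computation immediately preceding the statement; what remains is to confirm functoriality on the Karoubi side and that the round trips are identities on object-and-arrow data.

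\emph{From semifunctors to functors.} Let $F:\mathcal X\semiarrow\mathcal A$ be a semifunctor and put $e_X=F(1_X)$. The preceding computation already gives $e_X^2=e_X$, so $\widetilde F(X)=(FX,e_X)$ is an object of $\Kar(\mathcal A)$. It also gives $F(f)=e_YF(f)e_X$, so $\widetilde F(f)=F(f)$ is a Karoubi arrow $(FX,e_X)\to(FY,e_Y)$. Functoriality then follows from the definition of $\Kar(\mathcal A)$ in \Cref{def:kar-app}. The identity of $(FX,e_X)$ is $e_X=F(1_X)=\widetilde F(1_X)$, so identities are preserved. Composition in the envelope is composition in $\mathcal A$, and $F$ preserves composition, so composites are preserved.

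\emph{From functors to semifunctors.} Let $G:\mathcal X\to\Kar(\mathcal A)$ be a functor, and write $G(X)=(G_0X,d_X)$. Its underlying assignment $U G$ sends $X$ to $G_0X$ and $f$ to the ambient arrow $G(f)$. It preserves composition, again because composition in $\Kar(\mathcal A)$ is ambient composition. So $UG$ is a semifunctor. Functoriality of $G$ gives $UG(1_X)=G(1_X)=1_{(G_0X,d_X)}=d_X$.

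\emph{The two round trips.} Starting from $F$, we have $U\widetilde F=F$ immediately. Starting from $G$, we get $\widetilde{UG}(X)=(G_0X,UG(1_X))=(G_0X,d_X)=G(X)$ on objects, and both send an arrow $f$ to the same ambient map $G(f)$. Hence the two assignments are mutually inverse bijections.

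The only point needing care is not computational but notational. The idempotent attached to $\widetilde F(X)$ must be recovered from the semifunctor itself, and this is exactly the observation that a functor into $\Kar(\mathcal A)$ sends $1_X$ to the distinguished idempotent of $G(X)$. I expect no genuine obstacle. The statement is about object-and-arrow assignments, so the bijection is strict and involves no choices of splittings. In particular, unlike \Cref{thm:kar-up-app}, no natural isomorphisms appear.
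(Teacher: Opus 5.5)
Your proposal is correct and follows essentially the same route as the paper: check that $\widetilde F$ sends arrows to Karoubi arrows and preserves identities and composition, note that forgetting the idempotents of a Karoubi-valued functor leaves a composition-preserving assignment, and verify that the two passages are mutually inverse. Your use of $G(1_X)=d_X$ to recover the idempotent in the round trip $\widetilde{UG}=G$ just spells out the step the paper compresses into ``the two constructions are inverse.''
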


\begin{proof}
The equations above show that $\widetilde F(f)$ is a valid Karoubi arrow. It preserves identities because the identity of $(FX,F1_X)$ is $F1_X$, and it preserves composition because $F$ does. Conversely, forgetting the distinguished idempotents of a Karoubi-valued functor leaves a composition-preserving assignment. The two constructions are inverse.
\end{proof}

Taking as transformations the families $\theta_X:FX\to GX$ satisfying naturality and
$$
\theta_X=G1_X\,\theta_X\,F1_X
$$
yields an equivalence of categories between semifunctors with these transformations and ordinary functors into the Karoubi envelope.

\section{Coproducts, Cauchy completion, and ambient representations}
\label{app:completion-details}
\label{app:explicit-equivalence}

This appendix gives a self-contained proof of the categorical result used in
\Cref{sec:completion}.  The argument has three steps.  First, the common-host
condition produces explicit $\kappa$-small coproducts in the Karoubi envelope
of the constant-family category.  Second, the same condition is identified
with Cauchy density of the constant-family inclusion.  Finally, Cauchy density
and idempotent completeness yield the equivalence with the full family
category.  We conclude by recording the concrete passage between explicit
families and ambient carriers equipped with idempotents.

Throughout, $\kappa$ is a regular infinite cardinal and $\mathcal C$ is a
category.  The category $\Fam_\kappa(\mathcal C)$ has $\kappa$-small families
$(X_i)_{i\in I}$ as objects.  A morphism
$$
(u,\alpha):(X_i)_{i\in I}\longrightarrow(Y_j)_{j\in J}
$$
consists of a function $u:I\to J$ and arrows
$\alpha_i:X_i\to Y_{u(i)}$.  The full subcategory on the constant families is
$\Copow_\kappa(\mathcal C)$; its objects are written $(I,X)$.  We denote the
constant-family inclusion by
$$
J_{\mathcal C}:\Copow_\kappa(\mathcal C)
\hookrightarrow\Fam_\kappa(\mathcal C).
$$
An object of $\Kar(\Copow_\kappa(\mathcal C))$ is written
$$
E=((I,X),(u,p)),
$$
where $u^2=u$ and $p_{u(i)}p_i=p_i$ for every $i\in I$.

\subsection{Coproducts from common retract hosts}
\label{app:common-carrier-details}

Recall that a family $(X_a)_{a\in A}$ has a \emph{common retract host} if
there are an object $P$ and arrows
$$
X_a\xrightarrow{s_a}P\xrightarrow{r_a}X_a,
\qquad
r_as_a=1_{X_a},
$$
for every $a\in A$.

\begin{lemma}[Changing to a common carrier]
\label{lem:common-carrier-app}
Let
$$
E_a=((I_a,X_a),(u_a,p^a))
$$
be objects of $\Kar(\Copow_\kappa(\mathcal C))$, and suppose that the
objects $X_a$ have a common retract host $P$.  Put
$$
\widetilde p^a_i=s_ap^a_ir_a,
\qquad
\widetilde E_a=((I_a,P),(u_a,\widetilde p^a)).
$$
Then $(u_a,\widetilde p^a)$ is idempotent and
$E_a\cong\widetilde E_a$ in
$\Kar(\Copow_\kappa(\mathcal C))$.
\end{lemma}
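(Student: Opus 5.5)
We prove the lemma by a direct componentwise computation in $\Copow_\kappa(\mathcal C)$, recalling that composition there is $(v,\beta)\circ(u,\alpha)=(vu,(\beta_{u(i)}\alpha_i)_i)$. Fix $a$ and drop the subscript, writing $E=((I,X),(u,p))$ with $u^2=u$ and $p_{u(i)}p_i=p_i$, and host maps $s:X\to P$, $r:P\to X$ with $rs=1_X$.

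\emph{Idempotence of $\widetilde e=(u,\widetilde p)$.} Its base map is $u$, and $u^2=u$. Its component at $i$ is
$$
\widetilde p_{u(i)}\widetilde p_i
=s p_{u(i)} r s p_i r
=s p_{u(i)} p_i r
=s p_i r
=\widetilde p_i .
$$
The middle step uses $rs=1_X$, and the next uses the idempotence equation for $(u,p)$.

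\emph{The isomorphism.} In the ambient category, consider the candidate maps
$$
\phi=\widetilde e\,(1_I,s)\,e:(I,X)\to(I,P),
\qquad
\psi=e\,(1_I,r)\,\widetilde e:(I,P)\to(I,X).
$$
Both have base map $u$. Their components at $i$ are $s p_i$ and $p_i r$ respectively. For $\phi$, the component at $i$ is $\widetilde p_{u(i)}sp_i=sp_{u(i)}rsp_i=sp_{u(i)}p_i=sp_i$, and the computation for $\psi$ is symmetric.

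Each is a Karoubi arrow, since it is built as a sandwich $d\,g\,e'$ with idempotent ends, and such a composite automatically satisfies the sandwich equation. We then verify:
\begin{itemize}
\item $\psi\phi$ has base map $u^2=u$ and component $p_{u(i)}rsp_i=p_{u(i)}p_i=p_i$, so $\psi\phi=e=1_{E}$;
\item $\phi\psi$ has base map $u$ and component $sp_{u(i)}p_ir=sp_ir=\widetilde p_i$, so $\phi\psi=\widetilde e=1_{\widetilde E}$.
\end{itemize}
Hence $E\cong\widetilde E$. No choice beyond the given host maps is needed, so the isomorphisms for all $a\in A$ are obtained uniformly, as \Cref{prop:kar-coprod} requires.

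The only delicate point is bookkeeping. The equation $p_{u(i)}p_i=p_i$ involves the component at the \emph{image} index $u(i)$, not at $i$, so composites must be indexed correctly. The key move is to insert $rs=1$ exactly between the two $p$'s. Everything else is routine.
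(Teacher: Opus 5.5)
Your proof is correct and is essentially the paper's proof: the same idempotence computation (inserting $r_as_a=1$ between the two projectors), the same pair of inverse maps with base map $u_a$ and components $s_ap^a_i$ and $p^a_ir_a$, and the same composite calculations. The only cosmetic difference is that you obtain the Karoubi-arrow property by writing these maps as the sandwiches $\widetilde e\,(1_I,s)\,e$ and $e\,(1_I,r)\,\widetilde e$, whereas the paper verifies the sandwich identities directly on components.
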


\begin{proof}
Using $r_as_a=1$ and $p^a_{u_a(i)}p^a_i=p^a_i$, we obtain
$$
\widetilde p^a_{u_a(i)}\widetilde p^a_i
=
s_ap^a_{u_a(i)}r_as_ap^a_ir_a
=
s_ap^a_ir_a
=
\widetilde p^a_i.
$$
Hence $(u_a,\widetilde p^a)$ is idempotent.  Define
$$
\alpha_a=
\bigl(u_a,(s_ap^a_i)_{i\in I_a}\bigr):E_a\longrightarrow\widetilde E_a,
\qquad
\beta_a=
\bigl(u_a,(p^a_ir_a)_{i\in I_a}\bigr):
\widetilde E_a\longrightarrow E_a.
$$
The identities
$$
\widetilde p^a_{u_a(i)}s_ap^a_i=s_ap^a_i,
\qquad
p^a_{u_a(i)}r_a\widetilde p^a_i=p^a_ir_a
$$
show that these are Karoubi arrows.  Their composites have base map $u_a$
and components
$$
p^a_{u_a(i)}r_as_ap^a_i=p^a_i,
\qquad
s_ap^a_{u_a(i)}p^a_ir_a=\widetilde p^a_i.
$$
Thus $\beta_a\alpha_a=(u_a,p^a)$ and
$\alpha_a\beta_a=(u_a,\widetilde p^a)$, which are the identities of the two
Karoubi objects.
\end{proof}

\subsection{The blockwise coproduct}
\label{app:coproduct-details}

\begin{proposition}[$\kappa$-small coproducts from common hosts]
\label{prop:kar-coprod-app}
If every $\kappa$-small family in $\mathcal C$ admits a common retract host,
then $\Kar(\Copow_\kappa(\mathcal C))$ has $\kappa$-small coproducts.

More explicitly, let
$$
E_a=((I_a,X_a),(u_a,p^a)),
\qquad a\in A,
$$
be a $\kappa$-small family, and choose a common host
$X_a\xrightarrow{s_a}P\xrightarrow{r_a}X_a$.  Put
$$
I=\coprod_{a\in A}I_a
$$
and define $c=(u,c_{(-)})$ on $(I,P)$ by
$$
u(a,i)=(a,u_a(i)),
\qquad
c_{(a,i)}=s_ap^a_ir_a.
$$
Then $c$ is idempotent and
$$
\bigl((I,P),c\bigr)
$$
is a coproduct of the objects $E_a$.  Its $a$-th injection has base map
$i\mapsto(a,u_a(i))$ and component $s_ap^a_i$ at $i$.
\end{proposition}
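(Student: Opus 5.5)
The plan is to reduce to the common-carrier situation via \Cref{lem:common-carrier-app} and then verify the coproduct universal property blockwise. First, check that $c=(u,c_{(-)})$ is an idempotent endomorphism of $(I,P)$. The base part is immediate: $u^2(a,i)=(a,u_a^2(i))=(a,u_a(i))$. For the components, the composite $c\circ c$ has component $c_{u(a,i)}c_{(a,i)}=s_ap^a_{u_a(i)}r_as_ap^a_ir_a$. Cancelling $r_as_a=1$ and using $p^a_{u_a(i)}p^a_i=p^a_i$ gives $c_{(a,i)}$. This is exactly the computation in \Cref{lem:common-carrier-app}, performed on each block.

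Next, write $j_a:(I_a,P)\to(I,P)$ for the block inclusion with identity components. It is a Karoubi arrow $\widetilde E_a\to((I,P),c)$, because $c$ restricted to block $a$ is $\widetilde e_a$. Hence $c\,j_a=j_a\widetilde e_a=j_a$, and the sandwich equation holds. Define $\iota_a=j_a\alpha_a$, where $\alpha_a:E_a\to\widetilde E_a$ is the isomorphism of \Cref{lem:common-carrier-app}. Its base map is $i\mapsto(a,u_a(i))$ and its component is $s_ap^a_i$, as stated. It therefore suffices to show that $((I,P),c)$, with the maps $j_a$, is a coproduct of the $\widetilde E_a$; precomposing with the isomorphisms $\alpha_a$ then transfers the result to the $E_a$.

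For the universal property, let $D=((J,Y),(v,d))$ and let $g_a=(w_a,\gamma^a):\widetilde E_a\to D$ be Karoubi arrows. Define $g:(I,P)\to(J,Y)$ by base map $(a,i)\mapsto w_a(i)$ and components $\gamma^a_i$. The equation $g=(v,d)\,g\,c$ can be checked one block at a time. On block $a$ it reduces to $g_a=(v,d)\,g_a\,\widetilde e_a$, where the base part uses $u(a,i)=(a,u_a(i))$. We also have $g j_a=g_a$ by construction. For uniqueness, any Karoubi arrow $h$ with $hj_a=g_a$ for all $a$ has its base map and components fixed on each block $\{a\}\times I_a$. Since these blocks cover $I$, $h=g$. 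This uniqueness argument needs no idempotent manipulation, because $j_a$ is an honest block inclusion.

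Two points remain. Regularity of $\kappa$ makes $I$ $\kappa$-small. For $A=\varnothing$, the host hypothesis applied to the empty family supplies some $P$. The object $((\varnothing,P),!)$ is then initial, since a map out of it is the empty function with no components, and the sandwich equation holds vacuously. The main obstacle I anticipate is bookkeeping of variance in the base-map part of the sandwich equation: one must confirm that the base maps of Karoubi arrows out of $\widetilde E_a$ satisfy $w_a=v\,w_a\,u_a$. I expect this to follow directly from composition in $\Copow_\kappa(\mathcal C)$. Apart from that, everything is blockwise and routine.
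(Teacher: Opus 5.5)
Your overall route is the same as the paper's: pass to a common carrier via \Cref{lem:common-carrier-app}, form the blockwise idempotent and the blockwise copairing, then transport along $\alpha_a$. Your final injection $j_a\alpha_a$ is also correct. However, one step is false.

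You claim $c\,j_a=j_a\widetilde e_a=j_a$. The first equality holds: both sides have base map $i\mapsto(a,u_a(i))$ and component $\widetilde p^a_i=s_ap^a_ir_a$. The second equality fails. The map $j_a$ has base map $i\mapsto(a,i)$ and identity components, so $j_a\widetilde e_a=j_a$ would require $u_a=1$ and $s_ap^a_ir_a=1_P$ for all $i$. This fails even for a trivial object $E_a=((I_a,X_a),1)$ whenever $s_ar_a\neq 1_P$. In $\Kar(\Copow_\kappa(\mathcal C))$ the identity of $\widetilde E_a$ is $\widetilde e_a$, not $1$. An ambient map that only intertwines the idempotents, as $c\,j_a=j_a\widetilde e_a$ says, is not a Karoubi arrow; its sandwich is. So $j_a$ is not a morphism $\widetilde E_a\to((I,P),c)$, and the statement ``$((I,P),c)$ with the maps $j_a$ is a coproduct'' is ill-formed. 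The injection must be $\widetilde\iota_a=c\,j_a=j_a\widetilde e_a$, which is what the paper uses.

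This also makes your remark that uniqueness ``needs no idempotent manipulation'' wrong. With the correct injections, the hypothesis on $h$ is $h\,c\,j_a=g_a$. You must first use the source sandwich $h=hc$ to rewrite this as $h\,j_a=g_a$; only then are the block restrictions of $h$ forced. This is exactly the paper's ``since $g=gc$'' step. Likewise, existence is not ``by construction''. It is $g\widetilde\iota_a=g\,j_a\widetilde e_a=g_a\widetilde e_a=g_a$, which uses the source equation of $g_a$.

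The rest of your argument survives. Your blockwise check that $g=(v,d)\,g\,c$ is fine, and so are the idempotence computation and the empty case. Since $\widetilde e_a\alpha_a=\alpha_a$, the transported injection $\widetilde\iota_a\alpha_a=j_a\alpha_a$ still has base $i\mapsto(a,u_a(i))$ and component $s_ap^a_i$. So the repair is local: replace $j_a$ by $j_a\widetilde e_a$ throughout and insert $h=hc$ in the uniqueness step.
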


\begin{proof}
Regularity of $\kappa$ ensures that $I$ is $\kappa$-small.  By
\Cref{lem:common-carrier-app}, it is enough to construct a coproduct of
$$
\widetilde E_a=((I_a,P),(u_a,\widetilde p^a)),
\qquad
\widetilde p^a_i=s_ap^a_ir_a.
$$
The equations
$$
u^2(a,i)=(a,u_a^2(i))=u(a,i),
\qquad
c_{u(a,i)}c_{(a,i)}
=
\widetilde p^a_{u_a(i)}\widetilde p^a_i
=
c_{(a,i)}
$$
show that $c$ is idempotent.

For each $a$, let
$$
\widetilde\iota_a:\widetilde E_a\longrightarrow((I,P),c)
$$
have base map $i\mapsto(a,u_a(i))$ and component $\widetilde p^a_i$.  This
is the sandwich of the ordinary block injection by the source and target
idempotents, and hence is a Karoubi arrow.

Let $H=((J,Y),(w,q))$, and suppose that
$$
f_a=(v_a,\phi^a):\widetilde E_a\longrightarrow H
$$
are Karoubi arrows.  Their sandwich equations are equivalent to
$$
v_au_a=v_a,
\qquad
\phi^a_{u_a(i)}\widetilde p^a_i=\phi^a_i,
\qquad
wv_a=v_a,
\qquad
q_{v_a(i)}\phi^a_i=\phi^a_i.
$$
Define $(v,\phi):((I,P),c)\to H$ blockwise by
$$
v(a,i)=v_a(i),
\qquad
\phi_{(a,i)}=\phi^a_i.
$$
The four displayed equations give $(v,\phi)c=(v,\phi)$ and
$(w,q)(v,\phi)=(v,\phi)$, so this is a Karoubi arrow.  Moreover,
$$
(v,\phi)\widetilde\iota_a=f_a
$$
for every $a$.

For uniqueness, let $g=(z,\gamma)$ be another such arrow.  Since $g=gc$,
$$
z(a,i)=z(a,u_a(i)),
\qquad
\gamma_{(a,i)}=
\gamma_{(a,u_a(i))}\widetilde p^a_i.
$$
The equation $g\widetilde\iota_a=f_a$ then forces
$z(a,i)=v_a(i)$ and $\gamma_{(a,i)}=\phi^a_i$.  Thus $g=(v,\phi)$.
Transporting the injections along the isomorphisms of
\Cref{lem:common-carrier-app} gives the stated component $s_ap^a_i$.

When $A=\varnothing$, the common-host condition supplies an object $P$.
The unique endomorphism of $(\varnothing,P)$ is idempotent, and the resulting
Karoubi object has a unique arrow to every object.  It is therefore initial.
\end{proof}

The construction has a simple interpretation.  The host condition is used
only to place all ambient objects over one carrier $P$; the coproduct then
concatenates the index sets and places the corresponding idempotents on
disjoint blocks.  Different choices of host may give different ambient
representatives, but the resulting coproducts are canonically isomorphic by
the universal property.

\subsection{Cauchy density and the completion theorem}
\label{app:criterion-details}

We first isolate the standard Cauchy-density argument used below.

\begin{lemma}[Cauchy-dense extensions]
\label{lem:cauchy-dense-extension-app}
Let $J:\mathcal A\to\mathcal D$ be fully faithful and let $\mathcal D$ be
Cauchy complete.  Let
$$
\overline J:\Kar(\mathcal A)\longrightarrow\mathcal D
$$
be an extension of $J$ along the canonical embedding
$Q_{\mathcal A}:\mathcal A\to\Kar(\mathcal A)$.  Then $\overline J$ is an
equivalence if and only if $J$ is Cauchy dense.
\end{lemma}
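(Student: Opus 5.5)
The plan is to show that $\overline J$ is always fully faithful, so that the equivalence reduces to essential surjectivity, and then to match essential surjectivity with Cauchy density. First I would fix the shape of $\overline J$ concretely. By \Cref{thm:kar-up-app}, any extension of $J$ along $Q_{\mathcal A}$ is naturally isomorphic to the one obtained from chosen splittings. So, up to natural isomorphism, I may assume the following: for each $(A,e)$ there is a splitting
$$
JA\xrightarrow{q_e}\overline J(A,e)\xrightarrow{m_e}JA,
\qquad m_eq_e=Je,\quad q_em_e=1,
$$
and $\overline J(f)=q_dJ(f)m_e$ for $f:(A,e)\to(B,d)$. Being an equivalence is invariant under natural isomorphism, so this reduction costs nothing.

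\emph{Full faithfulness.} For objects $(A,e)$ and $(B,d)$ I would write down the inverse of $f\mapsto q_dJ(f)m_e$. Given $g:\overline J(A,e)\to\overline J(B,d)$, the composite $m_dgq_e:JA\to JB$ is $Jf$ for a unique $f:A\to B$, since $J$ is fully faithful. Then $Jf=Jd\,Jf\,Je$, because $m_dq_d=Jd$ and $m_eq_e=Je$ can be absorbed into $m_d$ and $q_e$. Faithfulness of $J$ then gives $f=dfe$, so $f$ is a Karoubi arrow. Finally $q_dJ(f)m_e=q_dm_dgq_em_e=g$. In the other direction, $m_dq_dJ(f)m_eq_e=J(dfe)=Jf$. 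So the two assignments are mutually inverse, and this step uses only full faithfulness of $J$ and the splitting equations.

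\emph{Essential surjectivity versus Cauchy density.} Suppose $J$ is Cauchy dense, and let $D\in\mathcal D$ be a retract of $JA$ via $s:D\to JA$ and $r:JA\to D$ with $rs=1$. The idempotent $sr$ on $JA$ equals $Je$ for a unique $e$, and $e$ is idempotent by faithfulness. Both $(r,s)$ and $(q_e,m_e)$ split $Je$, so the comparison map $q_es:D\to\overline J(A,e)$ is an isomorphism with inverse $rm_e$. Conversely, suppose $\overline J$ is essentially surjective, so $D\cong\overline J(A,e)$. This object is a retract of $JA$ via $m_e$ and $q_e$, hence so is $D$. Together with full faithfulness, this proves the equivalence in both directions.

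I do not expect a serious obstacle. The only delicate point is the first reduction: the statement says ``an extension'', and I must make sure that an arbitrary extension is naturally isomorphic to the canonical one. That is exactly the uniqueness clause of \Cref{thm:kar-up-app}, which is where the hypothesis that $\mathcal D$ is Cauchy complete enters.
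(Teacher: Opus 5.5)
Your proof is correct and follows the paper's argument essentially step for step. Both describe $\overline J$ by chosen splittings $q_e,m_e$ of $J(e)$, invert $f\mapsto q_dJ(f)m_e$ by sending $g$ to the unique $f$ with $Jf=m_dgq_e$, and match essential surjectivity with Cauchy density using that $\overline J(A,e)$ is a retract of $JA$.

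The one addition is your explicit reduction of an arbitrary extension to the splitting-based one, which the paper skips by simply choosing $\overline J(A,e)$ as a splitting object. That reduction does not actually need Cauchy completeness of $\mathcal D$: any functor sends the retract $(A,e)$ of $Q_{\mathcal A}(A)$ to a splitting of $J(e)$, up to the given isomorphism $\overline J Q_{\mathcal A}\cong J$. Cauchy completeness only matters for the existence of $\overline J$.
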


\begin{proof}
Choose $\overline J(A,e)$ as a splitting object of $J(e)$:
$$
J(A)\xrightarrow{q_e}\overline J(A,e)
\xrightarrow{m_e}J(A),
\qquad
m_eq_e=J(e),
\quad
q_em_e=1.
$$
For Karoubi objects $(A,e)$ and $(B,d)$, every arrow
$h:\overline J(A,e)\to\overline J(B,d)$ determines the ambient arrow
$m_dhq_e:J(A)\to J(B)$.  By full faithfulness, it is $J(f)$ for a unique
$f:A\to B$.  The splitting equations imply $f=dfe$, and
$q_dJ(f)m_e=h$.  Hence $\overline J$ is fully faithful.

If $J$ is Cauchy dense, every object of $\mathcal D$ is a retract of some
$J(A)$.  The corresponding idempotent on $J(A)$ is $J(e)$ for a unique
idempotent $e$ on $A$, and the retract is isomorphic to
$\overline J(A,e)$.  Thus $\overline J$ is essentially surjective.
Conversely, every $\overline J(A,e)$ is a retract of $J(A)$; essential
surjectivity of $\overline J$ therefore implies Cauchy density of $J$.
\end{proof}

\begin{lemma}[Cauchy completeness of family categories]
\label{lem:fam-cauchy-app}
The category $\Fam_\kappa(\mathcal C)$ is Cauchy complete if and only if
$\mathcal C$ is Cauchy complete.
\end{lemma}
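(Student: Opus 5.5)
For the forward direction I will assume that $\mathcal C$ is Cauchy complete and show that every idempotent of $\Fam_\kappa(\mathcal C)$ splits. For the converse I will embed $\mathcal C$ into $\Fam_\kappa(\mathcal C)$ through singleton families and transport splittings back.

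\emph{Forward direction.} Let $(u,\alpha)$ be an idempotent on $(X_i)_{i\in I}$. Idempotence gives $u^2=u$ and $\alpha_{u(i)}\alpha_i=\alpha_i$ for every $i$. Put $I_0=\Fix(u)$. For $j\in I_0$ we have $u(j)=j$, so $\alpha_j\alpha_j=\alpha_j$ and $\alpha_j$ is an idempotent of $\mathcal C$. By Cauchy completeness, choose splittings
$$
X_j\xrightarrow{q_j}\overline X_j\xrightarrow{m_j}X_j,
\qquad
m_jq_j=\alpha_j,
\quad
q_jm_j=1.
$$
Since $u(i)\in I_0$ for every $i$, the following maps are well defined:
\begin{itemize}
\item the retraction $(u,(q_{u(i)}\alpha_i)_i):(X_i)_{i\in I}\to(\overline X_j)_{j\in I_0}$;
\item the section $(\mathrm{incl},(m_j)_j):(\overline X_j)_{j\in I_0}\to(X_i)_{i\in I}$.
\end{itemize}
The composite section-then-retraction has base $u\circ\mathrm{incl}=1_{I_0}$ and components $q_j\alpha_jm_j=q_jm_jq_jm_j=1$. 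The composite retraction-then-section has base $u$ and components $m_{u(i)}q_{u(i)}\alpha_i=\alpha_{u(i)}\alpha_i=\alpha_i$. This recovers $(u,\alpha)$, so the idempotent splits.

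\emph{Converse.} Let $e:X\to X$ be idempotent in $\mathcal C$. Then $(1,e)$ is an idempotent on the singleton family $(X)$, and by hypothesis it splits through some family $(Y_k)_{k\in K}$ with maps $(a,s):(Y_k)\to(X)$ and $(b,r):(X)\to(Y_k)$. Reading off base maps from the two splitting equations gives $ab=1_1$ and $ba=1_K$. Since $a:K\to 1$ and $b:1\to K$, this forces $K$ to be a singleton $\{k_0\}$. The component equations then read $s_{k_0}r_*=e$ and $r_*s_{k_0}=1_{Y_{k_0}}$, which is a splitting of $e$ in $\mathcal C$.

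\emph{Main obstacle.} The only delicate point is the forward direction's bookkeeping. The fibre idempotents $\alpha_j$ must be taken only at fixed points of $u$, because only there is $\alpha_j$ itself an idempotent. The retraction's component $q_{u(i)}\alpha_i$ must then be seen to land in the correct fibre $\overline X_{u(i)}$, which holds because $u(i)\in I_0$. Finally, $I_0\subseteq I$ is $\kappa$-small, so the splitting family is a legitimate object of $\Fam_\kappa(\mathcal C)$.
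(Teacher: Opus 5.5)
Your proof is correct and follows the paper's own argument essentially step for step. In the forward direction you split through $\Fix(u)$ using fibrewise splittings of the $\alpha_j$, with retraction $(u,(q_{u(i)}\alpha_i)_i)$ and section $(\mathrm{incl},(m_j)_j)$; in the converse you use the base equation $ba=1_K$ together with the existence of $b:1\to K$ to force the splitting family to be a singleton, which makes explicit the nonemptiness step the paper leaves implicit.
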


\begin{proof}
Assume first that $\mathcal C$ is Cauchy complete.  Let $(u,\alpha)$ be an
idempotent on $(X_i)_{i\in I}$.  Put $I_0=\Fix(u)$.  For every $j\in I_0$,
choose a splitting
$$
X_j\xrightarrow{q_j}\overline X_j\xrightarrow{m_j}X_j
$$
of the idempotent $\alpha_j$.  Define
$$
q:(X_i)_{i\in I}\longrightarrow(\overline X_j)_{j\in I_0}
$$
to have base map $i\mapsto u(i)$ and component $q_{u(i)}\alpha_i$, and
define
$$
m:(\overline X_j)_{j\in I_0}\longrightarrow(X_i)_{i\in I}
$$
to have the inclusion $I_0\hookrightarrow I$ as base map and component
$m_j$.  Then $qm=1$ and $mq=(u,\alpha)$, so every idempotent splits.

Conversely, let $e:X\to X$ be idempotent and split the induced idempotent on
the singleton family $(X)$ through a family indexed by $J$.  The base map of
the composite from that family to itself is constant, whereas the composite
is the identity.  Hence $J$ is a singleton, and the component maps split $e$
in $\mathcal C$.
\end{proof}

\begin{proposition}[Cauchy completion commutes with families]
\label{prop:kar-fam-app}
For every category $\mathcal C$, there is an equivalence
$$
\Kar\bigl(\Fam_\kappa(\mathcal C)\bigr)
\simeq
\Fam_\kappa\bigl(\Kar(\mathcal C)\bigr)
$$
whose restriction along $Q_{\Fam_\kappa(\mathcal C)}$ is naturally
isomorphic to $\Fam_\kappa(Q_{\mathcal C})$.
\end{proposition}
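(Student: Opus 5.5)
Apply \Cref{lem:cauchy-dense-extension-app} to the fully faithful functor
$$
J=\Fam_\kappa(Q_{\mathcal C}):\Fam_\kappa(\mathcal C)\longrightarrow\Fam_\kappa\bigl(\Kar(\mathcal C)\bigr),
$$
which acts as the identity on index sets and applies $Q_{\mathcal C}$ componentwise. Full faithfulness is inherited from $Q_{\mathcal C}$: a family morphism consists of a base function together with component arrows, and $Q_{\mathcal C}$ is bijective on each component hom-set. The target is Cauchy complete by \Cref{lem:fam-cauchy-app}, because $\Kar(\mathcal C)$ is Cauchy complete.

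\Cref{thm:kar-up-app} then extends $J$ to a functor
$$
\overline J:\Kar\bigl(\Fam_\kappa(\mathcal C)\bigr)\longrightarrow\Fam_\kappa\bigl(\Kar(\mathcal C)\bigr)
$$
with $\overline J Q_{\Fam_\kappa(\mathcal C)}\cong J$. This is exactly the required restriction property. By \Cref{lem:cauchy-dense-extension-app}, it remains only to show that $J$ is Cauchy dense.

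For Cauchy density, let $\bigl((X_i,e_i)\bigr)_{i\in I}$ be a family in $\Kar(\mathcal C)$. Take the family $(X_i)_{i\in I}$ in $\Fam_\kappa(\mathcal C)$ and use base map $1_I$ for both the section and the retraction. Their components at $i$ are the arrow $e_i$, viewed first as a Karoubi arrow $(X_i,e_i)\to(X_i,1)$ and then as a Karoubi arrow $(X_i,1)\to(X_i,e_i)$. The composite of section followed by retraction has component $e_ie_i=e_i$, which is the identity of $(X_i,e_i)$. So every object of the target is a retract of an object in the image of $J$, and the index set stays $I$, hence remains $\kappa$-small.

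I expect the main obstacle to be the bookkeeping in \Cref{lem:cauchy-dense-extension-app}. Its proof assumes a particular choice of splittings for the extension. The extension supplied by \Cref{thm:kar-up-app} may use different splittings, so I would note that any two extensions differ by a natural isomorphism, and being an equivalence is invariant under natural isomorphism.

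If a concrete description is wanted, the equivalence sends an idempotent $(u,\alpha)$ on $(X_i)_{i\in I}$ to the family $\bigl((X_j,\alpha_j)\bigr)_{j\in\Fix(u)}$. This matches the splitting used in \Cref{lem:fam-cauchy-app}.
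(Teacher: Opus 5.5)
Your proposal is correct and is essentially the paper's own proof: apply \Cref{lem:cauchy-dense-extension-app} to the fully faithful functor $\Fam_\kappa(Q_{\mathcal C})$. Its codomain is Cauchy complete by \Cref{lem:fam-cauchy-app}, and it is Cauchy dense because each family of Karoubi objects is, componentwise, a retract of its image under $Q_{\mathcal C}$. Your remarks that the result does not depend on the choice of splittings, and your explicit description via $\Fix(u)$, are accurate details that the paper leaves implicit.
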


\begin{proof}
The functor
$$
\Fam_\kappa(Q_{\mathcal C}):
\Fam_\kappa(\mathcal C)
\longrightarrow
\Fam_\kappa(\Kar(\mathcal C))
$$
is fully faithful.  Its codomain is Cauchy complete by
\Cref{lem:fam-cauchy-app}, since $\Kar(\mathcal C)$ is Cauchy complete.
It is Cauchy dense because each family of Karoubi objects is, componentwise,
a retract of the corresponding family of objects in the image of
$Q_{\mathcal C}$.  The result now follows from
\Cref{lem:cauchy-dense-extension-app}.
\end{proof}

\begin{lemma}[Retracts of constant families]
\label{lem:retract-constant-app}
A family $(X_i)_{i\in I}$ is a retract in $\Fam_\kappa(\mathcal C)$ of a
constant family if and only if it admits a common retract host in
$\mathcal C$.
\end{lemma}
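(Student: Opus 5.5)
The proof has two directions, and both come straight from the definitions of morphisms in $\Fam_\kappa(\mathcal C)$ given in \Cref{def:fam-app}. This is the same statement as \Cref{lem:host}, so the argument will mirror that one, written out with the composition formula made explicit.

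For the forward direction, suppose $(X_i)_{i\in I}$ has a common retract host $X_i\xrightarrow{s_i}P\xrightarrow{r_i}X_i$ with $r_is_i=1_{X_i}$. I will define two family morphisms:
$$
(1_I,s):(X_i)_{i\in I}\longrightarrow(P)_{i\in I},
\qquad
(1_I,r):(P)_{i\in I}\longrightarrow(X_i)_{i\in I}.
$$
The first is well typed because its component at $i$ must be an arrow $X_i\to P$. The second is well typed because its component at $i$ must be an arrow $P\to X_{1_I(i)}=X_i$. The composition rule $(v,\beta)(u,\alpha)=(vu,(\beta_{u(i)}\alpha_i)_i)$ then gives $(1_I,r)(1_I,s)=(1_I,(r_is_i)_i)=1$. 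Hence the family is a retract of the constant family $(P)_{i\in I}$. This family is $\kappa$-small because $I$ is.

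For the converse, suppose there are a section $(a,s):(X_i)_{i\in I}\to(P)_{j\in J}$ and a retraction $(b,r):(P)_{j\in J}\to(X_i)_{i\in I}$ with $(b,r)(a,s)=1$. Comparing base maps gives $ba=1_I$. Comparing components gives $r_{a(i)}s_i=1_{X_i}$. Note that $r_{a(i)}:P\to X_{b(a(i))}=X_i$ has the right codomain precisely because $ba=1_I$. So I will take the section at $i$ to be $s_i:X_i\to P$ and the retraction at $i$ to be $r_{a(i)}$. These exhibit $P$ as a common retract host.

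The only point needing care is the type bookkeeping in the converse. The component $r_{a(i)}$ must land in $X_i$, and that follows from the base equation $ba=1_I$. Once that is checked, both directions are immediate.
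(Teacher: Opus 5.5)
Your proof is correct and follows the paper's argument. In the forward direction you use family maps with base $1_I$ and components $s_i$ and $r_i$, and in the converse you read off $ba=1_I$ and $r_{a(i)}s_i=1_{X_i}$ from $(b,r)(a,s)=1$; your extra check that $r_{a(i)}$ lands in $X_{b(a(i))}=X_i$ makes explicit a detail the paper leaves implicit.
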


\begin{proof}
If $P$ is a common host, the family maps with base $1_I$ and components
$s_i$ and $r_i$ exhibit $(X_i)_{i\in I}$ as a retract of $(P)_{i\in I}$.
Conversely, suppose that $(X_i)_{i\in I}$ is a retract of $(P)_{j\in J}$,
with section $(a,s)$ and retraction $(b,r)$.  From
$(b,r)(a,s)=1$ we obtain $ba=1_I$ and
$$
r_{a(i)}s_i=1_{X_i}.
$$
Thus $P$ is a common retract host.
\end{proof}

Since $J_{\mathcal C}$ is fully faithful, the preceding lemma immediately
gives
$$
J_{\mathcal C}\text{ is Cauchy dense}
\quad\Longleftrightarrow\quad
\text{every $\kappa$-small family in $\mathcal C$ has a common host}.
$$

\paragraph{Detailed proof of \Cref{thm:criterion}.}
Assume (2).  By \Cref{lem:fam-cauchy-app},
$\Fam_\kappa(\mathcal C)$ is Cauchy complete.  By
\Cref{lem:retract-constant-app}, the fully faithful inclusion
$J_{\mathcal C}$ is Cauchy dense.  Its extension along
$$
Q:\Copow_\kappa(\mathcal C)
\longrightarrow
\Kar(\Copow_\kappa(\mathcal C))
$$
therefore exists by the universal property of the Karoubi envelope and is an
equivalence by \Cref{lem:cauchy-dense-extension-app}.  This gives (1).
The existence of $\kappa$-small coproducts in the source of $\Rlz$ was proved
independently in \Cref{prop:kar-coprod}.

Conversely, assume (1).  The category
$\Kar(\Copow_\kappa(\mathcal C))$ is Cauchy complete, so its equivalent
category $\Fam_\kappa(\mathcal C)$ is Cauchy complete.  By
\Cref{lem:fam-cauchy-app}, $\mathcal C$ is Cauchy complete.  Moreover,
$Q$ is Cauchy dense and $\Rlz Q\cong J_{\mathcal C}$; hence
$J_{\mathcal C}$ is Cauchy dense.  By
\Cref{lem:retract-constant-app}, every $\kappa$-small family has a common
retract host.

It remains to identify the chosen coproducts.  For
$i\in\Fix(u_a)$, choose a splitting
$$
X_a\xrightarrow{q^a_i}X^a_i\xrightarrow{m^a_i}X_a
$$
of $p^a_i$.  The block idempotent $s_ap^a_ir_a$ is then split by
$$
P\xrightarrow{q^a_ir_a}X^a_i
\xrightarrow{s_am^a_i}P.
$$
Thus realization sends the block indexed by $(a,i)$ to $X^a_i$.  The
realized component of the injection is
$$
(q^a_ir_a)(s_ap^a_i)m^a_i
=
q^a_ip^a_im^a_i
=
1_{X^a_i}.
$$
Hence the realized object is the concatenation of the realized families and
the realized injections are the ordinary family coproduct injections.

\subsection{Explicit realization and presentation}
\label{app:realization}

\paragraph{Fixed-point splittings.}
In $\Set$ and $\Vect$ we split an idempotent $e:X\to X$ through $\Fix(e)$,
using $q_e(x)=e(x)$ and the inclusion $m_e$:
$X\xrightarrow{q_e}\Fix(e)\xrightarrow{m_e}X$. Thus $m_eq_e=e$ and
$q_em_e=1$. In any category, for a splitting
$X\xrightarrow{q}E\xrightarrow{m}X$, the map $m$ is the equalizer and $q$ the
coequalizer of $e$ and $1_X$; conversely, either construction, or any epi--mono
image or coimage factorization $e=mq$, yields a splitting. A second splitting
$X\xrightarrow{q'}E'\xrightarrow{m'}X$ is canonically isomorphic to the first
via $q'm:E\cong E'$, with inverse $qm'$. For $\Vectop$ we use the opposite of
this fixed-point splitting.

The equivalence has a useful concrete description.  Let
$$
E=((I,P),(u,p))
$$
be a completed constant family.  For each $i\in I_0:=\Fix(u)$, choose a
splitting
$$
P\xrightarrow{q_i}P_i\xrightarrow{m_i}P,
\qquad
m_iq_i=p_i,
\quad
q_im_i=1.
$$
Then
$$
\Rlz(E)=(P_i)_{i\in I_0}.
$$
If $(f,\phi):E\to F$ is a Karoubi arrow, then $f$ restricts to a function
$\Fix(u)\to\Fix(v)$ and the realized component at $i$ is
$$
q^F_{f(i)}\phi_i m^E_i.
$$
Thus realization replaces each ambient projector by its splitting object and
restricts each ambient arrow to the corresponding images.

Conversely, choose for every family $X=(X_i)_{i\in I}$ a common host
$$
X_i\xrightarrow{s_i^X}P_X\xrightarrow{r_i^X}X_i.
$$
Its ambient presentation is
$$
\Prs(X)=
\left(
(I,P_X),
\bigl(1_I,(s_i^Xr_i^X)_{i\in I}\bigr)
\right).
$$
A family arrow $(f,\alpha):X\to Y$ is represented by
$$
\Prs(f,\alpha)
=
\left(
 f,
 (s^Y_{f(i)}\alpha_i r_i^X)_{i\in I}
\right).
$$
The retraction equations make these formulas functorial: the middle factors
cancel under composition, and the ambient representative of an identity is
the corresponding projector.

\begin{proposition}[Explicit realization equivalence]
\label{prop:explicit-equivalence-app}
Under the hypotheses of \Cref{thm:criterion}, the formulas above define
inverse equivalences up to natural isomorphism
$$
\Rlz:
\Kar\bigl(\Copow_\kappa(\mathcal C)\bigr)
\rightleftarrows
\Fam_\kappa(\mathcal C):\Prs.
$$
For compatible choices of splittings, $\Rlz\Prs=1$ strictly.
\end{proposition}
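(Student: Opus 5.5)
I will verify the claims directly: first that $\Prs$ is a well-defined functor into $\Kar(\Copow_\kappa(\mathcal C))$, then that $\Rlz\Prs=1$ for compatible splittings, and finally that $\Prs\Rlz\cong1$ naturally. Since $\Rlz$ is already known to be an equivalence by \Cref{thm:criterion}, the last step could be replaced by the formal fact that a strict right inverse of an equivalence is a quasi-inverse. I would still give the explicit isomorphism, because it records the ambient representatives.

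\textbf{$\Prs$ is a functor.} The endomorphism $(1_I,(s_i^Xr_i^X)_i)$ is idempotent because $r_is_i=1$. A representative $\Prs(f,\alpha)$ satisfies the sandwich equation, since
$$
s^Y_{f(i)}r^Y_{f(i)}\,s^Y_{f(i)}\alpha_ir^X_i\,s^X_ir^X_i=s^Y_{f(i)}\alpha_ir^X_i .
$$
The identity is sent to the projector, which is the identity of the Karoubi object. For composition, the middle pair $r^Y_{f(i)}s^Y_{f(i)}$ cancels.

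\textbf{$\Rlz\Prs=1$.} For $\Prs(X)$ the base idempotent is $1_I$, so $\Fix=I$. Each $s_ir_i$ has the canonical splitting $P_X\xrightarrow{r_i}X_i\xrightarrow{s_i}P_X$. Choosing exactly these splittings (the ``compatible'' choice) gives $\Rlz\Prs(X)=X$ on the nose. On arrows, the realized component is
$$
q^Y_{f(i)}\,s^Y_{f(i)}\alpha_ir^X_i\,m^X_i=r^Y_{f(i)}s^Y_{f(i)}\alpha_ir^X_is^X_i=\alpha_i .
$$

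\textbf{$\Prs\Rlz\cong1$.} Let $E=((I,P),(u,p))$ with $I_0=\Fix(u)$ and splittings $(q_i,m_i)$, and let $\Rlz E=(P_i)_{i\in I_0}$ have host $P'$ with maps $s'_i,r'_i$. I would build Karoubi arrows
$$
\alpha_E:E\to\Prs\Rlz E,\qquad \alpha_E=\bigl(u,\ (s'_{u(i)}q_{u(i)}p_i)_{i\in I}\bigr),
$$
$$
\beta_E:\Prs\Rlz E\to E,\qquad \beta_E=\bigl(I_0\hookrightarrow I,\ (m_jr'_j)_{j\in I_0}\bigr).
$$
Then I check the following, using $m_iq_i=p_i$, $q_im_i=1$, $p_{u(i)}p_i=p_i$ and $r'_js'_j=1$:
\begin{itemize}
\item both arrows satisfy their sandwich equations;
\item $\beta_E\alpha_E=(u,p)$, with component $m_{u(i)}q_{u(i)}p_i=p_{u(i)}p_i=p_i$;
\item $\alpha_E\beta_E=(1,s'r')$, using $q_jp_jm_j=1$.
\end{itemize}
The identities $\beta_E\alpha_E=1_E$ and $\alpha_E\beta_E=1_{\Prs\Rlz E}$ hold in the Karoubi sense, so $\alpha_E$ is an isomorphism.

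\textbf{Expected obstacle: naturality of $\alpha$.} For a Karoubi arrow $(f,\phi):E\to F$, I must show $\Prs\Rlz(f,\phi)\,\alpha_E=\alpha_F\,(f,\phi)$.

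Base maps. The left side has base $f\circ u$ and the right side $v\circ f$, where $v$ is $F$'s base idempotent. Both equal $f$ by the sandwich equation $f=vfu$.

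Components. At index $i$, write $j=f(i)$ and note $u(i)\in I_0$. The left side is
$$
s'_{f(u(i))}\,q^F_{f(u(i))}\phi_{u(i)}m_{u(i)}\,r'_{u(i)}\,s'_{u(i)}q_{u(i)}p_i=s'_j\,q^F_j\,\phi_{u(i)}\,p_{u(i)}\,p_i ,
$$
using $f(u(i))=j$ and $m_{u(i)}q_{u(i)}=p_{u(i)}$. This reduces to $s'_jq^F_j\phi_i$ by the sandwich equation $\phi_{u(i)}p_i=\phi_i$. The right side is
$$
s'_{v(j)}\,q^F_{v(j)}\,p^F_j\,\phi_i=s'_j\,q^F_j\,p^F_j\,\phi_i ,
$$
using $v(j)=j$. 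This reduces to $s'_jq^F_j\phi_i$ by $p^F_j\phi_i=\phi_i$. The two sides agree.

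The index bookkeeping between $I$ and $\Fix(u)$ is where care is needed. I expect it to go through using only the sandwich equations, exactly as in the proof of \Cref{lem:common-carrier-app}.
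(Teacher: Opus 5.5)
Your proof is correct and follows the same route as the paper. The compatible splittings $P_X\xrightarrow{r_i^X}X_i\xrightarrow{s_i^X}P_X$ give $\Rlz\Prs=1$ on the nose, and explicit inverse Karoubi arrows with base maps $u:I\to I_0$ and $I_0\hookrightarrow I$ give $\Prs\Rlz\cong1$, with naturality coming from the sandwich equations.

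The only difference is that you allow an arbitrary chosen host $P'$ for $\Rlz E$. The paper instead presents $\Rlz E$ on the same carrier $P$, with section $m_i$ and retraction $q_i$, and then your components $s'_{u(i)}q_{u(i)}p_i$ and $m_jr'_j$ reduce to the paper's $p_i$. Your version is slightly more careful, since $\Prs$ uses one fixed host per family, and you carry out the naturality computation that the paper only asserts.
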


\begin{proof}
Choose the splitting of $s_i^Xr_i^X$ to be
$$
P_X\xrightarrow{r_i^X}X_i\xrightarrow{s_i^X}P_X.
$$
Then $\Rlz\Prs$ is the identity on objects and arrows.  For
$E=((I,P),(u,p))$, present $\Rlz(E)$ using the same carrier $P$ and the
splittings of the $p_i$.  This gives
$$
\Prs\Rlz(E)=
\left((I_0,P),(1_{I_0},(p_i)_{i\in I_0})\right).
$$
The arrows with base maps
$$
I\xrightarrow{u}I_0,
\qquad
I_0\hookrightarrow I,
$$
and component $p_i$ in the corresponding block are inverse Karoubi arrows
between $E$ and $\Prs\Rlz(E)$.  Naturality follows from the sandwich equation
for a Karoubi arrow.  Hence $\Prs\Rlz\cong1$.
\end{proof}

In the concrete case $\mathcal C=\Vectop$, a family $(V_i)_{i\in I}$ may be
presented on the single ambient vector space $\bigoplus_iV_i$, with the
coordinate projectors selecting the valid fibre.  Splitting those projectors
recovers the original family.  This is the precise sense in which a dependent
cotangent family and a simply typed ambient cotangent equipped with
primal-indexed projectors are two representations of the same object.

\section{Common retract hosts from standard categorical structure}
\label{app:instances}

The common-host hypothesis of \Cref{thm:criterion} is often automatic. This
appendix records a general sufficient condition and then specializes it to the
examples relevant to the paper. The point is that a host need not be a
particular coproduct: either a product or a coproduct can serve, provided that
zero morphisms are available.

\subsection{General criterion}

\begin{proposition}[Standard sources of common retract hosts]
\label{prop:standard-hosts-app}
Let $\mathcal C$ be a category with zero morphisms.
\begin{enumerate}
\item If $\mathcal C$ has $\kappa$-small products, then every
$\kappa$-small family in $\mathcal C$ admits a common retract host.
\item If $\mathcal C$ has $\kappa$-small coproducts, then every
$\kappa$-small family in $\mathcal C$ admits a common retract host.
\end{enumerate}
Consequently,
$$
\Kar\bigl(\Copow_\kappa(\mathcal C)\bigr)
$$
has $\kappa$-small coproducts under either hypothesis. If, in addition,
$\mathcal C$ is Cauchy complete, then the constant-family inclusion extends
to an equivalence
$$
\Kar\bigl(\Copow_\kappa(\mathcal C)\bigr)
\simeq
\Fam_\kappa(\mathcal C).
$$
The same conclusions hold whenever $\mathcal C$ has a zero object, since a
zero object induces zero morphisms.
\end{proposition}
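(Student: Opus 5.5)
The plan is to reduce everything to results already established, with the only genuine work being the construction of the section and retraction arrows from the universal properties.

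\emph{Item (1): products.} Suppose $\mathcal C$ has $\kappa$-small products. Let $(X_i)_{i\in I}$ be a $\kappa$-small family and put $P=\prod_{j\in I}X_j$. For each $i$, define $\delta_i:X_i\to P$ by the product universal property: its $j$-th coordinate is $1_{X_i}$ when $j=i$ and $0_{X_i,X_j}$ otherwise. Then $\pi_i\delta_i=1_{X_i}$, so $(\delta_i,\pi_i)$ is a section--retraction pair and $P$ is a common retract host.

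\emph{Item (2): coproducts.} This is dual. Take $Q=\coprod_{j}X_j$, the injections $\iota_i$ as sections, and the copairing $\rho_i$ as retractions. Here $\rho_i$ is $1_{X_i}$ on the $i$-th summand and zero on every other summand, so $\rho_i\iota_i=1_{X_i}$. These two items are exactly \Cref{lem:common-hosts-products-coproducts} applied to every $\kappa$-small family, and I would simply cite that lemma.

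\emph{Consequences.} With common hosts available for all $\kappa$-small families, \Cref{prop:kar-coprod} (equivalently \Cref{prop:kar-coprod-app}) gives $\kappa$-small coproducts in $\Kar(\Copow_\kappa(\mathcal C))$. If $\mathcal C$ is additionally Cauchy complete, then condition (2) of \Cref{thm:criterion} holds, so the constant-family inclusion extends to the stated equivalence.

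\emph{Zero object case.} A zero object $0$ induces zero morphisms $X\to0\to Y$. These are absorbing under composition because $0$ is both initial and terminal, so the previous argument applies unchanged.

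\emph{Edge case.} The only point needing care is the empty family $I=\varnothing$. There the host condition merely asks for some object $P$. The empty product (a terminal object) or the empty coproduct (an initial object) supplies it, and both exist under the respective $\kappa$-small hypotheses. I do not expect any step to be a real obstacle; the argument is a direct assembly of prior results.
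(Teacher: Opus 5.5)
Your proposal is correct and follows the paper's own argument: you build the product section $\delta_i$ and the coproduct retraction $\rho_i$ from zero morphisms as in \Cref{lem:common-hosts-products-coproducts}, then invoke \Cref{prop:kar-coprod} for the coproducts and \Cref{thm:criterion} for the equivalence. Your explicit handling of the empty family, using the terminal or initial object as host, is a small addition that the paper leaves implicit.
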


This is the combination of
\Cref{lem:common-hosts-products-coproducts,prop:kar-coprod,thm:criterion}.
For a product host, the section into the $i$-th coordinate has identity as its
$i$-th component and zero as every other component. For a coproduct host, the
retraction from the $i$-th summand is identity on that summand and zero on all
others. Different choices of hosts generally give different ambient
presentations, but the resulting coproduct objects in the Karoubi completion
are isomorphic by the coproduct universal property.

\subsection{Vector spaces and their opposite}

The categories $\Vect$ and $\Vectop$ are Cauchy complete, have zero objects,
and admit all small products and coproducts. Hence
$$
\Kar\bigl(\Copow_\kappa(\Vect)\bigr)
\simeq
\Fam_\kappa(\Vect)
$$
and
$$
\Kar\bigl(\Copow_\kappa(\Vectop)\bigr)
\simeq
\Fam_\kappa(\Vectop).
$$

For a family $(V_i)_{i\in I}$ in $\Vectop$, either of the following may be
used as a common host.
\begin{enumerate}
\item The direct sum $P_\oplus=\bigoplus_iV_i$ in $\Vect$, regarded as
a product in $\Vectop$. The coordinate projection
$\pi_i:P_\oplus\to V_i$ in $\Vect$ represents the section
$V_i\to P_\oplus$ in $\Vectop$, while the coordinate injection
$\iota_i:V_i\to P_\oplus$ represents the retraction.
\item The direct product $P_\Pi=\prod_iV_i$ in $\Vect$, regarded as a
coproduct in $\Vectop$. The product projection represents the coproduct
injection in $\Vectop$, while the one-coordinate map
$\delta_i:V_i\to P_\Pi$ represents its retraction.
\end{enumerate}
For finite families these hosts agree up to the canonical biproduct
isomorphism. For infinite families they generally differ. We use direct sums
because they give the sparse, finitely supported ambient representation. With
this choice, a family arrow whose backward components are
$\alpha_i^*:Y_{f(i)}\to X_i$ is represented on the $i$-th block by
$$
P_Y
\xrightarrow{\pi_{f(i)}}
Y_{f(i)}
\xrightarrow{\alpha_i^*}
X_i
\xrightarrow{\iota_i}
P_X,
$$
namely by $\iota_i\alpha_i^*\pi_{f(i)}$.

\subsection{Modules and additive categories}

Let $R$ be a ring. The categories $R\text{-}\mathbf{Mod}$ and
$(R\text{-}\mathbf{Mod})^{\mathrm{op}}$ are Cauchy complete, have zero
objects, and admit all small products and coproducts. Therefore the same two
completion equivalences hold for modules. More generally, any Cauchy-complete
category with zero morphisms and the required products or coproducts satisfies
\Cref{prop:standard-hosts-app}. In particular, this applies to
Cauchy-complete additive categories whenever the relevant products or
coproducts exist; finite biproducts automatically provide hosts for finite
families.

\subsection{Pointed sets and sparse tagged carriers}

The category $\Setstar$ of pointed sets is Cauchy complete, has a zero object,
and admits all small products and coproducts. Thus
$$
\Kar\bigl(\Copow_\kappa(\Setstar^{\mathrm{op}})\bigr)
\simeq
\Fam_\kappa(\Setstar^{\mathrm{op}}).
$$
A convenient host for a family of pointed sets $(X_i,*)$ is the wedge
$$
P=\bigvee_{i\in I}X_i.
$$
The inclusion of the $i$-th summand has a pointed retraction that is the
identity on that summand and collapses every other summand to the basepoint.
After passage to the opposite category, these maps give the required section
and retraction. For two branches, the wedge has the shape ``zero, a left
payload, or a right payload'', and the projectors collapse the unselected
branch to zero. This explains the carrier shape of a sparse tagged
implementation, although pointed sets themselves do not supply the linear
accumulation or transposed derivatives required for cotangent semantics.

\section{Explicit structure of the completed constant-cotangent model}
\label{app:concrete-bcc}

This appendix gives the bicartesian closed structure explicitly. Expressive
CHAD proves that $\Fam(\Vectop)$ is bicartesian closed and identifies it as
the concrete reverse-mode target~\cite[Section~10.2, Corollary~82, and
Equation~(68)]{LucatelliNunesVakar2023}. Together with
\Cref{cor:vect}, this already gives a bicartesian closed structure on
$\Kar(\Copow(\Vectop))$ by transport across realization. The purpose of
this appendix is to describe the terminal object, products, the initial object,
coproducts, exponentials, evaluation, and currying directly on ambient
constant-cotangent objects and to identify their images under realization.

Put
$$
\mathcal K=\Kar(\Copow(\Vectop)),
\qquad
\mathcal F=\Fam(\Vectop).
$$

\begin{theorem}[Bicartesian closed realization comparison]
\label{thm:explicit-bcc-comparison}
The realization equivalence
$$
\Rlz:\mathcal K\simeq\mathcal F
$$
can be equipped with coherent comparison isomorphisms for the terminal object,
binary products, the initial object, arbitrary coproducts, and exponentials.
Consequently, it is an equivalence of bicartesian closed categories.
\end{theorem}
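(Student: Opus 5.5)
The plan rests on one general observation: an equivalence of categories transports every limit and colimit, and every exponential, up to canonical isomorphism. Since $\mathcal F=\Fam(\Vectop)$ is bicartesian closed by expressive CHAD, and $\Rlz$ is an equivalence by \Cref{cor:vect}, the structure on $\mathcal K$ is already determined up to unique isomorphism. For each universal construction on $\mathcal K$ that we choose explicitly (terminal, binary products, initial, coproducts, exponentials), the comparison isomorphism with the chosen structure of $\mathcal F$ is then the unique mediating map. Coherence of these comparisons follows from their uniqueness. The real work is therefore not existence but showing that the \emph{explicit ambient} formulas realize to the standard family formulas. I would organise the proof by structure.

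\textbf{Terminal object and binary products.} Take $Q(1,0)$ as terminal object and, for products, $((A\times B,V\oplus W),e\times d)$ with projections $e\pi_1$ and $d\pi_2$. By \Cref{prop:kar-ccc-app}, these are the products of $\mathcal K$. Realizing with fixed-point splittings gives $\Fix(r)\times\Fix(s)$ as base and fibres $\im(p_a)\oplus\im(q_b)$. This is exactly the family product in $\Fam(\Vectop)$, since the fibrewise product in $\Vectop$ is the biproduct. I would check directly that the realized projections are the coordinate maps.

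\textbf{Initial object and coproducts.} These come from \Cref{prop:kar-coprod}, using the direct-sum host of \Cref{cor:vect}. The final paragraph of the proof of \Cref{thm:criterion} already shows that realization sends the chosen coproducts to concatenation and the injections to the family injections. So the coproduct comparison is the identity under compatible splittings. The empty case realizes to the empty family.

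\textbf{Exponentials.} Define $F^E=((B,W)^{(A,V)},\varepsilon_{e,d})$ with $\varepsilon_{e,d}(h)=dhe$. Evaluation is $d\,\mathrm{ev}(\varepsilon\times e)$, and currying of a Karoubi arrow $g$ is $\Lambda(g)$, sandwiched as needed. Using the exponential of \Cref{prop:ccc}, I would compute the fixed points of $\varepsilon_{e,d}$. A primal fixed point is a function $A\to B\times\Vect(W,V)$, together with its ambient cotangent data, such that it maps $\Fix(r)$ into $\Fix(s)$ with backpropagators satisfying $\ell=p\ell q$, and is determined by its restriction to fixed points. After splitting, these correspond to dependent functions $\prod_{a\in\Fix r}\Sigma_{b\in\Fix s}\Vect(\im q_b,\im p_a)$. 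The cotangent fibre becomes $\bigoplus_{a}\im q_{f(a)}$, which is the exponential of $\Fam(\Vectop)$ in~\cite[Equation~(68)]{LucatelliNunesVakar2023}. Rather than comparing these formulas by hand, it is cleaner to argue abstractly. Karoubi exponentials are exponentials in $\mathcal K$, and $\Rlz$ preserves products, so it preserves exponentials. The comparison is then the transpose of $\Rlz(\mathrm{ev})$, which is invertible because $\Rlz$ is an equivalence. The explicit fixed-point description is then only an identification of that invertible map, not a proof that it is invertible.

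\textbf{Main obstacle.} I expect the exponential step to be the hard one: matching the fibre of the realized exponential with the dependent-sum formula of expressive CHAD requires bookkeeping of $\Fix(\varepsilon)$ at the level of function packages. I would handle it by the abstract preservation argument above and relegate the explicit identification to a lemma. Coherence of all the comparisons then follows from their uniqueness.
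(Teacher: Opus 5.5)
Your proposal is correct and takes essentially the same route as the paper. Existence of the coherent comparison isomorphisms comes from transporting the bicartesian closed structure of $\mathcal F$ across the equivalence $\Rlz$, and the explicit ambient computations only identify those canonical maps: fixed points of $r\times s$ for products, the blockwise coproduct on the direct-sum host, and the exponential idempotent $h\mapsto dhe$, whose base fixed points match $\mathsf H(\Rlz E,\Rlz F)$ and whose fibre realizes to $\bigoplus_{a\in\Fix(r)}\im(q_{f(a)})$. The one minor imprecision is that, with the fixed-point splittings used to define $\Rlz$, the coproduct comparison is the canonical isomorphism $\im(\jmath_a p^a_i\pi_a)\cong\im(p^a_i)$ rather than a literal identity, which does not affect the argument.
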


\begin{proof}
Existence follows by transporting the standard bicartesian closed structure of
$\mathcal F$ across the equivalence. The remainder of this appendix gives the
chosen ambient structure and the comparison isomorphisms explicitly and
verifies their compatibility with evaluation and currying.
\end{proof}

We now give the explicit ambient account.

\subsection{Objects, arrows, and realization}

An object of $\mathcal K$ is written
$$
E=((A,V),e),
\qquad
e=(r,p),
$$
where $r:A\to A$ and $p_a:V\to V$ satisfy
$$
r^2=r,
\qquad
p_a p_{r(a)}=p_a.
$$
Let
$$
F=((B,W),d),
\qquad
d=(s,q).
$$
A morphism $h:E\to F$ is represented by a function $h_0:A\to B$ and
linear maps $h_a^*:W\to V$ satisfying
\begin{equation}
h_0=s h_0 r,
\qquad
h_a^*=p_a h_{r(a)}^*q_{h_0(r(a))}.
\label{eq:kar-arrow-components}
\end{equation}
These are precisely the component equations for $h=dhe$.

For $a\in A_0:=\Fix(r)$, $p_a$ is an idempotent. Use its fixed-point
splitting, so
$$
V_a^0=\Fix(p_a)=\im(p_a),
\qquad
V_a^0\xrightarrow{\iota_a^E}V\xrightarrow{\rho_a^E}V_a^0,
$$
with
$$
\rho_a^E\iota_a^E=1,
\qquad
\iota_a^E\rho_a^E=p_a.
$$
Then
$$
\Rlz(E)=(V_a^0)_{a\in A_0}.
$$
If $h:E\to F$, the first equation in
\eqref{eq:kar-arrow-components} restricts $h_0$ to a function
$A_0\to B_0:=\Fix(s)$, and the realized backward map is
\begin{equation}
(\Rlz h)_a^*
=
\rho_a^E h_a^*\iota_{h_0(a)}^F:
W_{h_0(a)}^0\longrightarrow V_a^0.
\label{eq:realized-arrow-components}
\end{equation}

\subsection{Arbitrary small coproducts in the completed model}
\label{app:arbitrary-vect-coproducts}

Let $(E_a)_{a\in A}$ be a small family of objects in $\mathcal K$, with
$$
E_a=\bigl((I_a,V_a),(r_a,p^a)\bigr).
$$
Put
$$
I=\coprod_{a\in A}I_a,
\qquad
P=\bigoplus_{a\in A}V_a,
$$
and write $\jmath_a:V_a\to P$ and $\pi_a:P\to V_a$ for the coordinate maps in $\Vect$. Define $c=(r,p)$ on $(I,P)$ by
\begin{equation}
\begin{aligned}
r(\mathsf{in}_a(i))
&=\mathsf{in}_a(r_a(i)),\\
p_{\mathsf{in}_a(i)}
&=\jmath_a p^a_i\pi_a:P\longrightarrow P.
\end{aligned}
\label{eq:arbitrary-vect-coproduct}
\end{equation}
Since the reverse-target composition law gives
$$
(p\circ p)_{\mathsf{in}_a(i)}
=
p_{\mathsf{in}_a(i)}p_{r(\mathsf{in}_a(i))},
$$
we have
$$
\jmath_a p^a_i\pi_a\jmath_a p^a_{r_a(i)}\pi_a
=
\jmath_a p^a_i p^a_{r_a(i)}\pi_a
=
\jmath_a p^a_i\pi_a.
$$
Thus $c$ is idempotent.

The injection $\iota_a:E_a\to((I,P),c)$ is
\begin{equation}
(\iota_a)_0(i)=\mathsf{in}_a(r_a(i)),
\qquad
(\iota_a)^*_i=p^a_i\pi_a:P\longrightarrow V_a.
\label{eq:arbitrary-vect-injection}
\end{equation}
Let $H=((J,W),(s,q))$ and let $h_a:E_a\to H$ be Karoubi arrows. Define
\begin{equation}
[h_a]_{a\in A,0}(\mathsf{in}_a(i))=h_{a,0}(i),
\qquad
[h_a]_{a\in A,\mathsf{in}_a(i)}^*
=\jmath_a h_{a,i}^*:W\longrightarrow P.
\label{eq:arbitrary-vect-copairing}
\end{equation}
The target compatibility of each $h_a$ and its source compatibility with $(r_a,p^a)$ imply the sandwich equation for \eqref{eq:arbitrary-vect-copairing}. Its composite with \eqref{eq:arbitrary-vect-injection} has primal component
$$
h_{a,0}(r_a(i))=h_{a,0}(i)
$$
and backward component
$$
p^a_i\pi_a\jmath_a h^*_{a,r_a(i)}
=
p^a_i h^*_{a,r_a(i)}
=
h^*_{a,i}.
$$
Hence $[h_a]_{a\in A}\iota_a=h_a$. Conversely, if $k:((I,P),c)\to H$ has these composites, the equation $k=kc$ restricts its backward map on the $a$-th block to the image of $\jmath_a p^a_i\pi_a$, while $k\iota_a=h_a$ fixes that restriction as $\jmath_a h^*_{a,i}$. The direct-sum universal property then forces \eqref{eq:arbitrary-vect-copairing}; the primal component is forced branchwise as well. Thus $((I,P),c)$ is the coproduct of the $E_a$. For $A=\varnothing$, this gives $((\varnothing,0),1)$.

As expected, realization sends this object to the concatenated family
$$
\bigl(\im(p^a_i)\bigr)_{a\in A,\ i\in\Fix(r_a)},
$$
because the image of $\jmath_a p^a_i\pi_a$ is canonically isomorphic to $\im(p^a_i)$. Under these isomorphisms the displayed injections and copairing become the standard family coproduct maps.

\subsection{Initial object and binary coproducts}
The bicartesian closed structure of $\mathcal K$ is the structure used by the
program semantics. The initial object of $\mathcal K$ is
$$
0_{\mathcal K}=((\varnothing,0),1).
$$
There is a unique ambient arrow from it to every object, so it is also
initial in the Karoubi envelope.

Let
$$
E=((A,V),(r,p)),
\qquad
F=((B,W),(s,q)).
$$
Their coproduct is
\begin{equation}
	E+F
	=
	\left(
	(A+B,V\oplus W),c_{E,F}
	\right),
	\label{eq:kar-coproduct-concrete}
\end{equation}
where
$$
\begin{aligned}
	(c_{E,F})_0(\mathsf{inl}(a))&=\mathsf{inl}(r(a)),
	&
	(c_{E,F})^*_{\mathsf{inl}(a)}(v,w)&=(p_a(v),0),\\
	(c_{E,F})_0(\mathsf{inr}(b))&=\mathsf{inr}(s(b)),
	&
	(c_{E,F})^*_{\mathsf{inr}(b)}(v,w)&=(0,q_b(w)).
\end{aligned}
$$
The equations for $e$ and $d$ show branchwise that $c_{E,F}^2=c_{E,F}$.

The injections are
$$
\iota_E:E\longrightarrow E+F,
\qquad
\iota_F:F\longrightarrow E+F,
$$
with
$$
(\iota_E)_0(a)=\mathsf{inl}(r(a)),
\qquad
(\iota_E)_a^*(v,w)=p_a(v),
$$
$$
(\iota_F)_0(b)=\mathsf{inr}(s(b)),
\qquad
(\iota_F)_b^*(v,w)=q_b(w).
$$
These are precisely the ambient injections sandwiched by the source and
coproduct idempotents.

The realization equivalence sends this coproduct to the branchwise family
coproduct because
$$
\Fix((c_{E,F})_0)=\Fix(r)+\Fix(s),
$$
and the projector images on the left and right branches are respectively
$\im(p_a)$ and $\im(q_b)$. The realized injections and copairing are the
standard family ones.

\subsection{Terminal object and binary products}

Let
$$
E=((A,V),(r,p)),
\qquad
F=((B,W),(s,q))
$$
be objects of $\mathcal K$.
The terminal object is
$$
1_{\mathcal K}=((1,0),1).
$$
There is a unique ambient arrow from every object to $(1,0)$, and it
satisfies every sandwich equation because its backward component has domain
$0$.

The product is
\begin{equation}
E\times F
=
\left(
(A\times B,V\oplus W),
(r\times s,p\oplus q)
\right),
\label{eq:kar-product}
\end{equation}
where
$$
(r\times s)(a,b)=(r(a),s(b)),
$$
$$
(p\oplus q)_{(a,b)}(v,w)=(p_a(v),q_b(w)).
$$
Its projections are
$$
\pi_E:E\times F\longrightarrow E,
\qquad
\pi_F:E\times F\longrightarrow F,
$$
with primal components
$$
(\pi_E)_0(a,b)=r(a),
\qquad
(\pi_F)_0(a,b)=s(b),
$$
and backward components
$$
(\pi_E)^*_{(a,b)}(v)=(p_a(v),0),
\qquad
(\pi_F)^*_{(a,b)}(w)=(0,q_b(w)).
$$
They satisfy the sandwich equations directly.

Let
$$
G=((C,U),(t,\ell))
$$
and let $h:G\to E$ and $k:G\to F$. Define
$$
\langle h,k\rangle_0(c)=(h_0(c),k_0(c))
$$
and
\begin{equation}
\langle h,k\rangle_c^*(v,w)
=h_c^*(v)+k_c^*(w).
\label{eq:kar-product-pairing}
\end{equation}
The sandwich equations for $h$ and $k$ imply
$$
\langle h,k\rangle
=(e\times d)\langle h,k\rangle(t,\ell),
$$
so this is a Karoubi arrow. Moreover,
$$
\pi_E\langle h,k\rangle=h,
\qquad
\pi_F\langle h,k\rangle=k,
$$
because target compatibility gives $h_c^*p_{h_0(c)}=h_c^*$ and
$k_c^*q_{k_0(c)}=k_c^*$.

For uniqueness, suppose $m:G\to E\times F$ has these two composites. The
projection equations force
$$
m_0(c)=(h_0(c),k_0(c)).
$$
Writing $m_c^*:V\oplus W\to U$, the same equations determine its restrictions
to the two biproduct summands:
$$
m_c^*(v,0)=h_c^*(v),
\qquad
m_c^*(0,w)=k_c^*(w).
$$
Linearity therefore forces \eqref{eq:kar-product-pairing}. This proves the
product universal property.

As expected, realization preserves this product through the canonical isomorphisms
$$
\Fix(r\times s)=\Fix(r)\times\Fix(s),
$$
$$
\im(p_a\oplus q_b)
\cong
\im(p_a)\oplus\im(q_b).
$$
Under these isomorphisms the realized projections and pairing are the
standard family projections and pairing.

\subsection{The standard dependent-family structure}

We first record the structure to which the ambient formulas will be compared.
We recall the structure of $\mathcal F$, see~\cite[Sections~9.2 and~10.2]{LucatelliNunesVakar2023}.
Let
$$
X=(V_a)_{a\in A},
\qquad
Y=(W_b)_{b\in B}
$$
be objects of $\mathcal F$.
The terminal object and binary product are
$$
1_{\mathcal F}=(0)_{*\in1},
\qquad
X\times Y=(V_a\oplus W_b)_{(a,b)\in A\times B}.
$$
The initial object is the empty family, and the binary coproduct is
branchwise:
$$
0_{\mathcal F}=()_{c\in\varnothing},
\qquad
X+Y=(Z_c)_{c\in A+B},
$$
where
$$
Z_{\mathsf{inl}(a)}=V_a,
\qquad
Z_{\mathsf{inr}(b)}=W_b.
$$

For the exponential, define
$$
\mathsf H(X,Y)=
\left\{
(f,\phi)\ \middle|\
\begin{array}{l}
	f:A\to B,\\[-1mm]
	\phi_a:W_{f(a)}\to V_a\text{ linear for all }a\in A
\end{array}
\right\}.
$$
For $(f,\phi)\in\mathsf H(X,Y)$, put
$$
K_{(f,\phi)}=\bigoplus_{a\in A}W_{f(a)}.
$$
Then
\begin{equation}
	Y^X=
	\bigl(K_{(f,\phi)}\bigr)_{(f,\phi)\in\mathsf H(X,Y)}.
	\label{eq:fam-exponential}
\end{equation}
The evaluation arrow has primal component
$$
\mathsf{ev}_0((f,\phi),a)=f(a)
$$
and backward component
\begin{equation}
	\mathsf{ev}_{((f,\phi),a)}^*(w)
	=
	\bigl(\jmath_a(w),\phi_a(w)\bigr):
	W_{f(a)}\longrightarrow K_{(f,\phi)}\oplus V_a,
	\label{eq:fam-evaluation}
\end{equation}
where $\jmath_a$ is the $a$-th direct-sum injection.

For completeness, let $Z=(U_c)_{c\in C}$ and let
$k:Z\times X\to Y$. Write its backward component as
$$
k_{c,a}^*=\langle k^U_{c,a},k^V_{c,a}\rangle:
W_{k_0(c,a)}\longrightarrow U_c\oplus V_a.
$$
Its transpose sends $c$ to
$$
\left(
a\longmapsto k_0(c,a),
(k^V_{c,a})_{a\in A}
\right)\in\mathsf H(X,Y),
$$
and its backward component
$$
\bigoplus_{a\in A}W_{k_0(c,a)}\longrightarrow U_c
$$
has restriction $k^U_{c,a}$ to the $a$-th summand. Evaluation recovers
$k_0$, $k^U$, and $k^V$. Hence these operations are mutually inverse and
natural, proving the exponential adjunction in $\mathcal F$.

\subsection{The ambient exponential and its Karoubi idempotent}

The ambient category $\Copow(\Vectop)$ is cartesian closed; the following
formulas are the concrete constant-cotangent instance of basic higher-order
CHAD~\cite{Vakar2021,VakarSmeding2022}. Its exponential from $(A,V)$ to
$(B,W)$ has primal set
$$
H=
\left\{
(f,\phi)\ \middle|\
f:A\to B,
\ \phi_a:W\to V\text{ linear for all }a\in A
\right\}
$$
and cotangent space
$$
Z=\bigoplus_{a\in A}W.
$$
Let $\jmath_a:W\to Z$ be the direct-sum injections. Ambient evaluation has
primal component
$$
\mathsf{ev}_0((f,\phi),a)=f(a)
$$
and backward component
\begin{equation}
\mathsf{ev}_{((f,\phi),a)}^*(w)
=
(\jmath_a(w),\phi_a(w)):
W\longrightarrow Z\oplus V.
\label{eq:ambient-evaluation}
\end{equation}
The ordinary currying operation is obtained by decomposing a backward map
into its $Z$- and $V$-components and using the direct-sum universal property.

The standard fact that the Karoubi envelope of a cartesian-closed category is
cartesian closed belongs to the classical Karoubi/semifunctor theory~\cite{Hoofman1993,HoofmanMoerdijk1995}; see also the general Cauchy-completion
background in~\cite{BorceuxDejean1986}. Here we need its concrete form.
Define
$$
\varepsilon_{E,F}:(H,Z)\longrightarrow(H,Z)
$$
by
\begin{equation}
(\varepsilon_{E,F})_0(f,\phi)
=
\left(
 s f r,
 \bigl(p_a\phi_{r(a)}q_{f(r(a))}\bigr)_{a\in A}
\right)
\label{eq:exponential-idempotent-base}
\end{equation}
and by the unique backward map satisfying
\begin{equation}
(\varepsilon_{E,F})^*_{(f,\phi)}\jmath_a
=
\jmath_{r(a)}q_{f(r(a))}
\qquad(a\in A).
\label{eq:exponential-idempotent-back}
\end{equation}
This is the transpose of
$$
(H,Z)\times(A,V)
\xrightarrow{1\times e}
(H,Z)\times(A,V)
\xrightarrow{\mathsf{ev}}
(B,W)
\xrightarrow{d}
(B,W).
$$
Equivalently, it is the internal operation
$$
h\longmapsto dhe.
$$
It is therefore idempotent. Directly, the base calculation uses
$r^2=r$, $s^2=s$, $p_ap_{r(a)}=p_a$, and $q_bq_{s(b)}=q_b$; the backward
calculation follows by checking every direct-sum injection in
\eqref{eq:exponential-idempotent-back}.

The exponential in $\mathcal K$ is
\begin{equation}
F^E=((H,Z),\varepsilon_{E,F}).
\label{eq:kar-exponential}
\end{equation}
Its evaluation is the restricted ambient evaluation
$$
\mathsf{ev}_{E,F}
=
d\,\mathsf{ev}\,(\varepsilon_{E,F}\times e).
$$
At an ambient point $((f,\phi),a)$, its primal component is
$$
s(f(r(a)))
$$
and its backward component is
\begin{equation}
w\longmapsto
\left(
 \jmath_{r(a)}q_{f(r(a))}(w),
 p_a\phi_{r(a)}q_{f(r(a))}(w)
\right).
\label{eq:kar-evaluation-explicit}
\end{equation}

Let
$$
G=((C,U),t)
$$
and let $k:G\times E\to F$ be a Karoubi arrow. Write its ambient backward
map as
$$
k_{c,a}^*=\langle k^U_{c,a},k^V_{c,a}\rangle.
$$
Its ambient transpose has primal component
$$
c\longmapsto
\left(
 a\longmapsto k_0(c,a),
 (k^V_{c,a})_{a\in A}
\right)
$$
and backward component $Z\to U$ whose restriction to the $a$-th summand is
$k^U_{c,a}$. The ambient exponential adjunction transports the sandwich
equation exactly as follows:
$$
k=d\,k\,(t\times e)
\quad\Longleftrightarrow\quad
\Lambda(k)=\varepsilon_{E,F}\,\Lambda(k)\,t.
$$
Hence $\Lambda(k)$ is a Karoubi arrow $G\to F^E$. Conversely, restricted
evaluation uncarries every Karoubi arrow $G\to F^E$. Since these are the
ordinary mutually inverse currying operations, they are natural and satisfy
the triangle equations. This proves the exponential universal property.

\subsection{Explicit comparison of the exponentials}

Put
$$
A_0=\Fix(r),
\qquad
B_0=\Fix(s),
\qquad
V_a^0=\im(p_a),
\qquad
W_b^0=\im(q_b).
$$
The base set of $\Rlz(F)^{\Rlz(E)}$ consists of pairs
$$
(g,\bar\phi),
\qquad
g:A_0\to B_0,
\qquad
\bar\phi_a:W_{g(a)}^0\to V_a^0.
$$
There is a bijection
\begin{equation}
\Theta:
\Fix((\varepsilon_{E,F})_0)
\longrightarrow
\mathsf H(\Rlz(E),\Rlz(F))
\label{eq:exponential-base-comparison}
\end{equation}
given by
$$
\Theta(f,\phi)
=
\left(
 f|_{A_0},
 (\rho_a^E\phi_a\iota_{f(a)}^F)_{a\in A_0}
\right).
$$
The inverse bijection sends $(g,\bar\phi)$ to
$(\widetilde g,\widetilde\phi)$, where
$$
\widetilde g(a)=g(r(a)),
$$
$$
\widetilde\phi_a
=
p_a\iota_{r(a)}^E
\bar\phi_{r(a)}
\rho_{g(r(a))}^F.
$$
The fixed-point equations in
\eqref{eq:exponential-idempotent-base} show directly that the two assignments
are inverse.

Fix $(f,\phi)\in\Fix((\varepsilon_{E,F})_0)$ and put
$$
K_{(f,\phi)}^0
=
\bigoplus_{a\in A_0}W_{f(a)}^0.
$$
This is the fibre of the family exponential at $\Theta(f,\phi)$. Define
$$
K_{(f,\phi)}^0
\xrightarrow{\iota_{(f,\phi)}^{\Rightarrow}}
Z
\xrightarrow{\rho_{(f,\phi)}^{\Rightarrow}}
K_{(f,\phi)}^0
$$
by
$$
\iota_{(f,\phi)}^{\Rightarrow}\jmath_a^0
=
\jmath_a\iota_{f(a)}^F
\qquad(a\in A_0),
$$
$$
\rho_{(f,\phi)}^{\Rightarrow}\jmath_a
=
\jmath_{r(a)}^0\rho_{f(r(a))}^F
\qquad(a\in A).
$$
Then
$$
\rho_{(f,\phi)}^{\Rightarrow}
\iota_{(f,\phi)}^{\Rightarrow}=1,
$$
and
$$
\iota_{(f,\phi)}^{\Rightarrow}
\rho_{(f,\phi)}^{\Rightarrow}
=(\varepsilon_{E,F})^*_{(f,\phi)}.
$$
The second equation follows by evaluating both maps on each $\jmath_a$ and
using \eqref{eq:exponential-idempotent-back}. Thus realization of the
ambient exponential fibre gives precisely the family exponential fibre.

Under the base bijection and these splittings, realization of
\eqref{eq:kar-evaluation-explicit} is the family evaluation
\eqref{eq:fam-evaluation}. Indeed, for fixed $(f,\phi)$, fixed $a\in A_0$,
and $w\in W_{f(a)}^0$, the realized backward map is
$$
w\longmapsto
\left(
 \jmath_a^0(w),
 \rho_a^E\phi_a\iota_{f(a)}^F(w)
\right).
$$
The first component is the cotangent contribution to the function object and
the second is the realized backward map at $a$. Since realization identifies
evaluation, it also identifies currying by uniqueness of transposes.

\section{Finite coproducts in the completed reverse target}
\label{app:target-closure}

We now pass from the constant-family theorem to the ordinary reverse target of
basic CHAD\@. The cartesian-closed part of the argument is already available:
basic CHAD proves that the total reverse category is cartesian closed, and
Karoubi completion preserves cartesian closure. The only new task is therefore
to construct the finite coproducts required by variants.

\subsection{Inherited cartesian closure}

Let
$$
\TotTar=\Sigma_{\CTar}\LTar^{\mathrm{op}}
$$
be the ordinary reverse target. We assume that $(\CTar,\LTar)$ is a
categorical model of the basic CHAD target language in the sense of
\cite[Definition~5.5]{VakarSmeding2022}. In particular, $\LTar$ is locally
indexed, its fibres have chosen finite biproducts preserved by reindexing, and
it supports cartesian linear-function objects. By
\cite[Theorem~6.2]{VakarSmeding2022}, $\TotTar$ is cartesian closed. For
variants, we assume only that $\CTar$ has finite coproducts.

\begin{proposition}[cartesian closure under Karoubi completion]
\label{prop:kar-ccc-app}
If $\mathcal A$ is cartesian closed, then $\Kar(\mathcal A)$ is
cartesian closed. For Karoubi objects $(A,e)$ and $(B,d)$, the exponential may
be taken to be
$$
(B,d)^{(A,e)}=(B^A,\varepsilon_{e,d}),
$$
where
$$
\varepsilon_{e,d}
=
\Lambda\bigl(d\circ\mathrm{ev}\circ(1_{B^A}\times e)\bigr),
\qquad
\varepsilon_{e,d}(h)=dhe.
$$
The terminal object and products are obtained by applying the corresponding
idempotents componentwise.
\end{proposition}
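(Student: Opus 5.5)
The proof works directly in $\mathcal A$, using only the cartesian closed structure and the Karoubi sandwich equations. I treat the terminal object and products first, and then exponentials.

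\emph{Terminal object and products.} Take $1_{\Kar}=(1,1_1)$. For any $(C,t)$, the unique arrow $!:C\to1$ satisfies $!=1\circ{!}\circ t$, since $!\,t$ is again an arrow into $1$, so it is the unique Karoubi arrow. For products, put $(A,e)\times(B,d)=(A\times B,e\times d)$, with projections $e\pi_1$ and $d\pi_2$. Both are Karoubi arrows because $e\pi_1(e\times d)=e\,e\,\pi_1=e\pi_1$. Given Karoubi arrows $f:(C,t)\to(A,e)$ and $g:(C,t)\to(B,d)$, the pairing $\langle f,g\rangle$ satisfies $(e\times d)\langle f,g\rangle t=\langle efe\ldots\rangle$; more precisely it equals $\langle eft,dgt\rangle=\langle f,g\rangle$. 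For uniqueness, if $m=(e\times d)m$ has the right projections, then $\pi_1m=e\pi_1m=f$, and similarly for $\pi_2$. Hence $m=\langle f,g\rangle$.

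\emph{Exponentials.} Define $\varepsilon_{e,d}=\Lambda(d\circ\mathrm{ev}\circ(1\times e))$. Naturality of $\Lambda$ gives the identity $\Lambda(k)\circ h=\Lambda(k\circ(h\times1))$, and this underlies the informal formula $\varepsilon(h)=dhe$. Using it, $\varepsilon\varepsilon=\Lambda(d\,\mathrm{ev}(\varepsilon\times e))$. Since $\mathrm{ev}(\varepsilon\times1)=d\,\mathrm{ev}(1\times e)$, this becomes $\Lambda(dd\,\mathrm{ev}(1\times ee))=\varepsilon$. So $\varepsilon$ is idempotent.

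Set evaluation to be $\mathrm{ev}_K=d\,\mathrm{ev}(\varepsilon\times e)$, which simplifies to $d\,\mathrm{ev}(1\times e)$ by the same computation. It is a Karoubi arrow $(B^A\times A,\varepsilon\times e)\to(B,d)$.

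For a Karoubi arrow $k:(C,t)\times(A,e)\to(B,d)$, i.e.\ $k=dk(t\times e)$, I will show that $\Lambda(k)$ is a Karoubi arrow $(C,t)\to(B^A,\varepsilon)$. This follows from the computation $\varepsilon\Lambda(k)t=\Lambda(d\,\mathrm{ev}(\Lambda(k)t\times e))=\Lambda(dk(t\times e))=\Lambda(k)$. Then $\mathrm{ev}_K(\Lambda(k)\times e)=d\,k(1\times e)=k$, where the sandwich equation gives $k=k(t\times e)=k(1\times e)$ and hence $d\,k(1\times e)=k$.

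Uniqueness: suppose $m:(C,t)\to(B^A,\varepsilon)$ satisfies $\mathrm{ev}_K(m\times e)=k$. Then $m=\varepsilon m=\Lambda(d\,\mathrm{ev}(m\times e))$. Because $e$ is idempotent, $d\,\mathrm{ev}(m\times e)=\mathrm{ev}_K(m\times e)=k$, so $m=\Lambda(k)$.

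The only delicate step is this bookkeeping with $\Lambda$-naturality in the exponential part, specifically rewriting $\mathrm{ev}(\varepsilon\times1)$ and absorbing the idempotents. The rest is routine verification of the sandwich equations.
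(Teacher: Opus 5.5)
Your proof is correct. It is the standard universal-arrow verification of the exponential, using $\Lambda(k)h=\Lambda(k(h\times 1))$ and $\mathrm{ev}(\Lambda(k)\times 1)=k$ to absorb the idempotents. The paper does not spell this out for the proposition; it cites Hoofman and Hoofman--Moerdijk. Its concrete sketch for the completed constant-cotangent model rests on the same idea: currying transports $k=dk(t\times e)$ to $\Lambda(k)=\varepsilon_{e,d}\Lambda(k)t$, and restricted evaluation uncurries.

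The only blemish is the stray fragment $\langle efe\ldots\rangle$ in the product step, which you should delete.
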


This is the standard cartesian-closed structure on the Karoubi completion; see
\cite{Hoofman1993,HoofmanMoerdijk1995}. Applying it to $\TotTar$ shows that
$\Kar(\TotTar)$ is already cartesian closed. We now construct its finite
coproducts from the structure already present in the basic CHAD target.

\subsection{Common hosts and branchwise linear maps}

\begin{lemma}[Common hosts in the reverse fibres]
\label{lem:target-common-hosts}
Let $X\in\CTar$, and let $(V_i)_{i\in I}$ be a finite family of linear
objects. Then $(V_i)_{i\in I}$ admits a common retract host in
$\LTar(X)^{\mathrm{op}}$. One may take
$$
P=\bigoplus_{i\in I}V_i.
$$
These common-host presentations are preserved by reindexing.
\end{lemma}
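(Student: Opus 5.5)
The plan is to reduce the statement to \Cref{lem:common-hosts-products-coproducts}, applied fibrewise, and then to check stability under reindexing separately. By the standing variant-target assumption, each fibre $\LTar(X)$ has chosen finite biproducts, so it has a zero object. Biproducts are self-dual, so the same holds in $\LTar(X)^{\mathrm{op}}$. Hence $\LTar(X)^{\mathrm{op}}$ has zero morphisms and finite products. Applied to a finite family, the product clause of \Cref{lem:common-hosts-products-coproducts} gives $P=\bigoplus_{i\in I}V_i$ as a common retract host.

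I will also record the maps explicitly, as in \Cref{cor:vect}. The section $V_i\to P$ in $\LTar(X)^{\mathrm{op}}$ is represented by the biproduct projection $\pi_i:P\to V_i$ in $\LTar(X)$. The retraction is represented by the biproduct injection $\jmath_i:V_i\to P$. The retract equation in the opposite category reads $\pi_i\jmath_i=1_{V_i}$ in $\LTar(X)$, which is one of the biproduct identities. For $I=\varnothing$, the empty biproduct is the zero object, and there are no equations to check.

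For reindexing, let $g:Y\to X$ be a map in $\CTar$. By the hypotheses of \cite[Definition~5.5]{VakarSmeding2022}, $\LTar$ is locally indexed and its chosen biproducts are preserved by reindexing. So $g^*$ sends $\bigoplus_iV_i$, $\pi_i$ and $\jmath_i$ to the chosen biproduct $\bigoplus_i g^*V_i$ and its structure maps, up to the canonical comparison. Since $g^*$ is a functor, it preserves the equations $\pi_i\jmath_i=1$. Therefore the reindexed data is again the chosen host presentation of the reindexed family.

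The main obstacle is making this preservation claim precise. If reindexing preserves biproducts only up to canonical isomorphism, the statement should be read as saying that the host presentations correspond under these isomorphisms. I would check that conjugating by the comparison isomorphism carries projections to projections and injections to injections. This follows because both sides are biproduct cones over the same family.
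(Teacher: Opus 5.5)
Your proposal is correct and is essentially the paper's argument: the chosen biproduct $\bigoplus_i V_i$ is the host, the section and retraction in $\LTar(X)^{\mathrm{op}}$ are the reversed projection $\pi_i$ and injection $\jmath_i$ with retract equation $\pi_i\jmath_i=1_{V_i}$, and stability follows from local indexedness together with preservation of the chosen biproducts by reindexing. The detour through \Cref{lem:common-hosts-products-coproducts} is harmless but redundant, and your up-to-isomorphism caveat is not needed, because local indexedness makes reindexing the identity on objects.
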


\begin{proof}
In $\LTar(X)$, let $\iota_i:V_i\to P$ and $\pi_i:P\to V_i$ be the
chosen biproduct injection and projection. Since
$\pi_i\iota_i=1_{V_i}$, the reversed arrows
$$
V_i\xrightarrow{\pi_i^{\mathrm{op}}}P
\xrightarrow{\iota_i^{\mathrm{op}}}V_i
$$
form a retract presentation in $\LTar(X)^{\mathrm{op}}$. Local indexedness
makes the linear objects independent of the base, and preservation of the
chosen biproducts by reindexing makes these presentations stable under
substitution.
\end{proof}

\begin{lemma}[Branchwise decomposition of linear maps]
\label{lem:target-linear-case}
For $A,B\in\CTar$ and linear objects $U,P$, restriction along the coproduct
injections induces a natural bijection
$$
\LTar(A+B)(U,P)
\cong
\LTar(A)(U,P)\times\LTar(B)(U,P).
$$
Consequently, a linear map over $A+B$ is uniquely determined by its two branch
restrictions. The induced branchwise case operation is stable under reindexing
and satisfies the expected $\beta$- and $\eta$-equations.
\end{lemma}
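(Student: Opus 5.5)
The plan is to reduce the statement to the representability of linear maps by the cartesian linear-function objects of the basic CHAD target, and then use the universal property of the coproduct $A+B$ in $\CTar$.

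By local indexedness, $\LTar(X)(U,P)$ is naturally in bijection with $\LTar(1)(U,P)$-valued data varying over $X$. More precisely, the cartesian linear-function object $U\multimap P$ of \cite[Definition~5.5]{VakarSmeding2022} gives a natural bijection
$$
\LTar(X)(U,P)\cong\CTar(X,U\multimap P),
$$
natural in $X$ with respect to reindexing. First I will recall this bijection and check its naturality in $X$: reindexing along $g:X'\to X$ on the left corresponds to precomposition with $g$ on the right. This is the defining property of the linear-function object together with stability of the chosen structure under substitution.

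Next I will apply the coproduct universal property. The bijection $\CTar(A+B,U\multimap P)\cong\CTar(A,U\multimap P)\times\CTar(B,U\multimap P)$ is given by precomposition with $\mathsf{inl}$ and $\mathsf{inr}$. Composing it with the representability isomorphisms at $A+B$, $A$, and $B$, and using naturality to see that the composite is exactly reindexing along the injections, gives the claimed bijection. In particular it is given by restriction.

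The case operation is then defined as the inverse bijection. Its $\beta$-equations say that restricting a case to a branch returns that branch; its $\eta$-equation says that the case of the two restrictions is the original map. Both hold because the operation is a two-sided inverse. Stability under reindexing along $g:X\to Y$ follows by precomposition with $g$ on the representing side, together with the fact that $(g\times 1)$-style reindexing of copairings in $\CTar$ is again a copairing.

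Reindexing here means along maps into contexts of the form $\Gamma\times(A+B)$. If distributivity of $\CTar$ is not available, I will state stability only for reindexing along maps $h+k$ and along precomposition in $\CTar$. The latter is what the representing side gives for free.

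Naturality in $U$ and $P$ follows from naturality of $U\multimap P$ in both variables.

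The main obstacle is the step that identifies the representability isomorphism as natural under reindexing. Precisely, the composite must be literally restriction along the injections, rather than merely some bijection. I would try to extract this first from the definition of a model in \cite{VakarSmeding2022}, where $-\multimap-$ is required to represent the fibred hom-functor as an indexed functor.
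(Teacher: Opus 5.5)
Your proposal is correct and follows essentially the same route as the paper: the representability bijection $\LTar(X)(U,P)\cong\CTar(X,U\multimap P)$, natural in $X$, combined with the coproduct universal property in $\CTar$, with naturality identifying the composite as restriction along $\mathsf{inl},\mathsf{inr}$ and the two inverse equations serving as the $\beta$- and $\eta$-laws. Your extra caution about what ``stable under reindexing'' can mean without distributivity (reindexing along maps $h+k$, i.e.\ precomposition on the representing side) is not a different route: it makes explicit a point the paper's proof leaves unstated.
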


\begin{proof}
The cartesian linear-function object $U\multimap P$ represents linear maps
in the base variable, giving natural bijections
$$
\LTar(X)(U,P)\cong\CTar(X,U\multimap P).
$$
Hence
$$
\begin{aligned}
\LTar(A+B)(U,P)
&\cong\CTar(A+B,U\multimap P)\\
&\cong\CTar(A,U\multimap P)
\times\CTar(B,U\multimap P)\\
&\cong\LTar(A)(U,P)\times\LTar(B)(U,P).
\end{aligned}
$$
Naturality identifies this bijection with restriction along
$\mathsf{inl}$ and $\mathsf{inr}$. Its inverse is branchwise linear case
analysis, and the two inverse equations give the $\beta$- and
$\eta$-equations.
\end{proof}

Thus the two pieces of structure that might otherwise appear as additional
variant assumptions are already available: the fibrewise biproducts give the
common hosts, while the cartesian linear-function objects derive branchwise
linear case analysis from finite coproducts in $\CTar$.

\subsection{Finite coproducts after completion}

Let
$$
E=((A,V),e),
\qquad
F=((B,W),d),
$$
where $e=(r,p)$ and $d=(s,q)$ are idempotent endomorphisms in $\TotTar$.
Put $P=V\oplus W$. Define an endomorphism $c_{E,F}$ of
$(A+B,P)$ by
$$
\begin{aligned}
c_{E,F,1}(\mathsf{inl}(a))
&=\mathsf{inl}(r(a)),
&
(c_{E,F})^*_{\mathsf{inl}(a)}(v,w)
&=(p_a(v),0),\\
c_{E,F,1}(\mathsf{inr}(b))
&=\mathsf{inr}(s(b)),
&
(c_{E,F})^*_{\mathsf{inr}(b)}(v,w)
&=(0,q_b(w)).
\end{aligned}
$$
The backward component exists uniquely by
\Cref{lem:target-linear-case}. Since
$$
r^2=r,
\qquad
p_a\circ p_{r(a)}=p_a,
\qquad
s^2=s,
\qquad
q_b\circ q_{s(b)}=q_b,
$$
the same branchwise calculation gives
$$
c_{E,F}^2=c_{E,F}.
$$
Set
$$
E+F:=((A+B,V\oplus W),c_{E,F}).
$$

Let
$$
j_E:(A,V)\longrightarrow(A+B,V\oplus W),
\qquad
j_F:(B,W)\longrightarrow(A+B,V\oplus W)
$$
be the ambient injections. Their primal components are
$\mathsf{inl}$ and $\mathsf{inr}$, and their backward components are the
corresponding biproduct projections. Define the completed injections by
$$
\iota_E=c_{E,F}\circ j_E\circ e,
\qquad
\iota_F=c_{E,F}\circ j_F\circ d.
$$
Explicitly,
$$
(\iota_E)_1=\mathsf{inl}\circ r,
\qquad
(\iota_E)^*_a(v,w)=p_a(v),
$$
and
$$
(\iota_F)_1=\mathsf{inr}\circ s,
\qquad
(\iota_F)^*_b(v,w)=q_b(w).
$$

\begin{proposition}[Finite coproducts in the completed reverse target]
\label{prop:target-coproducts-app}
If $\CTar$ has finite coproducts, then $\Kar(\TotTar)$ has finite
coproducts. The binary coproduct of $E$ and $F$ is the object $E+F$ above, and
the initial object is
$$
((0,0),1_{(0,0)}),
$$
where the first $0$ is initial in $\CTar$ and the second is the zero linear
object.
\end{proposition}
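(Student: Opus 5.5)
The plan is to verify the universal properties directly in the internal language of $\CTar$ and $\LTar$, using \Cref{lem:target-linear-case} to reduce every equation between linear maps over $A+B$ to its two branch restrictions, and the primal coproduct property of $\CTar$ for the primal components. Idempotence of $c_{E,F}$ has already been recorded. I would next check that $\iota_E$ and $\iota_F$ are Karoubi arrows. This is immediate, because each is defined as a sandwich $c_{E,F}\circ j_E\circ e$, and idempotence of $c_{E,F}$ and $e$ gives $c_{E,F}\iota_E e=\iota_E$.

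For the binary copairing, take Karoubi arrows $h=(h_1,h^*):E\to G$ and $k=(k_1,k^*):F\to G$ with $G=((C,U),t)$. I would define $[h,k]$ with primal component the $\CTar$-copairing $[h_1,k_1]:A+B\to C$. The backward component over $A+B$ would be given, via \Cref{lem:target-linear-case}, by the branch restrictions $u\mapsto \jmath_V h^*_a(u)$ at $\mathsf{inl}(a)$ and $u\mapsto\jmath_W k^*_b(u)$ at $\mathsf{inr}(b)$, where $\jmath$ are the biproduct injections into $V\oplus W$. Reindexing stability of the biproduct structure (\Cref{lem:target-common-hosts}) ensures these restrictions are well typed after substitution along $\mathsf{inl},\mathsf{inr}$.

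The sandwich equation $[h,k]=t[h,k]c_{E,F}$ I would check branchwise. On the $\mathsf{inl}$ branch it reduces to $h=the$: the primal part uses $h_1=h_1r$, and the backward part uses $\pi_V\jmath_V=1$, $\pi_W\jmath_V=0$ together with $h^*_a=p_a h^*_{r(a)}\ell_{h_1(a)}$. The $\mathsf{inr}$ branch is symmetric. The computation $[h,k]\iota_E=h$ follows similarly from $h=he$.

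Uniqueness is the main point. Given $m$ with $m\iota_E=h$ and $m\iota_F=k$, I would use $m=mc_{E,F}$ to write $m=(mc_{E,F}j_E)$ on the left branch, and note that $mc_{E,F}j_E=m j_E e=m\iota_E$ after unfolding (which needs checking). The primal branch restrictions of $m$ are therefore $h_1$ and $k_1$ up to $m_1=m_1c_1$. The linear branch restrictions are determined on each biproduct summand: the $W$-summand is killed on the left branch by $c_{E,F}$, and the $V$-summand equals $h^*$. The $\eta$-parts of the primal coproduct and of \Cref{lem:target-linear-case} then give $m=[h,k]$. I expect the delicate step to be making this "restriction then summand" argument precise with the reverse-mode composition order, namely $(mc)^*_x=c^*_x m^*_{c_1(x)}$. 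Here one must use $m_1(\mathsf{inl}(r(a)))=m_1(\mathsf{inl}(a))$, derived from $m=mc$, to align the indices.

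For the initial object, a morphism $((0,0),1)\to G$ has primal part the unique $0\to C$. Its backward component lies in $\LTar(0)(U,0)$, which is a singleton because $0$ is a zero object. Existence and uniqueness are then clear, and the sandwich equation holds since both sides are maps into the terminal hom. Nullary and binary coproducts together give finite coproducts.
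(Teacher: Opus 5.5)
Your proposal is correct and follows essentially the same route as the paper. Both define the copairing by primal copairing and the branch restrictions $(h^*_a(u),0)$ and $(0,k^*_b(u))$, check the sandwich and injection equations branchwise, obtain uniqueness from $m=m\,c_{E,F}$ (which kills the $W$-component and aligns indices via $r$) followed by the primal and linear $\eta$-equations of \Cref{lem:target-linear-case}, and obtain the initial object from the singleton hom $\TotTar((0,0),(C,U))$. The step you flagged does hold: $c_{E,F}\circ j_E$ and $j_E\circ e$ both have primal part $\mathsf{inl}\circ r$ and backward part $p_a\pi_V$, so $m j_E=m\,c_{E,F}j_E=m\iota_E=h$.
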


\begin{proof}
Let $H=((C,U),h)$, and let $a:E\to H$ and $b:F\to H$ be Karoubi arrows.
Define an ambient arrow
$$
[a,b]:(A+B,V\oplus W)\longrightarrow(C,U)
$$
by ordinary primal copairing and by the branch restrictions
$$
[a,b]^*_{\mathsf{inl}(x)}(u)=(a_x^*(u),0),
\qquad
[a,b]^*_{\mathsf{inr}(y)}(u)=(0,b_y^*(u)).
$$
The latter determine a unique linear map by
\Cref{lem:target-linear-case}.

The equations $a=h\circ a\circ e$ and $b=h\circ b\circ d$ imply,
branchwise,
$$
[a,b]=h\circ[a,b]\circ c_{E,F},
$$
so $[a,b]$ is a Karoubi arrow $E+F\to H$. The source-compatibility equations
$a=a\circ e$ and $b=b\circ d$ give
$$
[a,b]\circ\iota_E=a,
\qquad
[a,b]\circ\iota_F=b.
$$
For example, the left backward component reduces to
$$
p_x\circ a^*_{r(x)}=a_x^*.
$$

Conversely, let $k:E+F\to H$ satisfy
$k\circ\iota_E=a$ and $k\circ\iota_F=b$. Since
$k=k\circ c_{E,F}$, on the left branch one has
$$
k_1(\mathsf{inl}(x))
=
k_1(\mathsf{inl}(r(x)))
$$
and
$$
k^*_{\mathsf{inl}(x)}(u)
=
\bigl(p_x\pi_Vk^*_{\mathsf{inl}(r(x))}(u),0\bigr),
$$
where $\pi_V:V\oplus W\to V$ is the first biproduct projection. The
equation $k\circ\iota_E=a$ then forces
$$
k_1(\mathsf{inl}(x))=a_1(x),
\qquad
k^*_{\mathsf{inl}(x)}(u)=(a_x^*(u),0).
$$
The right branch is symmetric. The primal coproduct $\eta$-equation and the
linear $\eta$-equation of \Cref{lem:target-linear-case} yield
$k=[a,b]$. This proves the binary coproduct universal property.

Finally, from $(0,0)$ to any ambient object there is a unique arrow: its primal
component is the unique map from the initial object, and its backward component
is the unique linear map into the zero object. Composing this arrow with any
target idempotent leaves it unchanged, so it is also the unique arrow from
$((0,0),1_{(0,0)})$ in the Karoubi completion.
\end{proof}

\begin{corollary}[Bicartesian closure of the completed reverse target]
\label{cor:target-bcc-app}
Let $(\CTar,\LTar)$ be a categorical model of the basic CHAD target language.
If $\CTar$ has finite coproducts, then
$$
\Kar\left(\Sigma_{\CTar}\LTar^{\mathrm{op}}\right)
$$
is bicartesian closed.
\end{corollary}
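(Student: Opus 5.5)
The plan is to assemble the corollary from two ingredients that are already in place: cartesian closure inherited through the Karoubi envelope, and finite coproducts built in the completed target.

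First, I invoke the basic CHAD hypothesis. Since $(\CTar,\LTar)$ is a categorical model of the target language in the sense of \cite[Definition~5.5]{VakarSmeding2022}, the total category $\Sigma_{\CTar}\LTar^{\mathrm{op}}=\TotTar$ is cartesian closed by \cite[Theorem~6.2]{VakarSmeding2022}.

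Second, I apply \Cref{prop:kar-ccc-app} to $\mathcal A=\TotTar$. This gives a terminal object $Q_{\TotTar}(1)$, products $E\times F=((A\times B,V\oplus W),e\times d)$, and exponentials $((B,W)^{(A,V)},\varepsilon_{e,d})$. Among these, only the exponential needs a real check. One must show that currying $k\mapsto\Lambda(k)$ restricts to a bijection between Karoubi arrows $G\times E\to F$ and Karoubi arrows $G\to F^E$. For this I use naturality of $\Lambda$ to get
$$
\Lambda(d\,k\,(t\times e))=\varepsilon_{e,d}\,\Lambda(k)\,t .
$$
The sandwich equations on the two sides then correspond exactly. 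Evaluation is taken as $d\,\mathrm{ev}\,(\varepsilon_{e,d}\times e)$. Because the argument is purely equational in a cartesian closed category, it applies verbatim to $\TotTar$ and needs no concrete model.

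Third, I take finite coproducts from \Cref{prop:target-coproducts-app}. It applies because $\CTar$ has finite coproducts by hypothesis. The common hosts come from \Cref{lem:target-common-hosts}, and branchwise linear case analysis comes from \Cref{lem:target-linear-case}. Together these yield the initial object $((0,0),1)$ and the binary coproduct $((A+B,V\oplus W),c_{E,F})$.

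Combining the terminal object, binary products, exponentials, initial object and binary coproducts gives bicartesian closure; no distributivity needs to be checked separately, since it follows from closure.

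I expect the main obstacle to be hidden in \Cref{lem:target-linear-case}, on which the coproduct universal property depends. That lemma needs $\CTar(A+B,U\multimap P)$ to decompose, and it needs the representation $\LTar(X)(U,P)\cong\CTar(X,U\multimap P)$ to be natural in $X$ under reindexing. Naturality is what identifies the bijection with restriction along $\mathsf{inl}$ and $\mathsf{inr}$. I would check this against the cartesian linear-function clause of the cited definition. Everything else in the corollary is bookkeeping.
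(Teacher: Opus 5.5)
Your proposal is correct and follows the paper's proof: cartesian closure of $\TotTar$ comes from \cite[Theorem~6.2]{VakarSmeding2022}, it transfers to the Karoubi envelope by \Cref{prop:kar-ccc-app}, and finite coproducts come from \Cref{prop:target-coproducts-app}. Your check that $\Lambda(d\,k\,(t\times e))=\varepsilon_{e,d}\,\Lambda(k)\,t$ correctly unfolds the Karoubi exponential that the paper cites as standard, and the point you flag (naturality in $X$ of $\LTar(X)(U,P)\cong\CTar(X,U\multimap P)$) is exactly what \Cref{lem:target-linear-case} relies on.
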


\begin{proof}
Cartesian closure follows from \cite[Theorem~6.2]{VakarSmeding2022} and
\Cref{prop:kar-ccc-app}; finite coproducts follow from
\Cref{prop:target-coproducts-app}.
\end{proof}

\section{The simply typed reverse-mode transformation}
\label{app:macro}

Bicartesian closure determines the transformation abstractly, following the semantics-driven methodology of basic and expressive CHAD~\cite{Vakar2021,VakarSmeding2022,LucatelliNunesVakar2023}. This section unfolds the new data-type clauses into executable nondependent target terms and records the categorical form of the higher-order clauses. The displayed term rules below are for data-type contexts and results, where the primal component of every generated idempotent is the identity. For arbitrary higher-order contexts, structural maps must instead be restricted by the full source and target idempotents. The invariant throughout is the sandwich equation: each generated map restricts to the valid projector image at its source and lands in the valid image at its target. The presentation also separates this semantic derivation from subsequent implementation optimisations such as sparse accumulation and closure conversion~\cite{SmedingVakar2024}.

\subsection{Types, ambient cotangents, and projectors}

For clarity, we spell out the clauses that are new for variants or modified by the generated projectors. The data types of the present source fragment are
$$
\tau,\sigma::=
\mathsf{real}^n
\mid 1
\mid \tau\times\sigma
\mid 0
\mid \tau+\sigma.
$$
For each type, the translation produces a primal type $\widehat D_1\tau$, an ambient cotangent type $\widehat D_2\tau$, and a primal-indexed projector
$$
p_\tau:
\widehat D_1\tau;
\widehat D_2\tau
\multimap
\widehat D_2\tau.
$$
At ground types and the unit type,
$$
\widehat D_1(\mathsf{real}^n)=\mathsf{real}^n,
\qquad
\widehat D_2(\mathsf{real}^n)=\mathsf{real}^n,
\qquad
p_{\mathsf{real}^n}(x,v)=v,
$$
$$
\widehat D_1(1)=1,
\qquad
\widehat D_2(1)=0,
\qquad
p_1(*,0)=0.
$$
For products,
$$
\begin{aligned}
\widehat D_1(\tau\times\sigma)
&=\widehat D_1\tau\times\widehat D_1\sigma,\\
\widehat D_2(\tau\times\sigma)
&=\widehat D_2\tau\oplus\widehat D_2\sigma,\\
p_{\tau\times\sigma}((x,y),(v,w))
&=(p_\tau(x,v),p_\sigma(y,w)).
\end{aligned}
$$
For sums,
$$
\begin{aligned}
\widehat D_1(\tau+\sigma)
&=\widehat D_1\tau+\widehat D_1\sigma,\\
\widehat D_2(\tau+\sigma)
&=\widehat D_2\tau\oplus\widehat D_2\sigma,\\
p_{\tau+\sigma}(\mathsf{inl}(x),(v,w))
&=(p_\tau(x,v),0),\\
p_{\tau+\sigma}(\mathsf{inr}(y),(v,w))
&=(0,p_\sigma(y,w)).
\end{aligned}
$$
The empty type is translated by $\widehat D_1(0)=0$ and
$\widehat D_2(0)=0$; it has no primal values, so its projector has no branch to
define. For a context
$$
\Gamma=x_1:\tau_1,\ldots,x_n:\tau_n,
$$
put $\widehat D_2\Gamma=\bigoplus_i\widehat D_2\tau_i$ and let
$p_\Gamma(\gamma,-)$ act componentwise at the primal environment $\gamma$.
We write $\onehot_i:\widehat D_2\tau_i\multimap\widehat D_2\Gamma$ for the
$i$-th biproduct injection.

\subsection{Clauses for data-type constructors}

For a judgement $\Gamma\vdash t:\tau$ in the data-type fragment, the transformation produces
$$
\widehat D_1\Gamma\vdash
\overleftarrow D_\Gamma(t):
\widehat D_1\tau
\times
(\widehat D_2\tau\multimap\widehat D_2\Gamma).
$$
Write $0$ and $+$ for cotangent zero and addition. The standard CHAD clauses are sandwiched by the generated projectors. We display the projector-modified structural clauses and then give the variant clauses in full; primitive clauses not affected by variants are inherited from basic CHAD~\cite[Section~7]{VakarSmeding2022}. The displayed variable clause is valid because all types in $\Gamma$ are data types. In an arbitrary higher-order context, let $e_\Gamma$ denote the product idempotent on the translated context. The structural arrow is the full Karoubi restriction
$$
\Dst(x_i)=e_{\tau_i}\,\pi_i\,e_\Gamma.
$$
Thus, for a one-variable context, the erased translation of the identity is
$e_\tau$, not the ambient identity. Its data-type specialization is
$$
\overleftarrow D_\Gamma(x_i)
=
\left\langle
x_i,
\lambda\bar x.\onehot_i(p_{\tau_i}(x_i,\bar x))
\right\rangle,
$$
$$
\overleftarrow D_\Gamma(\langle\rangle)
=
\langle\langle\rangle,\lambda\bar u.0\rangle,
$$
$$
\begin{aligned}
\overleftarrow D_\Gamma(\langle t,s\rangle)
={}&
\mathbf{let}\ \langle x,x^*\rangle=\overleftarrow D_\Gamma(t)\ \mathbf{in}\\
&\mathbf{let}\ \langle y,y^*\rangle=\overleftarrow D_\Gamma(s)\ \mathbf{in}\\
&\left\langle
\langle x,y\rangle,
\lambda(\bar x,\bar y).
 x^*(p_\tau(x,\bar x))+y^*(p_\sigma(y,\bar y))
\right\rangle,
\end{aligned}
$$
$$
\begin{aligned}
\overleftarrow D_\Gamma(\pi_1t)
={}&
\mathbf{let}\ \langle(x,y),h\rangle=\overleftarrow D_\Gamma(t)\ \mathbf{in}\\
&\left\langle x,\lambda\bar x.h(p_\tau(x,\bar x),0)\right\rangle,
\end{aligned}
$$
with a symmetric clause for $\pi_2$.

For a primitive operation, use its chosen codifferential and compose on both sides with the generated projectors. For a let-binding, suppose
$$
\overleftarrow D_\Gamma(t)=\langle x,x^*\rangle,
\qquad
\overleftarrow D_{\Gamma,x:\tau}(s)=\langle y,y^*\rangle.
$$
Then
$$
\begin{aligned}
&\overleftarrow D_\Gamma(\mathbf{let}\ x=t\ \mathbf{in}\ s)\\
={}&
\mathbf{let}\ \langle x,x^*\rangle=\overleftarrow D_\Gamma(t)\ \mathbf{in}\\
&\mathbf{let}\ \langle y,y^*\rangle=\overleftarrow D_{\Gamma,x:\tau}(s)\ \mathbf{in}\\
&\left\langle
 y,
 \lambda\bar y.
 \mathbf{let}\ \langle\bar\gamma,\bar x\rangle
 =y^*(p_\sigma(y,\bar y))\ \mathbf{in}
 p_\Gamma(\gamma,\bar\gamma)+x^*(p_\tau(x,\bar x))
\right\rangle.
\end{aligned}
$$

For injections,
$$
\begin{aligned}
\overleftarrow D_\Gamma(\mathsf{inl}\,t)
={}&
\mathbf{let}\ \langle x,x^*\rangle=\overleftarrow D_\Gamma(t)\ \mathbf{in}\\
&\left\langle
\mathsf{inl}(x),
\lambda(v,w).x^*(p_\tau(x,v))
\right\rangle,
\end{aligned}
$$
$$
\begin{aligned}
\overleftarrow D_\Gamma(\mathsf{inr}\,t)
={}&
\mathbf{let}\ \langle y,y^*\rangle=\overleftarrow D_\Gamma(t)\ \mathbf{in}\\
&\left\langle
\mathsf{inr}(y),
\lambda(v,w).y^*(p_\sigma(y,w))
\right\rangle.
\end{aligned}
$$
The ignored ambient coordinate is discarded.

For
$$
c=
\mathsf{case}\ t\ \mathsf{of}\
\{\mathsf{inl}(x)\mapsto s
\mid
\mathsf{inr}(y)\mapsto r\},
$$
the complete clause is
$$
\begin{aligned}
\overleftarrow D_\Gamma(c)
={}&\mathbf{let}\ \langle z,z^*\rangle=
\overleftarrow D_\Gamma(t)\ \mathbf{in}\\
&\mathsf{case}\ z\ \mathsf{of}\\
&\quad\mathsf{inl}(x)\mapsto
\mathbf{let}\ \langle u,u^*\rangle=
\overleftarrow D_{\Gamma,x:\tau}(s)\ \mathbf{in}\\
&\qquad\left\langle u,\lambda\bar u.
\mathbf{let}\ \langle\bar\gamma,\bar x\rangle=
 u^*(p_\rho(u,\bar u))\ \mathbf{in}\right.\\[-.2em]
&\hspace{9.7em}\left.
 p_\Gamma(\gamma,\bar\gamma)+
 z^*\bigl(p_{\tau+\sigma}(\mathsf{inl}(x),(p_\tau(x,\bar x),0))\bigr)
\right\rangle,\\
&\quad\mathsf{inr}(y)\mapsto
\mathbf{let}\ \langle v,v^*\rangle=
\overleftarrow D_{\Gamma,y:\sigma}(r)\ \mathbf{in}\\
&\qquad\left\langle v,\lambda\bar v.
\mathbf{let}\ \langle\bar\gamma,\bar y\rangle=
 v^*(p_\rho(v,\bar v))\ \mathbf{in}\right.\\[-.2em]
&\hspace{9.7em}\left.
 p_\Gamma(\gamma,\bar\gamma)+
 z^*\bigl(p_{\tau+\sigma}(\mathsf{inr}(y),(0,p_\sigma(y,\bar y)))\bigr)
\right\rangle.
\end{aligned}
$$
The scrutinee is evaluated once, only the selected branch is differentiated, and the branch cotangent is inserted into the ambient biproduct before the scrutinee backpropagator is called. Idempotence simplifies the nested projector to the expected branch insertion. The abort term is the unique map induced by the initial completed object.

\subsection{Higher-order types}

Function types use the cartesian closed structure of $\Kar(\TotTar)$. If the domain and codomain are represented by full idempotents $e_\tau$ and $e_\sigma$, then
$$
e_{\tau\Rightarrow\sigma}=\varepsilon_{e_\tau,e_\sigma},
$$
where the exponential idempotent acts internally by
$$
h\longmapsto e_\sigma h e_\tau.
$$
The ambient primal and cotangent types are those of basic higher-order CHAD, but the primal component of $e_{\tau\Rightarrow\sigma}$ generally projects an ambient function--backpropagator package and is not the identity. The concrete base and backward components in the constant-cotangent model are given by \eqref{eq:exponential-idempotent-base} and \eqref{eq:exponential-idempotent-back}. For arbitrary contexts, every structural arrow is similarly sandwiched by the full context and result idempotents. In particular, abstraction is restricted currying and application is restricted evaluation:
$$
\operatorname{ev}_{e_\tau,e_\sigma}
=
e_\sigma\,\operatorname{ev}(\varepsilon_{e_\tau,e_\sigma}\times e_\tau),
\qquad
\Lambda_{e_\tau,e_\sigma}(f)=\varepsilon_{e_\tau,e_\sigma}\Lambda(f).
$$
These are the operations described in \Cref{app:target-closure}. Hence the higher-order clauses do not reuse the data-type variable rule: they are forced by cartesian closedness and use the full idempotents, as required by \Cref{prop:semi}.

\section{Correctness by comparison with dependent CHAD}
\label{app:correctness}

This appendix supplies the concrete comparison used in
\Cref{prop:standard-target-compatibility,thm:comparison} and expands the proof
of \Cref{thm:ambient}.

\subsection{Concrete realization and preservation}
\label{app:concrete-comparison}

Recall the concrete-model assumption of \Cref{sec:correctness}. The primal
functor $\mathcal S:\CTar\to\Set$ and indexed linear functors
$\mathcal L_A:\LTar(A)\to\Vect^{\mathcal S A}$ induce
$$
\llbracket-\rrbracket_\Tar:\TotTar\longrightarrow\Copow(\Vectop),
$$
and therefore
$$
\llbracket-\rrbracket_{\mathrm{cpl}}:
\Kar(\TotTar)
\xrightarrow{\Kar(\llbracket-\rrbracket_\Tar)}
\Kar(\Copow(\Vectop))
\xrightarrow{\Rlz}
\Fam(\Vectop).
$$
The terminal object, products, and exponentials of $\Kar(\TotTar)$ are the
standard Karoubi restrictions of the corresponding structure of $\TotTar$.
They are preserved by the assumed basic-CHAD comparisons and by the explicit
realization comparisons of \Cref{thm:explicit-bcc-comparison}. It remains to
check the completed coproduct.

Let
$$
E=((A,V),(r,p)),
\qquad
F=((B,W),(s,q)).
$$
In the following calculation, $r,s$ denote the functions obtained by applying
$\mathcal S$, while $p_a,q_b$ denote the linear maps obtained by applying the
corresponding indexed interpretations. The coproduct idempotent $c_{E,F}$ has
fixed primal points
$$
\mathsf{inl}(a)\quad(a\in\Fix(r)),
\qquad
\mathsf{inr}(b)\quad(b\in\Fix(s)).
$$
At a left fixed point its backward component on
$\mathcal V V\oplus\mathcal V W$ is
$(v,w)\mapsto(p_a(v),0)$, whose image is canonically isomorphic to
$\im(p_a)$. At a right fixed point it is
$(v,w)\mapsto(0,q_b(w))$, whose image is canonically isomorphic to
$\im(q_b)$. Consequently,
$$
\llbracket E+F\rrbracket_{\mathrm{cpl}}
\cong
\llbracket E\rrbracket_{\mathrm{cpl}}
+
\llbracket F\rrbracket_{\mathrm{cpl}}.
$$
Under these image isomorphisms the completed injections and copairing become
the family injections and branchwise copairing; the empty coproduct becomes the
empty family. This proves \Cref{prop:standard-target-compatibility}.

Let $\Ddep:\Synp\to\Fam(\Vectop)$ be dependent CHAD. Both
$\llbracket-\rrbracket_{\mathrm{cpl}}\DKar$ and $\Ddep$ are bicartesian closed
interpretations, and by assumption they agree on every ground type and
primitive operation. They satisfy the same declared equations, so the universal
property of $\Synp$ gives the coherent natural isomorphism of
\Cref{thm:comparison}.

Every data type in the present grammar is isomorphic, by repeated distributivity,
to a finite sum of finite products of Euclidean ground types. The dependent
correctness theorem applies to that normal form, and naturality with respect to
the structural isomorphisms transports the result back to the original type.
Thus the realized backward map is the mathematical transposed derivative on the
genuine cotangent fibres.

\subsection{The exact ambient formula}

Let $t:\tau\to\sigma$ have data-type domain and codomain, write
$f=\llbracket t\rrbracket$, and use the fixed-point splitting maps determined
by the comparison with dependent CHAD
$$
T^*_\tau(x)
\xrightarrow{\iota_{\tau,x}}
\widehat T^*_\tau
\xrightarrow{\rho_{\tau,x}}
T^*_\tau(x),
\qquad
\rho_{\tau,x}\iota_{\tau,x}=1,
\quad
\iota_{\tau,x}\rho_{\tau,x}=p_{\tau,x}.
$$
Let $h=\widehat R_t(x)$ be the ambient backward map. The comparison with
dependent CHAD gives
$$
\rho_{\tau,x}h\iota_{\sigma,f(x)}
=
Df(x)^{\mathsf T},
$$
while the Karoubi equation gives
$$
h=p_{\tau,x}hp_{\sigma,f(x)}.
$$
Substituting $p=\iota\rho$ yields
$$
\begin{aligned}
h
&=
\iota_{\tau,x}\rho_{\tau,x}
 h
\iota_{\sigma,f(x)}\rho_{\sigma,f(x)}\\
&=
\iota_{\tau,x}
Df(x)^{\mathsf T}
\rho_{\sigma,f(x)}.
\end{aligned}
$$
The same calculation proves uniqueness among ambient maps satisfying the
restriction and sandwich equations.

For identities, $h=\iota\rho=p$. For a composite, the adjacent retraction and
section cancel because $\rho\iota=1$. The semifunctor laws are therefore the
transport of ordinary derivative functoriality through the retract
presentations.

\subsection{A branch-sensitive calculation}

Let
$$
t:\mathbb R+(\mathbb R\times\mathbb R)\to\mathbb R,
\qquad
t(\mathsf{inl}(x))=x^2,
\qquad
t(\mathsf{inr}(y,z))=yz.
$$
Choose the ambient cotangent $\mathbb R^3$. The left fibre is split by
$$
\iota_L(a)=(a,0,0),
\qquad
\rho_L(a,b,c)=a,
$$
and the right fibre by
$$
\iota_R(b,c)=(0,b,c),
\qquad
\rho_R(a,b,c)=(b,c).
$$
The associated projectors are $p_L=\iota_L\rho_L$ and
$p_R=\iota_R\rho_R$.

On the left branch,
$$
D(x\mapsto x^2)(x)^{\mathsf T}(\bar r)=2x\bar r,
$$
so
$$
\widehat R_t(\mathsf{inl}(x))(\bar r)
=
\iota_L(2x\bar r)
=(2x\bar r,0,0).
$$
On the right branch,
$$
D((y,z)\mapsto yz)(y,z)^{\mathsf T}(\bar r)
=(z\bar r,y\bar r),
$$
and hence
$$
\widehat R_t(\mathsf{inr}(y,z))(\bar r)
=
\iota_R(z\bar r,y\bar r)
=(0,z\bar r,y\bar r).
$$
Thus the ambient result is the mathematical derivative on the active fibre,
followed by its inclusion into the common cotangent host.

\section{The forward-mode dual}
\label{app:forward}

The paper is phrased in reverse mode because backpropagation is the practically dominant application and because the cotangent interpretation makes the implementation problem especially familiar. Semantically, however, the construction is not specific to reverse mode. From the outset, CHAD presents forward and reverse differentiation through fibrewise-dual total categories~\cite{Vakar2021,VakarSmeding2022,LucatelliNunesVakar2023}. The present construction therefore applies to forward mode by choosing the Grothendieck construction before, rather than after, taking fibrewise opposites.

Let
$$
\TotTarFwd=\Sigma_{\CTar}\LTar
$$
be the ordinary forward target. Its concrete constant-tangent semantics is $\Copow(\Vect)$, while the dependent semantics is $\Fam(\Vect)$. Since $\Vect$ is Cauchy complete and direct sums provide common retract hosts, \Cref{thm:criterion} gives
$$
\Kar\bigl(\Copow_\kappa(\Vect)\bigr)
\simeq
\Fam_\kappa(\Vect).
$$

\begin{corollary}[Forward-mode completion]
\label{thm:forward-app}
Under the corresponding forward-target hypotheses of basic CHAD~\cite[Theorem~6.1]{VakarSmeding2022}, and assuming that $\CTar$ has finite coproducts, assigning the chosen primitive derivatives extends, relative to the selected structures, to a bicartesian closed functor
$$
\DFKar:\Synp\longrightarrow\Kar(\TotTarFwd).
$$
Forgetting the distinguished idempotents yields a semifunctor
$$
\DFst:\Synp\semiarrow\TotTarFwd
$$
that preserves composition strictly and sends identities to the corresponding full tangent idempotents. At data types, their primal components are identities and their linear components are the tangent projectors.

If $t:\tau\to\sigma$ has data-type domain and codomain, denotes a map $f$
differentiable at $x$, and
$$
T_\tau(x)
\xrightarrow{\iota_{\tau,x}}
\widehat T_\tau
\xrightarrow{\rho_{\tau,x}}
T_\tau(x)
$$
and
$$
T_\sigma(f(x))
\xrightarrow{\iota_{\sigma,f(x)}}
\widehat T_\sigma
\xrightarrow{\rho_{\sigma,f(x)}}
T_\sigma(f(x))
$$
are the fixed-point splittings determined by the comparison with dependent
forward CHAD, then the complete ambient forward derivative is
$$
\widehat F_t(x)
=
\iota_{\sigma,f(x)}
\circ Df(x)
\circ\rho_{\tau,x}.
$$
It is the unique linear map $h:\widehat T_\tau\to\widehat T_\sigma$ satisfying
$$
h=p_{\sigma,f(x)}hp_{\tau,x},
\qquad
\rho_{\sigma,f(x)}h\iota_{\tau,x}=Df(x).
$$
\end{corollary}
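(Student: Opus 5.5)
The plan is to dualize the reverse-mode development step by step, replacing $\LTar^{\mathrm{op}}$ by $\LTar$ throughout. I begin with the completed forward target. By \cite[Theorem~6.1]{VakarSmeding2022}, $\TotTarFwd$ is cartesian closed, and \Cref{prop:kar-ccc-app} then makes $\Kar(\TotTarFwd)$ cartesian closed with exponential idempotent $h\mapsto dhe$.

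For finite coproducts I repeat \Cref{prop:target-coproducts-app} in the forward fibres. By \Cref{lem:target-common-hosts}, read in $\LTar(X)$ itself, the biproduct $V\oplus W$ is a common retract host with section $\iota_i$ and retraction $\pi_i$. The branchwise decomposition of \Cref{lem:target-linear-case} is stated for $\LTar$ directly, so it applies without change. The coproduct idempotent then acts by $(v,w)\mapsto(p_a v,0)$ on the left branch, and the same way on the right. The injections now have linear component $v\mapsto(p_a v,0)$ and the copairing restricts branchwise. Existence and uniqueness follow by the same sandwich and $\eta$ argument. The initial object is again $((0,0),1)$. Hence $\Kar(\TotTarFwd)$ is bicartesian closed, and freeness of $\Synp$ gives $\DFKar$ exactly as in \Cref{thm:semantic-kar}.

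Erasing the distinguished idempotents gives $\DFst$, by \Cref{prop:semi-kar-app}. It preserves composition strictly and sends identities to the generated idempotents, as in \Cref{prop:semi}. At data types the primal part of each idempotent is $1$ and the linear part is the tangent projector. I prove this by induction on the data-type grammar using the explicit product and coproduct idempotents.

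For correctness I establish the forward analogue of \Cref{thm:comparison}. I define $\llbracket-\rrbracket_{\mathrm{cpl}}$ into $\Fam(\Vect)$ using \Cref{thm:criterion} for $\mathcal C=\Vect$. The comparison of \Cref{app:concrete-bcc} dualizes, and the coproduct realization check is the same fixed-point calculation. Freeness then identifies $\llbracket-\rrbracket_{\mathrm{cpl}}\DFKar$ with dependent forward CHAD, whose data-type correctness (via distributivity normal forms) gives
$$
\rho_{\sigma,f(x)}\,h\,\iota_{\tau,x}=Df(x).
$$
The sandwich equation $h=p_{\sigma,f(x)}hp_{\tau,x}$ holds because $h$ is a Karoubi arrow; the variance is now covariant. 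Substituting $p=\iota\rho$ gives
$$
h=\iota_{\sigma,f(x)}\,\rho_{\sigma,f(x)}\,h\,\iota_{\tau,x}\,\rho_{\tau,x}=\iota_{\sigma,f(x)}\,Df(x)\,\rho_{\tau,x},
$$
which proves both the formula and uniqueness, as in \Cref{lem:unique-ambient-extension}.

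The main obstacle is the forward realization comparison, especially at exponentials, where \Cref{app:concrete-bcc} was written only for $\Vectop$. I would sidestep this by transporting the bicartesian closed structure of $\Fam(\Vect)$ across the equivalence, as in the proof of \Cref{thm:explicit-bcc-comparison}. Only the coproduct comparison then needs to be checked explicitly, and that check is branchwise and direct.
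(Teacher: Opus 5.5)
Your proposal is correct and follows essentially the same route as the paper. You drop the fibrewise opposite and obtain $\DFKar$ from bicartesian closure of $\Kar(\TotTarFwd)$ together with freeness of $\Synp$, erase the idempotents to get $\DFst$, and then combine the comparison with dependent forward CHAD, the sandwich equation and $p=\iota\rho$ to force $\widehat F_t(x)=\iota_{\sigma,f(x)}Df(x)\rho_{\tau,x}$ uniquely. You spell out more than the paper, which says only that these steps carry over unchanged: you check the forward coproduct idempotent and injections, and you obtain the exponential part of the realization comparison by transport because $\Rlz$ is an equivalence.
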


\begin{proof}
Replace the fibrewise-opposite construction $\Sigma_{\CTar}\LTar^{\mathrm{op}}$ by $\Sigma_{\CTar}\LTar$. At the object level, the completion theorem and the projector presentation of variants use only primal-indexed idempotents together with finite coproducts and biproducts, and hence carry over unchanged. At the arrow level, removing the fibrewise opposite reverses the variance of the linear maps and exchanges the operational roles of the biproduct injections and projections. Under the corresponding forward-target closure assumptions, the universal property of $\Synp$ gives the stated bicartesian closed functor; forgetting the distinguished idempotents gives the semifunctor as in \Cref{prop:semi}.

Under the equivalence
$$
\Kar\bigl(\Copow(\Vect)\bigr)
\simeq
\Fam(\Vect),
$$
the restriction of the ambient tangent map to the splitting objects is the ordinary derivative:
$$
\rho_{\sigma,f(x)}h\iota_{\tau,x}=Df(x).
$$
Karoubi compatibility gives $h=p_{\sigma,f(x)}hp_{\tau,x}$. Substituting
$$
p_{\sigma,f(x)}
=
\iota_{\sigma,f(x)}\rho_{\sigma,f(x)},
\qquad
p_{\tau,x}
=
\iota_{\tau,x}\rho_{\tau,x}
$$
yields
$$
h
=
\iota_{\sigma,f(x)}
Df(x)
\rho_{\tau,x}.
$$
The same calculation proves uniqueness. Thus the forward and reverse transformations are fibrewise-dual presentations of the same completion principle; the transpose appears only in the reverse concrete semantics.
\end{proof}
\newpage

\bibliographystyle{splncs04}
\bibliography{references}

\end{document}